\newif\ifacmstyle
\newif\ifappendix
\newif\ifbibtex
\crefname{section}{Sect.~}{Sect.~}
\Crefname{section}{Sect.~}{Sect.}
\crefname{subsection}{Sect.~}{Sect.~}
\Crefname{subsection}{Sect.~}{Sect.}
\crefname{subsubsection}{Sect.~}{Sect.~}
\Crefname{subsubsection}{Sect.~}{Sect.}
\crefname{theorem}{thm.}{thms.}
\Crefname{theorem}{Thm.}{Thms.}
\crefname{figure}{fig.}{figs.}
\Crefname{figure}{Fig.}{Figs.}
\setlist[itemize]{label=${\bullet}$,leftmargin=1em,topsep=0pt,parsep=0pt,partopsep=0pt,itemsep=0pt}
\setlist[enumerate]{label={\arabic*.},leftmargin=1em,topsep=0pt,parsep=0pt,partopsep=0pt,itemsep=0pt}
\setlist[description]{leftmargin=1em,topsep=0pt,parsep=0pt,partopsep=0pt,itemsep=0pt}
\newtheorem{corollary}{Corollary}
\newtheorem{lemma}{Lemma}
\newtheorem{theorem}{Theorem}
\newtheorem{conjecture}{Conjecture}
\theoremstyle{definition}
\newtheorem{definition}{Definition}
\newcommand*\betw{\mathsf{b}}
\newcommand*\tbetw{\widetilde{\betw}}
\newcommand*\domain{\mathcal{D}}
\newcommand*\range{\mathcal{R}}
\newcommand*\SP{\mathsf{S}}
\newcommand*\VC{\mathsf{VC}}
\newcommand*\EVC{\mathsf{E}\VC}
\newcommand*\PD{\mathsf{PD}}
\newcommand*\EPD{\mathsf{E}\PD}
\newcommand*\rade{\mathsf{R}}
\newcommand*\expectation{\mathbb{E}}
\newcommand*\mean{\mathsf{m}}
\newcommand*\Sam{\mathcal{S}}
\newcommand*\family{\mathcal{F}}
\newcommand*\functionw{\mathsf{w}}
\newcommand*\BC{\textsc{bc}\xspace}
\newcommand*\BA{\textsf{BA}\xspace}
\newcommand*\algobase{ABRA}
\newcommand*\staticalgo{\textsf{\algobase-s}\xspace}
\newcommand*\staticalgotopk{\textsf{\algobase-k}\xspace}
\newcommand*\staticalgorel{\textsf{\algobase-r}\xspace}
\newcommand*\dynamicalgo{\textsf{\algobase-d}\xspace}
\newcommand*\TOP{\mathsf{TOP}}
\newcommand{\greek}[1]{{\selectlanguage{polutonikogreek}#1}}
\newcommand{\para}[1]{\noindent{\bf #1.}}
\begin{document}
\title{\algobase: Approximating Betweenness Centrality\\in Static and Dynamic Graphs with
Rademacher Averages\footnote{This work was supported in part by NSF grant
IIS-1247581 and NIH grant R01-CA180776.}}

\author{Matteo
Riondato\footnote{Two Sigma Investments. Part of the work done while
	affiliated to Brown University. \url{matteo@twosigma.com}} \and
	Eli Upfal\footnote{Department of Computer Science, Brown University.
	\url{eli@cs.brown.edu}}
}

\date{\today}

\maketitle

\noindent{\em\greek{ABRAXAS} (ABRAXAS): Gnostic word of mystic meaning}
\ifappendix
\else
\vspace{-20pt}
\fi
\begin{abstract}
	We present \textsf{\algobase}, a suite of algorithms that compute and
	maintain probabilistically-guaranteed, high-quality, approximations of the
	betweenness centrality of all nodes (or edges) on both static and fully
	dynamic graphs. Our algorithms rely on random sampling and their analysis
	leverages on Rademacher averages and pseudodimension, fundamental concepts
	from statistical learning theory.  To our knowledge, this is the first
	application of these concepts to the field of graph analysis. The results of
	our experimental evaluation show that our approach is much faster than
	exact methods, and vastly outperforms, in both speed and number of samples,
	current state-of-the-art algorithms with the same quality guarantees.
\end{abstract}

\section{Introduction}\label{sec:intro}
Centrality measures are fundamental concepts in graph analysis, as they assign
to each node or edge in the network a score that quantifies some notion of
importance of the node/edge in the network~\citep{Newman10}. Betweenness
Centrality (\BC) is a very popular centrality measure that, informally, defines
the importance of a node or edge $z$ in the network as proportional to
the fraction of shortest paths in the network that go through
$z$~\citep{Anthonisse71,Freeman77}.

\citet{Brandes01} presented an algorithm (denoted \BA) that computes the exact
\BC values for all nodes or edges in a graph $G=(V,E)$ in time $O(|V||E|)$ if
the graph is unweighted, and time $O(|V||E|+|V|^2\log|V|)$ if the graph has
positive weights. The cost of \BA is excessive on modern networks with millions
of nodes and tens of millions of edges. Moreover, having the exact \BC values
may often not be needed, given the exploratory nature of the task, and a
high-quality approximation of the values is usually sufficient, provided it
comes with stringent guarantees.

Today's networks are not only large, but also \emph{dynamic}: edges are
added and removed continuously. Keeping the \BC values up-to-date after edge
insertions and removals is a challenging task, and proposed
algorithms~\citep{KourtellisMB15,GreenMB12,KasWCC13,LeeLPCC21} have a worst-case
complexity and memory requirements which is not better than
from-scratch-recomputation using \BA. Maintaining an high-quality approximation
up-to-date is more feasible and more \emph{sensible}: there is little added
value in keeping track of exact \BC values that change continuously.

\paragraph*{Contributions} We focus on developing algorithms for approximating
the \BC of all vertices and edges in static and dynamic graphs. Our
contributions are the following.

\begin{itemize}
	\item We present \textsf{\algobase} (for ``\textsf{A}pproximating
		\textsf{B}etweenness with \textsf{R}ademacher \textsf{A}verages''), the
		first family of algorithms based on \emph{progressive sampling} for
		approximating the \BC of all vertices in static and dynamic graphs,
		where vertex and edge insertions and deletions are allowed. The
		approximations computed by \textsf{\algobase{}} are
		\emph{probabilistically guaranteed} to be within an user-specified
		additive error from their exact values. We also present variants with
		relative (i.e., multiplicative)) error for the top-$k$ vertices with
		highest \BC, and variants that use refined estimators to give better
		approximations 
		with a slightly larger sample size.
	\item Our analysis relies on Rademacher averages~\citep{ShalevSBD14} and
		pseudodimension~\citep{Pollard84}, fundamental concepts from the field
		of statistically learning theory~\citep{Vapnik99}. Exploiting known and
		novel results using these concepts, \textsf{\algobase{}} computes the
		approximations without having to keep track of any global property of
		the graph, in contrast with existing
		algorithms~\citep{RiondatoK15,BergaminiM15,BergaminiMS15}.
		\textsf{\algobase{}} performs only ``real work'' towards
		the computation of the approximations, without having to compute such
		global properties or update them after modifications of the graph. To
		the best of our knowledge, ours is the first application of Rademacher
		averages and pseudodimension to graph analysis problems, and the first
		to use \emph{progressive} random sampling for \BC computation. Using
		pseudodimension new analytical results on the sample complexity of the
		\BC computation task, generalizing previous
		contributions~\citep{RiondatoK15}, and formulating a conjecture on the
		connection between pseudodimension and the distribution of shortest path
		lengths.
	\item The results of our experimental evaluation on real 
		networks show that \textsf{\algobase{}} outperforms, in both speed and
		number of samples, the state-of-the-art methods offering the same
		guarantees~\citep{RiondatoK15}.
\end{itemize}

\ifappendix
\paragraph*{Outline} We discuss related works in~\cref{sec:related}. The formal
definitions of the concepts we use in the work can be found
in~\cref{sec:prelims}. Our algorithms for approximating \BC on static graphs are
presented in~\cref{sec:static}, while the dynamic case is discussed
in~\cref{sec:dynamic}. The results of our extensive experimental evaluation are
presented in~\cref{sec:exper}. We draw conclusions and outline directions for
future work in~\cref{sec:concl}. Additional details can be found in the
Appendices.
\else
Due to space constraints, some details have been deferred to the extended online
version~\citep{RiondatoU16ext}.
\fi

\section{Related Work}\label{sec:related}
The definition of Betweenness Centrality comes from the sociology
literature~\citep{Anthonisse71,Freeman77}, but the study of efficient algorithms
to compute it started only when graphs of substantial size became available to
the analysts, following the emergence of the Web. The \BA algorithm
by~\citet{Brandes01} is currently the asymptotically fastest algorithm for
computing the exact \BC values for all nodes in the network. A number of works
also explored heuristics to improve \BA~\citep{SaryuceSKC13,ErdosIBT15}, but
retained the same worst-case time complexity.

The use of random sampling to approximate the \BC values in static graphs was
proposed independently by~\citet{BaderKMM07} and~\citet{BrandesP07}, and
successive works explored the tradeoff space of sampling-based
algorithms~\citep{RiondatoK15,BergaminiM15,BergaminiMS15,BergaminiM15arXiv}.
We focus here on related works that offer approximation guarantees similar to
ours. For an in-depth discussion of previous contributions approximating \BC on
static graphs, we refer the reader to~\citep[Sect.~2]{RiondatoK15}.

\citet{RiondatoK15} present algorithms that employ the Vapnik-Chervonenkis (VC)
dimension~\citep{Vapnik99} to compute what is currently the tightest upper
bound to the sample size sufficient to obtain guaranteed approximations of the
\BC of all nodes in a static graph. Their algorithms offer the same guarantees
as ours, but they need to compute an upper bound to a characteristic quantity of
the graph (the vertex diameter, namely the maximum number of nodes on any
shortest path) in order to derive the sample size. Thanks to our use of
Rademacher averages in a progressive random sampling setting, we do not need to
compute any characteristic quantity of the graph, and instead use an
efficient-to-evaluate stopping condition to determine when the approximated \BC
values are close to the exact ones. This allows \textsf{\algobase{}} to use
smaller samples and be much faster than the algorithm by~\citet{RiondatoK15}.

A number of works~\citep{KourtellisMB15,GreenMB12,KasWCC13,LeeLPCC21} focused on
computing the \emph{exact} \BC for all nodes in a dynamic graph, taking into
consideration different update models. None of these algorithm is provably
asymptotically faster than a complete computation from scratch using Brandes'
algorithm~\citep{Brandes01} and they all require significant amount of space
(more details about these works can be found in~\citep[Sect.~2]{BergaminiM15}).
In contrast, \citet{BergaminiM15,BergaminiM15arXiv} built on the work
by~\citet{RiondatoK15} to derive an algorithm for maintaining high-quality
approximations of the \BC of all nodes when the graph is dynamic and both
additions and deletions of edges are allowed. Due to the use of the
algorithm by~\citet{RiondatoK15} as a building block, the algorithm must keep
track of the vertex diameter after an update to the graph. Our algorithm for
dynamic graphs, instead, does not need this piece of information, and therefore
can spend more time in computing the approximations, rather than in keeping
track of global properties of the graph. Moreover, our algorithm can handle
directed graphs, which is not the case for the algorithms
by~\citet{BergaminiM15,BergaminiM15arXiv}.

\citet{HayashiAY15} recently proposed a data structure called \emph{Hypergraph
Sketch} to maintain the shortest path DAGs between pairs of nodes
following updates to the graph. Their algorithm uses random sampling and this
novel data structure allows them to maintain a high-quality, probabilistically
guaranteed approximation of the \BC of all nodes in a dynamic graph. Their
guarantees come from an application of the simple uniform deviation bounds
(i.e., the union bound) to determine the sample size, as previously done
by~\citet{BaderKMM07} and~\citet{BrandesP07}. As a result, the resulting sample
size is excessively large, as it depends on the \emph{number of nodes in the
graph}. Our improved analysis using the Rademacher averages allows us to develop
an algorithm that uses the Hypergraph Sketch with a much smaller number of
samples, and is therefore faster. 

\section{Preliminaries}\label{sec:prelims}
We now introduce the formal definitions and basic results that we use throughout
the paper.

\subsection{Graphs and Betweenness Centrality}
Let $G=(V,E)$ be a graph, which can be directed or undirected, and can have
non-negative weights on the edges. For any ordered pair $(u,v)$ of different
nodes $u\neq v$, let $\mathcal{S}_{uv}$ be the set of \emph{Shortest Paths}
(SPs) from $u$ to $v$, and let $\sigma_{uv}=|\mathcal{S}_{uv}|$. Given a path
$p$ between two nodes $u,v\in V$, a node $w\in V$ is \emph{internal to $p$} iff
$w\neq u$, $w\neq u$, and $p$ goes through $w$. We denote as $\sigma_{uv}(w)$
the number of SPs from $u$ to $v$ that $w$ is internal to.

\begin{definition}[\citep{Anthonisse71,Freeman77}]\label{def:betweenness}
	Given a graph $G=(V,E)$, the \emph{Betweenness Centrality (\BC) of a vertex $w\in
	V$} is defined as
	\[
		\betw(w)=\frac{1}{|V|(|V|-1)}\sum_{\substack{(u,v)\in V\times V\\u\neq
		v}}\frac{\sigma_{uv}(w)}{\sigma_{uv}}\enspace.
	\]
\end{definition}
We have $\betw(w)\in[0,1]$, for any $w\in V$. Many variants of \BC have been
proposed in the literature, including one for edges~\citep{Newman10}. All our
results can be extended to these variants, following the reduction
in~\citep[Sect.~6]{RiondatoK15}, but we do not include them here due to space
constraints.

In this work we focus on computing an \emph{$(\varepsilon,\delta)$-approximation}
of the collection $B=\{\betw(w), w\in V\}$.

\begin{definition}\label{def:eapprox}
	Given $\varepsilon,\delta\in(0,1)$, an
	\emph{$(\varepsilon,\delta)$-approximation to $B$} is a collection
	$\tilde{B}=\{\tilde{\betw}(w), w\in V\}$ such that
	\[
			\Pr(\forall w\in v ~:~ |\tilde{\betw}(w) -
			\betw(w)|\leq \varepsilon)\ge 1-\delta\enspace.
	\]
\end{definition}


\subsection{Rademacher Averages}\label{sec:radeprelims}
Rademacher Averages are fundamental concepts to study the rate of convergence of
a set of sample averages to their expectations. They are at the core of
statistical learning theory~\citep{Vapnik99} but their usefulness extends way
beyond the learning framework~\citep{RiondatoU15}. We present here only the
definitions and results that we use in our work and we refer the readers to,
e.g., the book by~\citet{ShalevSBD14} for in-depth presentation and discussion.

While the Rademacher complexity can be defined on an arbitrary measure space, we restrict our discussion here to a sample space
that consists of a finite domain $\domain$ and a uniform distribution over that domain.
Let $\family$ be a family of functions from $\domain$ to $[0,1]$,  and
let $\Sam=\{c_1,\dotsc,c_\ell\}$ be a sample of $\ell$
elements from $\domain$, sampled uniformly and independently at random.
For each $f\in\family$, the \emph{true sample} and the \emph{sample average} of $f$ on a sample
$\Sam$ are
\begin{equation}\label{eq:mean}
	\mean_\domain(f) \frac{1}{|\domain|}\sum_{c\in \domain}^\ell f(c)  ~~\mbox{and}~~             \mean_\Sam(f)=\frac{1}{\ell}\sum_{i=1}^\ell f(c_i).
	\end{equation}
Given $\Sam$, we are interested in bounding the \emph{maximum deviation of
	$\mean_\Sam(f)$ from $\mean_\domain(f)$}, i.e., in the quantity
\begin{equation}\label{eq:supdev}
	\sup_{f\in\family}|\mean_\Sam(f)-\mean_\domain(f)|\enspace.
\end{equation}
For $1\le i\le \ell$, let $\sigma_i$ be a Rademacher r.v., i.e., a
r.v.~that takes value $1$ with probability $1/2$ and $-1$ with
probability $1/2$. The r.v.'s~$\sigma_i$ are independent. Consider the quantity
\begin{equation}\label{eq:rade}
	\rade(\family,\Sam)=\expectation_\sigma\left[\sup_{f\in
	\family}\frac{1}{\ell}\sum_{i=1}^\ell\sigma_if(c_i)\right],
\end{equation}
where the expectation is taken w.r.t.~the Rademacher r.v.'s, i.e.,
conditionally on $\Sam$. The quantity $\rade(\family,\Sam)$ is known as the
\emph{(conditional) Rademacher average of $\family$ on $\Sam$}. The following
is a key result in statistical learning theory, connecting $\rade(\family,\Sam)$
to the maximum deviation~\eqref{eq:supdev}.

\begin{theorem}[Thm.~26.5~\citep{ShalevSBD14}]\label{thm:supdevbound}
	Let $\delta\in(0,1)$ and let $\Sam$ be a collection of $\ell$ elements of
	$\domain$ sampled independently and uniformly at random. Then, with
	probability at least $1-\delta$,
	\begin{equation}\label{eq:supdevbound}
		\sup_{f\in\family}|\mean_\Sam(f)-\mean_\domain(f)|\le
		2\rade(\family,\Sam) + 3\sqrt{\frac{\ln(2/\delta)}{2\ell}}\enspace.
	\end{equation}
\end{theorem}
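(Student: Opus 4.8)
The plan is to prove the bound by combining two applications of McDiarmid's bounded differences inequality with a symmetrization argument, tied together by a union bound. Write $g(\Sam)=\sup_{f\in\family}|\mean_\Sam(f)-\mean_\domain(f)|$ for the quantity we want to control. The strategy has three steps: (i) concentrate $g(\Sam)$ around its expectation $\expectation_\Sam[g(\Sam)]$; (ii) bound $\expectation_\Sam[g(\Sam)]$ by twice the \emph{expected} Rademacher average via symmetrization; and (iii) concentrate the conditional Rademacher average $\rade(\family,\Sam)$ around its own expectation, so that the expected Rademacher average from step (ii) can be replaced by the observable, data-dependent quantity appearing in the statement.

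For steps (i) and (iii) I would first observe that both $g$ and $\rade(\family,\cdot)$, viewed as functions of the $\ell$ independent samples $c_1,\dots,c_\ell$, have \emph{bounded differences} with constant $1/\ell$: since every $f\in\family$ takes values in $[0,1]$, replacing a single $c_i$ changes $\mean_\Sam(f)$, and hence the supremum in $g$, by at most $1/\ell$, and the same holds for the supremum defining $\rade$ (the change survives the expectation over the Rademacher signs). McDiarmid's inequality then gives, each with failure probability $\delta/2$,
\[
  g(\Sam)\le \expectation_\Sam[g(\Sam)]+\sqrt{\tfrac{\ln(2/\delta)}{2\ell}}
  \quad\text{and}\quad
  \expectation_\Sam[\rade(\family,\Sam)]\le \rade(\family,\Sam)+\sqrt{\tfrac{\ln(2/\delta)}{2\ell}}.
\]

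The crux is step (ii), the symmetrization. Introduce a ghost sample $\Sam'=\{c_1',\dots,c_\ell'\}$ drawn independently from the same distribution, so that $\mean_\domain(f)=\expectation_{\Sam'}[\mean_{\Sam'}(f)]$. Writing $\mean_\domain(f)-\mean_\Sam(f)=\expectation_{\Sam'}[\mean_{\Sam'}(f)-\mean_\Sam(f)]$ and pushing the supremum through the convex absolute value and the expectation with Jensen's inequality yields
\[
  \expectation_\Sam[g(\Sam)]\le
  \expectation_{\Sam,\Sam'}\Big[\sup_{f\in\family}\Big|\tfrac{1}{\ell}\textstyle\sum_{i=1}^\ell\big(f(c_i')-f(c_i)\big)\Big|\Big].
\]
Because each difference $f(c_i')-f(c_i)$ is symmetric around $0$ under the swap $c_i\leftrightarrow c_i'$, multiplying the $i$-th term by an independent Rademacher sign $\sigma_i$ leaves the distribution unchanged; inserting these signs and then splitting the two halves with the triangle inequality bounds the right-hand side by $2\,\expectation_\Sam[\rade(\family,\Sam)]$.

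Finally, chaining the three bounds and taking a union bound over the two McDiarmid events (total failure probability $\delta$), I would obtain, with probability at least $1-\delta$,
\[
  g(\Sam)\le \expectation_\Sam[g(\Sam)]+\sqrt{\tfrac{\ln(2/\delta)}{2\ell}}
  \le 2\,\expectation_\Sam[\rade(\family,\Sam)]+\sqrt{\tfrac{\ln(2/\delta)}{2\ell}}
  \le 2\,\rade(\family,\Sam)+3\sqrt{\tfrac{\ln(2/\delta)}{2\ell}},
\]
where the constant $3=1+2$ arises from the direct McDiarmid deviation for $g$ plus the symmetrization factor multiplying the McDiarmid deviation for $\rade$. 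The main obstacle is the symmetrization step: the ghost-sample argument and the insertion of Rademacher variables must be handled carefully for the two-sided (absolute-value) supremum, verifying that the symmetry-based sign insertion remains valid for the absolute value and not only for the one-sided deviation. The two McDiarmid applications, by contrast, are routine once the $1/\ell$ bounded-difference property is checked, and finiteness of $\domain$ removes any measurability concerns about the suprema.
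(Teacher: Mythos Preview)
Your proposal is correct and is precisely the standard proof of this result. The paper does not give its own proof of \Cref{thm:supdevbound}; it is quoted as Thm.~26.5 of~\citep{ShalevSBD14}. The paper's Appendix (\cref{app:raderel}), when proving the relative-error analogue \Cref{thm:raderel}, explicitly says that argument ``follows step by step the proof of~\Cref{thm:supdevbound}~(\citep[Thm.~26.4]{ShalevSBD14}), with the only important difference that we need to show that the quantities \ldots satisfy the \emph{bounded difference inequality}'', and then proves the bounded-differences property (\Cref{lem:boundedsup,lem:boundedrade}) exactly as in your steps (i) and (iii). So your decomposition into two McDiarmid applications plus a symmetrization, glued by a union bound, matches both the cited textbook proof and the scheme the paper itself invokes.

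One small remark on the point you yourself flag as the ``main obstacle'': after the ghost-sample step and the triangle-inequality split you end up with $2\,\expectation_{\Sam,\sigma}\bigl[\sup_{f}\bigl|\tfrac{1}{\ell}\sum_i \sigma_i f(c_i)\bigr|\bigr]$, which has an absolute value, whereas $\rade(\family,\Sam)$ in~\eqref{eq:rade} does not. The clean way to close this is not to bound the absolute-value version by the signed one, but to run the one-sided symmetrization separately for $\sup_f(\mean_\domain(f)-\mean_\Sam(f))$ and for $\sup_f(\mean_\Sam(f)-\mean_\domain(f))$; by the symmetry of the Rademacher signs both sides are bounded by $2\,\expectation_\Sam[\rade(\family,\Sam)]$ without any absolute value appearing, and since both one-sided suprema satisfy the same $1/\ell$ bounded-differences property, a single McDiarmid application to $g(\Sam)=\sup_f|\mean_\Sam(f)-\mean_\domain(f)|$ still suffices for step~(i). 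With that adjustment the constants $2$ and $3$ come out exactly as stated.
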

\Cref{thm:supdevbound} is how the result is classically presented, but better
although more complex bounds than~\eqref{eq:supdevbound} are available~\citep{OnetoGAR13}.

\begin{theorem}[Thm.~3.11~\citep{OnetoGAR13}]\label{thm:supdevboundrefined}
	Let $\delta\in(0,1)$ and let $\Sam$ be a collection of $\ell$ elements of
	$\domain$ sampled independently and uniformly at random. Let
	\begin{equation}\label{eq:alpha}
		\alpha=\frac{\ln\frac{2}{d}}{\ln\frac{2}{d}+\sqrt{\left(2\ell\rade(\family,\Sam)+\ln\frac{2}{d}\right)\ln\frac{2}{d}}},
	\end{equation}
	then, with probability at least $1-\delta$,
	\begin{equation}\label{eq:supdevboundrefined}
		\sup_{f\in\family}|\mean_\Sam(f)-\mean_\domain(f)|\le\frac{\rade(\family,\Sam)}{1-\alpha}+\frac{\ln\frac{2}{d}}{2\ell\alpha(1-\alpha)}+\sqrt{\frac{\ln\frac{2}{d}}{2\ell}}\enspace.
	\end{equation}
\end{theorem}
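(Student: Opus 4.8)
The plan is to retrace the symmetrization-plus-concentration argument behind \cref{thm:supdevbound} and sharpen it precisely where the crude constants $2$ and $3$ are lost, keeping a free parameter that will be identified with the $\alpha$ of \eqref{eq:alpha}. Write $\Phi(\Sam)=\sup_{f\in\family}|\mean_\Sam(f)-\mean_\domain(f)|$ for the deviation in \eqref{eq:supdev} and set $L=\ln(2/\delta)$. First I would isolate the last summand of \eqref{eq:supdevboundrefined} by a standard bounded-differences step: since every $f$ maps into $[0,1]$, replacing one sample $c_i$ changes $\Phi(\Sam)$ by at most $1/\ell$, so McDiarmid's inequality gives $\Phi(\Sam)\le\expectation_\Sam[\Phi(\Sam)]+\sqrt{L/(2\ell)}$ with probability at least $1-\delta/2$. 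It then remains to control $\expectation_\Sam[\Phi(\Sam)]$ by the \emph{empirical} quantity $\rade(\family,\Sam)$ of \eqref{eq:rade}.

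Here lies the whole refinement. Plain symmetrization only yields $\expectation_\Sam[\Phi(\Sam)]\le 2\expectation_\Sam[\rade(\family,\Sam)]$, and replacing the expected Rademacher average by the empirical one through ordinary (sub-Gaussian) concentration preserves the leading factor $2$ --- this is exactly what produces the $2\rade$ of \cref{thm:supdevbound}. Instead I would use a variance-sensitive, self-bounding concentration for the supremum of the empirical process (a Bernstein/Bennett- or Talagrand-type inequality), in which the empirical Rademacher average enters with leading coefficient $1$ rather than $2$ and the correction terms are of the smaller orders $\sqrt{\rade L/\ell}$ and $L/\ell$. The exponential-moment (Chernoff) optimization underlying such an inequality carries a free tuning parameter that I would deliberately leave unfixed; the upshot is a bound of the exact shape
\begin{equation*}
\expectation_\Sam[\Phi(\Sam)]\le\frac{\rade(\family,\Sam)}{1-\alpha}+\frac{L}{2\ell\,\alpha(1-\alpha)}\qquad\text{for every }\alpha\in(0,1),
\end{equation*}
holding with probability at least $1-\delta/2$.

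Combining this with the McDiarmid step through a union bound over the two failure events (each of probability $\delta/2$) gives, with probability at least $1-\delta$ and simultaneously for all $\alpha\in(0,1)$, the inequality $\Phi(\Sam)\le\frac{\rade}{1-\alpha}+\frac{L}{2\ell\alpha(1-\alpha)}+\sqrt{L/(2\ell)}$. Since the left-hand side is independent of $\alpha$, the last step is a one-variable minimization of $g(\alpha)=\frac{\rade}{1-\alpha}+\frac{L}{2\ell\,\alpha(1-\alpha)}$ over $(0,1)$: the stationarity condition $g'(\alpha)=0$ clears to the quadratic $2\ell\,\rade\,\alpha^2+2L\alpha-L=0$, whose root in $(0,1)$ is exactly \eqref{eq:alpha}, and substituting it back reproduces the first two summands of \eqref{eq:supdevboundrefined}. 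As consistency checks, the minimizer reduces to $\alpha=1/2$ when $\rade=0$, and for $\ell\rade\gg L$ it satisfies $\alpha\to0$, so the coefficient $1/(1-\alpha)$ of $\rade$ tends to $1$ --- the factor-$2$ improvement that motivates the refined bound.

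The main obstacle is the displayed intermediate inequality: obtaining the leading coefficient $1$ on $\rade(\family,\Sam)$, with explicit constants and in the $\alpha$-parametrized form, is genuinely harder than the sub-Gaussian argument behind \cref{thm:supdevbound} and cannot be reached by plain symmetrization followed by McDiarmid. It requires verifying a self-bounding property for the empirical process and feeding it into a sharp variance-aware concentration (Bousquet's form of Talagrand's inequality, or the entropy-method machinery of Boucheron--Lugosi--Massart), tracking the constants throughout. Everything downstream --- the two concentration/union-bound steps and the convex optimization over $\alpha$ --- is then routine.
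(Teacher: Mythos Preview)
The paper does not prove this theorem. \Cref{thm:supdevboundrefined} is quoted verbatim as ``Thm.~3.11~\citep{OnetoGAR13}'' and used as a black box; no argument for it appears anywhere in the paper or its appendices. Consequently there is no ``paper's own proof'' to compare your proposal against.

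That said, your sketch is a reasonable reconstruction of how such a bound is typically obtained. The decomposition into a McDiarmid step for the trailing $\sqrt{\ln(2/\delta)/(2\ell)}$ term, a sharper concentration step relating $\expectation_\Sam[\Phi(\Sam)]$ to the empirical $\rade(\family,\Sam)$, and a final optimization over a free parameter $\alpha$ is exactly the structure one expects. Your calculus for the minimization is correct: the stationarity condition $2\ell\,\rade\,\alpha^2+2L\alpha-L=0$ does have \eqref{eq:alpha} as its root in $(0,1)$, and your limiting checks ($\alpha\to 1/2$ when $\rade=0$, $\alpha\to 0$ when $\ell\rade\gg L$) are right.

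You have also correctly located the only genuinely hard part: the intermediate inequality with leading coefficient $1/(1-\alpha)$ on $\rade(\family,\Sam)$ cannot come from plain symmetrization (which is stuck at $2$) followed by sub-Gaussian concentration; it requires a self-bounding or variance-aware argument (Bousquet/Talagrand, or the entropy method) with careful bookkeeping of constants. Since the paper defers entirely to \citep{OnetoGAR13} for this, verifying that step would mean going to that reference rather than to anything in the present paper.
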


Computing, or even estimating, the expectation in~\eqref{eq:rade} w.r.t.~the
Rademacher r.v.'s is not straightforward, and can be computationally expensive,
requiring a time-consuming Monte Carlo simulation~\citep{BoucheronBL05}. For
this reason, \emph{upper bounds to the Rademacher average} are usually employed
in~\eqref{eq:supdevbound} and~\eqref{eq:supdevboundrefined} in place of
$\rade(\family,\Sam)$. A powerful and efficient-to-compute bound is presented
in~\Cref{thm:radeboundw}. Given $\Sam$, consider, for each $f\in\family$, the
vector
$\mathbf{v}_{f,\Sam}=(f(c_1),\dotsc,f(c_\ell))$, and let
$\mathcal{V}_\Sam=\{\mathbf{v}_f, f\in\family\}$ be the \emph{set} of such
vectors ($|\mathcal{V}_\Sam|\le|\family|$).

\begin{theorem}[\ifappendix Thm.~3~\else\citep{RiondatoU15}\fi]\label{thm:radeboundw}
  Let $\functionw: \mathbb{R}^+ \to \mathbb{R}^+$ be the function
  \begin{equation}\label{eq:functionw}
	\functionw(s) =
	\frac{1}{s}\ln\displaystyle\sum_{\mathbf{v}\in
	\mathcal{V}_\Sam}\mathrm{exp}(s^2\|\mathbf{v}\|^2/(2\ell^2)),
  \end{equation}
  where $\|\cdot\|$ denotes the Euclidean norm. Then
  \begin{equation}\label{eq:functionwbound}
	\rade(\family,\Sam)\leq\min_{s\in\mathbb{R}^+}\functionw(s)\enspace.
\end{equation}
\end{theorem}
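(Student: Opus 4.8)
The plan is to fix an arbitrary $s\in\mathbb{R}^+$, establish the inequality $\rade(\family,\Sam)\le\functionw(s)$ for that particular $s$, and then minimize over $s$ at the very end, which gives~\eqref{eq:functionwbound} immediately since the right-hand side of~\eqref{eq:functionwbound} never exceeds any single $\functionw(s)$. The core of the argument is the classical ``finite-class/entropy'' technique: exponentiate the supremum, push the expectation inside by Jensen's inequality, replace the supremum over the finite set $\mathcal{V}_\Sam$ by a sum, and then control each summand with a Hoeffding-type moment-generating-function estimate.

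First I would rewrite the Rademacher average in terms of the vectors $\mathbf{v}_{f,\Sam}=(f(c_1),\dotsc,f(c_\ell))$. The inner quantity $\frac{1}{\ell}\sum_{i=1}^\ell\sigma_i f(c_i)$ depends on $f$ only through $\mathbf{v}_{f,\Sam}$, so the supremum over $f\in\family$ is the same as a supremum over the finite set $\mathcal{V}_\Sam$, giving $\rade(\family,\Sam)=\expectation_\sigma\left[\sup_{\mathbf{v}\in\mathcal{V}_\Sam}\frac{1}{\ell}\sum_{i=1}^\ell\sigma_i v_i\right]$. Multiplying by $s>0$ and applying Jensen's inequality to the convex map $x\mapsto e^x$ yields $\exp(s\,\rade(\family,\Sam))\le\expectation_\sigma\left[\exp\left(\sup_{\mathbf{v}\in\mathcal{V}_\Sam}\frac{s}{\ell}\sum_{i=1}^\ell\sigma_i v_i\right)\right]$. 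Since the exponential is increasing and positive, I can exchange it with the supremum and then bound the supremum by the sum over the finite set, so that the right-hand side is at most $\sum_{\mathbf{v}\in\mathcal{V}_\Sam}\expectation_\sigma\left[\exp\left(\frac{s}{\ell}\sum_{i=1}^\ell\sigma_i v_i\right)\right]$ by linearity of expectation.

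Next I would evaluate each of these expectations using the independence of the Rademacher variables $\sigma_i$. For a fixed $\mathbf{v}$ the moment generating function factorizes, $\expectation_\sigma\left[\exp\left(\frac{s}{\ell}\sum_{i=1}^\ell\sigma_i v_i\right)\right]=\prod_{i=1}^\ell\expectation_{\sigma_i}\left[\exp\left(\frac{s}{\ell}\sigma_i v_i\right)\right]=\prod_{i=1}^\ell\cosh\!\left(\frac{s v_i}{\ell}\right)$, and the elementary bound $\cosh(x)\le e^{x^2/2}$ applied term by term gives $\prod_{i=1}^\ell\cosh(s v_i/\ell)\le\exp\!\left(\frac{s^2}{2\ell^2}\sum_{i=1}^\ell v_i^2\right)=\exp\!\left(s^2\|\mathbf{v}\|^2/(2\ell^2)\right)$. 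Combining this with the previous step yields $\exp(s\,\rade(\family,\Sam))\le\sum_{\mathbf{v}\in\mathcal{V}_\Sam}\exp\!\left(s^2\|\mathbf{v}\|^2/(2\ell^2)\right)$; taking the logarithm and dividing by $s>0$ produces exactly $\rade(\family,\Sam)\le\functionw(s)$. As $s$ was arbitrary, minimizing the right-hand side over $s\in\mathbb{R}^+$ concludes the proof.

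The individual steps are routine, and the only one requiring genuine care is the passage from the supremum to the sum over $\mathcal{V}_\Sam$: this is simultaneously the step that makes the bound computable, by replacing the intractable expectation of a supremum with a finite sum, and the step where tightness is sacrificed. The subtlety is that one must exponentiate \emph{before} invoking Jensen and the union-to-sum bound, so that convexity and monotonicity of the exponential keep every inequality oriented correctly; doing these operations in the wrong order would reverse the direction of the estimate.
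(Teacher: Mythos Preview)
Your proof is correct and is precisely the classical Massart finite-class argument: Jensen on the exponential, replace the supremum over the finite set $\mathcal{V}_\Sam$ by a sum, factor by independence of the $\sigma_i$, and apply the sub-Gaussian bound $\cosh(x)\le e^{x^2/2}$. The paper itself does not supply a proof of this statement---it is quoted as Thm.~3 of~\cite{RiondatoU15}, and in Appendix~D the relative-error analogue is said to ``follow the same steps''---so there is nothing further to compare; your argument is the standard one underlying the cited result.
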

The function $\functionw$ is convex, continuous in $\mathbb{R}^+$, and has first
and second derivatives w.r.t.~$s$ everywhere in its domain, so it is possible to
minimize it efficiently using standard convex optimization
methods~\citep{BoydV04}. In future work, we plan to explore how to obtain a
tighter bound than the one presented
in~\Cref{thm:radeboundw} using recent results by~\citet{AnguitaGOR14}.

\section{Static Graph BC Approximation}\label{sec:static}
We now present and analyze \staticalgo, our \emph{progressive sampling
algorithm} for computing an $(\varepsilon,\delta)$-approximation to the
collection of exact \BC values in a static graph. Many of the details and
properties of \staticalgo are shared with the other \textsf{\algobase{}}
algorithms we present.

\para{Progressive Sampling} Progressive sampling algorithms are intrinsically
\emph{iterative}. At a high level, they work as follows. At iteration $i$, the
algorithm extracts an approximation of the values of interest (in our case, of
the \BC of all nodes) from a collection $\Sam_i$ of $S_i$ random samples from a
suitable domain $\domain$ (in our case, the samples are pairs of different
nodes). Then, the algorithm checks a specific \emph{stopping condition}
which uses information obtained from the sample and from the computed
approximation. If the stopping condition is satisfied, then the approximation
has, with the required probability, the desired quality (in our case, it is an
$(\varepsilon,\delta)$-approximation), and can be returned in output,
at which point the algorithm terminates. If the stopping condition is not
satisfied, the algorithm builds a collection $\Sam_{i+1}$ by adding random
samples to the $\Sam_i$ until $S_{i+1}$, the algorithm iterates, computing a new
approximation from the so-created collection $\Sam_{i+1}$.

There are two main challenges for the algorithm designer: deriving a ``good''
stopping condition and determining the initial sample size $S_1$ and the next
sample sizes $S_{i+1}$.

Ideally, one would like a stopping condition that:
\begin{enumerate}
	\item when satisfied, guarantees that the computed approximation has the
		desired quality properties (in our case, it is an
			$(\varepsilon,\delta)$-approximation; and
	\item can be evaluated efficiently; and
	\item is tight, in the sense that is satisfied at small sample sizes.
\end{enumerate}
The stopping condition for our algorithm is based on~\Cref{thm:radeboundw}
and~\Cref{thm:supdevboundrefined} and has all the above desirable properties.

The second challenge is determining the \emph{sample schedule $(S_i)_{i>0}$}.
Any monotonically increasing sequence of positive numbers can act as sample
schedule, but the goal in designing a good sample schedule is to minimize the
number of iterations that are needed before the stopping condition is satisfied,
while minimizing the sample size $S_i$ at the
iteration $i$ at which this happens. 
The sample schedule may be fixed in advance, but 
an \emph{adaptive approach} that ties the sample schedule to the
stopping condition can give better results, as the sample size
$S_{i+1}$ for iteration $i+1$ can be computed using information obtained in (or
up-to) iteration $i$. \textsf{\algobase{}} uses such an adaptive approach.

\subsection{Algorithm Description and Analysis}\label{sec:description}
\staticalgo takes as input a graph $G=(V,E)$ and two
parameters $\varepsilon,\delta\in(0,1)$, and outputs a collection
$\widetilde{B}=\{\tbetw(w), w\in V\}$ that is an
$(\varepsilon,\delta)$-approximation of the betweenness centralities  ${B}=\{\betw(w), w\in V\}$.
The algorithm samples from the domain $\domain\leftarrow\{(u,v)\in V\times V, u\neq v\}$.

For each node $w\in V$, let $f_w : \domain \to \mathbb{R}^+$ be the function
\begin{equation}\label{eq:functionf}
	f_w(u,v)=\frac{\sigma_{uv}(w)}{\sigma_{uv}},
\end{equation}
i.e., $f_w(u,v)$ is the fraction of shortest paths (SPs) from $u$ to $v$ that go through $w$. Let
$\family$ be the set of these functions. Given this definition, we have that
\begin{align*}
	\mean_\domain(f_w)&=\frac{1}{|\domain|}\sum_{(u,v)\in\domain}f_w(u,v)\\
	&=\frac{1}{|V|(|V|-1)}\sum_{\substack{(u,v)\in V\times V\\u\neq
		v}}\frac{\sigma_{uv}(w)}{\sigma_{uv}}\\
	&=\betw(v)\enspace.
\end{align*}
Let now $\Sam=\{(u_i,v_i), 1\le i \le \ell\}$ be a collection of $\ell$ pairs
$(u,v)$ from $\domain$. For the sake of clarity, we define
\[
		\tbetw(w)=\mean_\Sam(f_w)=\frac{1}{\ell}\sum_{i=1}^\ell
		f_w((u_i,v_i))\enspace.
\]
For each $w\in V$ consider the vector
\[
		\mathbf{v}_w=(f_w(u_1,v_1),\dotsc,f_w(u_\ell,v_\ell))\enspace.
\]
It is easy to see that $\tbetw(w)=\|\mathbf{v}_w\|_1/\ell$. Let now
$\mathcal{V}_\Sam$ be the \emph{set} of these vectors:
\[
	\mathcal{V}_\Sam=\{\mathbf{v}_w, w\in V\}\enspace.
\]
If we have complete knowledge of this set of vectors, then we can compute the
quantity
\[
		\omega^*=\min_{s\in\mathbb{R}^+}\frac{1}{s}\ln\sum_{\mathbf{v}\in\mathcal{V}_\Sam}\mathrm{exp}\left(s^2\|\mathbf{v}\|^2/(2\ell^2)\right),
\]
then use $\omega^*$ in~\eqref{eq:alpha} in place of $\rade(\family,\Sam)$ to
obtain $\alpha$, and combine~\eqref{eq:supdevboundrefined},
\eqref{eq:functionw}, and \eqref{eq:functionwbound} to obtain
\begin{equation}\label{eq:deltai}
	\Delta_\Sam= \frac{\omega^*}{1-\alpha} +
	\frac{\ln\frac{2}{\delta}}{2\ell\alpha(1-\alpha)}+\sqrt{\frac{\ln\frac{2}{\delta}}{2\ell}},
\end{equation}
and finally check whether $\Delta_\Sam\le\varepsilon$. This is \staticalgo's
stopping condition. When it holds, we can just return the collection
$\widetilde{B}=\{\tbetw(w)=\|\mathbf{v}_w\|_1/\ell, w\in V\}$ since, from the
definition of $\Delta_\Sam$ and~\Cref{thm:supdevboundrefined,thm:radeboundw}, we
have that $\widetilde{B}$ is an $(\varepsilon,\delta)$-approximation to the
exact betweenness values.

\staticalgo works as follows. Suppose for now that we fix a priori a
monotonically increasing sequence $(S_i)_{i>0}$ of sample sizes (we show in
later paragraph how to compute the sample schedule adaptively on the fly). The
algorithm builds a collection $\Sam$ by sampling pairs $(u,v)$ independently
and uniformly at random from $\domain$, until it reaches size $S_1$. After each
pair of nodes has been sampled, \staticalgo performs an $s-t$ SP computation
from $u$ to $v$ and then backtracks from $v$ to $u$ along the SPs just computed,
to keeps track of the set $\mathcal{V}_{\Sam}$ of vectors (details given below).
For clarity of presentation, let $\Sam_1$ denote $\Sam$ when it has size exactly
$S_1$, and analogously for $\Sam_i$ and $S_i$, $i>1$. Once $\Sam_1$ has been
built, \staticalgo computes $\Delta_{\Sam_i}$ and checks whether it is at most
$\varepsilon$. If so, then it returns $\widetilde{B}$.  Otherwise, \staticalgo
iterates and continues adding samples from $\domain$ to $\Sam$ until it has size
$S_2$, and so on until $\Delta_{\Sam_i}\le\varepsilon$ holds. The pseudocode for
\staticalgo is presented in Alg.~\ref{alg:static}, including the steps to
update $\mathcal{V}_\Sam$ and to adaptively choose the sample schedule, as
described in the following paragraphs. We now prove the correctness of the
algorithm.

\begin{algorithm}[h!t]
	\ifappendix
	\else
	\small
	\fi
	\DontPrintSemicolon
	\SetKwInOut{Input}{input}
	\SetKwInOut{Output}{output}
	\SetKwFunction{GetSample}{uniform\_random\_sample}
	\SetKwFunction{ModifiedSP}{compute\_SPs}
	\SetKwComment{tcp}{//}{}
	\Input{Graph $G=(V,E)$, accuracy parameter $\varepsilon\in(0,1)$, confidence parameter $\delta\in(0,1)$}
	\Output{Set $\widetilde{B}$ of \BC approximations for all nodes in $V$}
	$\domain\leftarrow\{(u,v)\in V\times V, u\neq v\}$\;
	$S_0\leftarrow 0$, $S_1\leftarrow
	\frac{(1+8\varepsilon+\sqrt{1+16\varepsilon})\ln(2/\delta)}{4\varepsilon^2}$
	\label{algline:initialsize}\;
	$\bm0=(0)$\label{algline:bmo}\;
	$\mathcal{V}=\{\bm 0\}$\;
	\lForEach{$w\in V$}{ $M[w]=\bm 0$}
	$\mathsf{c}_{\bm 0} \leftarrow |V|$\;
	$j\leftarrow 1$\;
	\While{True} {\label{algline:mainloop}
		\For{$j\leftarrow 1$ \KwTo $S_i-S_{i-1}$}{\label{algline:internalloop}
			$(u,v)\leftarrow$ \GetSample{$\domain$}\label{algline:getsample}\;
			\tcp{Truncated SP computation}
			\ModifiedSP{$u,v$}\label{algline:spcomp}\;
			\If{reached $v$} {
				\lForEach{$z\in\mathsf{P}_u[v]$}{ $\sigma_{zv}\leftarrow 1$}
				\ForEach{node $w$ on a SP from $u$ to $v$, in reverse order by
					$\mathsf{d}(u,w)$} {\label{algline:backtrack}
					$\sigma_{uv}(w)\leftarrow\sigma_{uw}\sigma_{wv}$\;
					$\mathbf{v}\leftarrow M[w]$\label{algline:beginupdate}\;
					$\mathbf{v}'\leftarrow(\underbrace{(j_1, g_1), (j_2,
							g_2), \dotsc }_\mathbf{v}, (j,\sigma_{uv}(w)))$\;
					\uIf{$\mathbf{v}'\not\in\mathcal{V}$}{
						$c_{\mathbf{v}'}\leftarrow 1$\;
						$\mathcal{V}\leftarrow \mathcal{V}\cup\{\mathbf{v}'\}$\;
					}
					\lElse{ $\mathsf{c}_{\mathbf{v}'}\leftarrow \mathsf{c}_{\mathbf{v}'} +1$}
					$M[w]\leftarrow \mathbf{v}'$\label{algline:endupdate}\;
					\lIf{$\mathsf{c}_\mathbf{v} > 1$} {
						$\mathsf{c}_{\mathbf{v}}\leftarrow \mathsf{c}_{\mathbf{v}} -1$
					}
					\lElse {$\mathcal{V}\leftarrow\mathcal{V}\setminus\{\mathbf{v}\}$
					\label{algline:remove}}
					\lForEach{$z\in\mathsf{P}_u[w]$}{
						$\sigma_{zv}\leftarrow \sigma_{zv}+\sigma_{wv}$\label{algline:recurscomp}
					}
				}
			}
			\label{algline:internalloopend}
		}
		$\omega^*_i\leftarrow\min_{s\in\mathbb{R}^+}\frac{1}{s}\ln\sum_{\mathbf{v}\in\mathcal{V}_\Sam}\mathrm{exp}\left(s^2\|\mathbf{v}\|^2/(2S_{i}^2)\right)$\;
		$\alpha_i\leftarrow\frac{\ln\frac{2}{\delta}}{\ln\frac{2}{\delta}+\sqrt{\left(2S_{i}\omega^*_i+\ln\frac{2}{\delta}\right)\ln\frac{2}{\delta}}}$\;
			$\Delta_{\Sam_i}\leftarrow\frac{\omega^*_i}{1-\alpha_i} +
			\frac{\ln\frac{2}{\delta}}{2S_{i}\alpha_i(1-\alpha_i)}+\sqrt{\frac{\ln\frac{2}{\delta}}{2S_{i}}}$\label{algline:getdelta}\;
		\lIf{$\Delta_{\Sam_i}\le\varepsilon$}{
			\textbf{break}
		}
		\Else{
			$S_{i+1}\leftarrow $ \texttt{nextSampleSize()}
			\label{algline:nextsize}\;
			$i\leftarrow i+1$\;
		}
		$j\leftarrow j+1$\;
	}
	\Return{$\widetilde{B}\leftarrow\{\tbetw(w)\leftarrow\|M[w]\|_1/S_i, w\in V\}$}\;
	\caption{\staticalgo: absolute error approximation of \BC on static graphs}
	\label{alg:static}
\end{algorithm}

\begin{theorem}[correctness]\label{thm:correctness}
	The collection $\widetilde{B}$ returned by \staticalgo is a
	$(\varepsilon,\delta)$-approximation to the collection of exact \BC values.
\end{theorem}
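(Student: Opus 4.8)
The plan is to reduce the all-nodes guarantee to a single supremum-deviation statement and then discharge that statement through the chain \Cref{thm:radeboundw} $\to$ \Cref{thm:supdevboundrefined} $\to$ stopping condition. First I would observe that every $f_w$ defined in \eqref{eq:functionf} belongs to $\family$, that $\tbetw(w)=\mean_\Sam(f_w)$ by construction, and that $\mean_\domain(f_w)=\betw(w)$ as already computed before the statement. Hence for the sample $\Sam$ held when the algorithm returns,
\[
\max_{w\in V}\left|\tbetw(w)-\betw(w)\right| = \max_{w\in V}\left|\mean_\Sam(f_w)-\mean_\domain(f_w)\right| \le \sup_{f\in\family}\left|\mean_\Sam(f)-\mean_\domain(f)\right|.
\]
So it suffices to prove that this supremum deviation is at most $\varepsilon$ with probability at least $1-\delta$ for the returned sample, which by \Cref{def:eapprox} is exactly the claim.

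Next I would bound the supremum deviation by $\Delta_\Sam$. By \Cref{thm:radeboundw}, the quantity $\omega^*=\min_{s}\functionw(s)$ computed from $\mathcal{V}_\Sam$ satisfies $\omega^*\ge\rade(\family,\Sam)$, so it is a legitimate data-computable surrogate for the true Rademacher average; plugging it into \eqref{eq:alpha} and \eqref{eq:deltai} yields $\alpha$ and $\Delta_\Sam$. \Cref{thm:supdevboundrefined} gives, with probability at least $1-\delta$, the bound \eqref{eq:supdevboundrefined} with $\rade(\family,\Sam)$ in place of $\omega^*$. The technical step is to show that substituting the larger $\omega^*$ for $\rade(\family,\Sam)$ keeps the inequality valid, i.e. that $\Delta_\Sam$ still dominates the right-hand side of \eqref{eq:supdevboundrefined}. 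I would do this by verifying from the derivation of \Cref{thm:supdevboundrefined} that $\alpha$ can be read as a free parameter: on a single event of probability $1-\delta$ the inequality holds for every $\alpha\in(0,1)$ simultaneously (the displayed $\alpha$ being merely the minimizer), and for any fixed $\alpha$ the right-hand side is increasing in the value $r$ substituted for $\rade(\family,\Sam)$, since $r$ appears only through the term $r/(1-\alpha)$ with $1-\alpha>0$. Combining the algorithm's $\alpha$ with this monotonicity gives $\sup_{f}|\mean_\Sam(f)-\mean_\domain(f)|\le\Delta_\Sam$ on that event. The stopping condition then closes the argument: the algorithm returns only once $\Delta_{\Sam_i}\le\varepsilon$, so on the good event the supremum deviation for the returned sample is at most $\varepsilon$.

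The step I expect to be the main obstacle is making this rigorous in the presence of the \emph{data-dependent stopping time}: \Cref{thm:supdevboundrefined} concerns a sample of \emph{fixed} size $\ell$, whereas $\Sam$ is the sample at the random iteration $\tau$ at which the stopping condition first fires. A naive union bound over iterations would inflate the failure probability beyond $\delta$, so I would need to argue that the guarantee transfers to the realized random size --- e.g. by showing that the bad-output event is contained in the event that, at the size at which the algorithm halts, $\Delta_\Sam$ fails to dominate the deviation, and that this latter event still has probability at most $\delta$. Subsidiary points to check are that the loop of \cref{alg:static} indeed maintains $\mathcal{V}_\Sam$ and the squared norms $\|\mathbf v\|^2$ correctly, so that the computed $\omega^*_i$ is the intended minimizer, and that the initial size $S_1$ and any admissible sample schedule keep $\alpha\in(0,1)$ so that $\Delta_\Sam$ is well defined.
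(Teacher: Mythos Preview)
Your approach is exactly the one the paper takes, only far more explicit: the paper's entire proof is a single sentence stating that the claim ``follows from the definitions of $\Sam$, $\mathcal{V}_\Sam$, $\family$, $f_w$, $\tbetw(w)$, $\Delta_{\Sam_i}$, and from \Cref{thm:supdevboundrefined,thm:radeboundw}.'' Your reduction to the supremum deviation, the use of \Cref{thm:radeboundw} to justify replacing $\rade(\family,\Sam)$ by the computable $\omega^*$, and the monotonicity argument showing that $\Delta_\Sam$ still dominates the right-hand side of \eqref{eq:supdevboundrefined} after that replacement are all the steps the paper leaves implicit.

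The one place you go beyond the paper is your worry about the data-dependent stopping time. You are right that \Cref{thm:supdevboundrefined} is stated for a sample of fixed size, and that checking it at successive sizes $S_1,S_2,\dotsc$ with the same $\delta$ does not, on its face, give an overall failure probability of $\delta$. The paper does not address this point at all; its one-line proof simply invokes the fixed-sample theorem. So this is not a gap in \emph{your} argument relative to the paper's --- it is a subtlety the paper itself glosses over. If you wanted to patch it, the standard fix (split $\delta$ across iterations, e.g.\ $\delta_i=\delta/2^i$, or union-bound over a finite schedule) would work with only a logarithmic cost in the number of iterations, and since the paper reports that in practice the algorithm stops after two iterations, the practical impact is negligible.
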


\begin{proof}
	The claim follows from the definitions of $\Sam$, $\mathcal{V}_\Sam$,
	$\family$, $f_w$ for $w\in V$, $\tbetw(w)$, $\Delta_{\Sam_i}$, and
	from~\Cref{thm:supdevboundrefined,thm:radeboundw}.
\end{proof}

\paragraph*{Computing and maintaining the set $\mathcal{V}_\Sam$}
We now discuss in details how \staticalgo can efficiently maintain the set
$\mathcal{V}_\Sam$ of vectors, which is used to compute the value
$\Delta_{\Sam}$ and the values $\tbetw(w)=\|\mathbf{v}_w\|_1/|\Sam|$ in $\widetilde{B}$. In addition
to $\mathcal{V}_\Sam$, \staticalgo also maintains a map $M$ from $V$ to
$\mathcal{V}_\Sam$ (i.e., $M[w]$ is a vector $\mathbf{v}_w\in\mathcal{V}_\Sam$),
and a counter $\mathsf{c}_\mathbf{v}$ for each $\mathbf{v}\in\mathcal{V}_\Sam$,
denoting how many nodes $w\in V$ have $M[w]=\mathbf{v}$.

At the beginning of the execution of the algorithm, we have $\Sam=\emptyset$ and
also $\mathcal{V}_\Sam=\emptyset$. Nevertheless, \staticalgo initializes
$\mathcal{V}_\Sam$ to contain one special empty vector $\bm 0$, with no
components, and $M$ so that $M[w]=\bm 0$ for all $w\in V$, and $\mathsf{c}_{\bm
0}=|V|$ (lines~\ref{algline:bmo} and following~in Alg:~\ref{alg:static}).

After having sampled a pair $(u,v)$ from $\domain$, \staticalgo updates
$\mathcal{V}_\Sam$, $M$ and the counters as follows. First, it performs
(line~\ref{algline:spcomp}) a $s-t$
SP computation from $u$ to $v$ using any SP algorithms (e.g., BFS or Dijkstra)
modified, as discussed by~\citet[Lemma 3]{Brandes01}, to keep track, for each
node $w$ encountered during the computation, of the SP distance
$\mathsf{d}(u,w)$ from $u$ to $w$, of the number $\sigma_{uw}$ of SPs from $u$
to $w$, and of the set $\mathsf{P}_u(w)$ of (immediate) predecessors of $w$
along the SPs from $u$.\footnote{Storing the set of immediate predecessors is
not necessary. By not storing it, we can reduce the space complexity from
$O(|E|)$ to $O(|V|)$, at the expense of some additional computation at runtime.}
Once $v$ has been reached (and only if it has been reached), the algorithm
starts backtracking from $v$ towards $u$ along the SPs it just computed
(line~\ref{algline:backtrack}). During
this backtracking, the algorithm visits the nodes along the SPs in inverse
order of SP distance from $u$. For each visited node $w$ different from $u$ and
$v$, it computes the value $f_w(u,v)=\sigma_{uv}(w)$ of
SPs from $u$ to $v$ that go through $w$, which is obtained as
\[
\sigma_{uv}(w)=\sigma_{uw}\times\sum_{z~:~w\in\mathsf{P}_u(z)}\sigma_{zv}
\]
where the value $\sigma_{uw}$ is obtained during the $s-t$ SP computation, and
the values $\sigma_{zw}$ are computed recursively during the
backtracking (line~\ref{algline:recurscomp})~\citep{Brandes01}. After computing
$\sigma_{uv}(w)$, the algorithm
takes the vector $\mathbf{v}\in\mathcal{V}_{\Sam}$ such that $M[w]=\mathbf{v}$
and creates a new vector $\mathbf{v}'$ by appending $\sigma_{uv}(w)$ to the end
of $\mathbf{v}$.\footnote{\staticalgo uses a sparse representation for the
vectors $\mathbf{v}\in\mathcal{V}_\Sam$, storing only the non-zero components
of each $\mathbf{v}$ as pairs $(i,g)$, where $i$ is the component index and $g$
is the value of that component.} Then it adds $\mathbf{v}'$ to the \emph{set}
$\mathcal{V}_{\Sam}$, updates $M[w]$ to $\mathbf{v}'$, and increments the
counter $\mathsf{c}_{\mathbf{v}'}$ by one (lines~\ref{algline:beginupdate}
to~\ref{algline:endupdate}). Finally, the algorithm decrements the
counter $\mathsf{c}_{\mathbf{v}}$ by one, and if it becomes equal to zero,
\staticalgo removes $\mathbf{v}$ from $\mathcal{V}_\Sam$
(line~\ref{algline:remove}). At this point, the
algorithm moves to analyzing another node $w'$ with distance from $u$ less or
equal to the distance of $w$ from $u$. It is easy to see that when the
backtracking reaches $u$, the set $\mathcal{V}_\Sam$, the map $M$, and the
counters, have been correctly updated.

We remark that to compute $\Delta_{\Sam_i}$ and $\widetilde{B}$
and to keep the map $M$ up to date, we do not actually need to store the
vectors in $\mathcal{V}_\Sam$ (even in sparse form), but it is sufficient to
maintain their $\ell_1$- and Euclidean norms, which require much less space.

\subsubsection{Computing the sample schedule}\label{sec:samplesched}
We now discuss how to compute the initial sample size $S_1$ at the beginning
of \staticalgo (line~\ref{algline:initialsize} of Alg.~\ref{alg:static}) and
the sample size $S_{i+1}$ at the end of iteration $i$ of the main
loop (line~\ref{algline:nextsize}). We remark that any sample schedule
$(S_i)_{i>0}$ can be used, and our method is an heuristic that nevertheless
exploits all available information at the end of each iteration to the most
possible extent, with the goal of increasing the chances that the stopping
condition is satisfied at the next iteration.

As initial sample size $S_1$ we choose
\begin{equation}\label{eq:firstsamplesize}
		S_1\ge
		\frac{(1+8\varepsilon+\sqrt{1+16\varepsilon})\ln(2/\delta)}{4\varepsilon^2}\enspace.
\end{equation}
To understand the intuition behind this choice,
recall~\eqref{eq:supdevboundrefined}, and consider that, at the beginning of the
algorithm, we obviously have no information about $\rade(\family,\Sam_1)$, except
that it is \emph{non-negative}. Consequently we also can not compute $\alpha$ as
in~\eqref{eq:alpha}, but we can easily see that $\alpha\in[0,1/2]$. From the
fact that $\rade(\family,\Sam)\ge 0$, we have that, for the
r.h.s.~of~\eqref{eq:supdevboundrefined} to be at most $\varepsilon$ (i.e., for
the stopping condition to be satisfied after the first iteration of the
algorithm), it is necessary that
\[
	\frac{\ln\frac{2}{\delta}}{2S_1\alpha(1-\alpha)}+\sqrt{\frac{\ln\frac{2}{\delta}}{2S_1}}
	\le\varepsilon\enspace.
\]
Then, using the fact that the above expression decreases as $\alpha$
increases, we use $\alpha=1/2$, i.e., its maximum attainable value, to obtain
the following inequality, where $S_1$ acts as the unknown:
\[
	\frac{2\ln(2/\delta)}{S_1}+\sqrt{\frac{\ln(2/\delta)}{2S_1}}\le\varepsilon\enspace.
\]
Solving for $S_1$ under the constraint of $S_1\ge 1$, $\delta\in(0,1)$,
$\varepsilon\in(0,1)$ gives the unique solution in~\eqref{eq:firstsamplesize}.

Computing the next sample size $S_{i+1}$ at the end of iteration $i$ (in the
pseudocode in Alg.~\ref{alg:static}, this is done by calling
\texttt{nextSampleSize()} on line~\ref{algline:nextsize}) is slightly
more involved. The intuition is to assume that $\omega^*_i$, which is an
upper bound to $\rade(\family,\Sam_i)$, is also an upper bound to
$\rade(\family,\Sam_{i+1})$, whatever $\Sam_{i+1}$ will be, and whatever size it
may have. At this point, we can ask what is the minimum size
$S_{i+1}=|\Sam_{i+1}|$ for which $\Delta_{\Sam_{i+1}}$ would be at most
$\varepsilon$, under the assumption that
$\rade(\family,\Sam_{i+1})\le\omega^*_i$. More formally, we want to solve the
inequality
\begin{align}\label{eq:nextsampleinequal}
	&\left(1+\frac{\ln\frac{2}{\delta}}{\sqrt{(2S_{i+1}\omega^*_i+\ln\frac{2}{\delta})\ln\frac{2}{\delta}}}
\right)\nonumber\\
&\times\left(\omega^*_i+\frac{\ln\frac{2}{\delta}+\sqrt{(2S_{i+1}\omega^*_i+\ln\frac{2}{\delta})\ln\frac{2}{\delta}}}{2S_{i+1}} \right)
	+ \sqrt{\frac{\ln\frac{2}{\delta}}{2S_{i+1}}}\le\varepsilon
\end{align}
where $S_{i+1}$ acts as the unknown. The l.h.s.~of this inequality is obtained
by plugging~\eqref{eq:alpha} into~\eqref{eq:supdevboundrefined} and using
$\omega^*_i$ in place of $\rade(\family,\Sam)$, $S_{i+1}$ in place of
$\ell$, and slightly reorganize the terms for readability. Finding the solution
to the above inequality requires computing the roots of the cubic equation (in
$x$)
\begin{align}\label{eq:cubic}
	-8&\left(\ln\frac{2}{\delta}\right)^3+\left(\ln\frac{2}{\delta}\right)^2(-16\omega^*_i+(1+4\varepsilon)^2)x\nonumber\\
	&-4\left(\ln\frac{2}{\delta}\right)(\omega^*_i-\varepsilon)^2(1+4\varepsilon)x^2
	+ 4(b-f)^4 x^3=0\enspace.
\end{align}
One can verify that the roots of this equation are all reals. The roots are
presented in~\Cref{tab:roots}. The solution to
inequality~\eqref{eq:nextsampleinequal} is that $S_{i+1}$ should be larger than
one of these roots, but which of the roots it should be larger than depends on
the values of $\omega^*_i$, $\delta$, and $\varepsilon$. In
practice, we compute each of the roots and then choose the smallest positive one
such that, when $S_{i+1}$ equals to this root, then~\eqref{eq:nextsampleinequal}
is satisfied.

\begin{table*}[t]
	\ifappendix
	\else
	\small
	\fi
	\centering
	\begin{tabular}{lc}
		\toprule
		\multicolumn{2}{l}{Let  $\left\{\begin{array}{l}z=48\omega^*_i+(1+4\varepsilon)^2\\
			w=-1-12\varepsilon+8(27(\omega^*_i)^2+(21-8\varepsilon)\varepsilon^2+18b(1+f)\\
	y=12\sqrt{3}|-1+2\omega^*_i+2\varepsilon|\sqrt{-(27(\omega^*_i)^2
-\varepsilon^2(1+16\varepsilon)-\omega^*_i(1+18\varepsilon))}\\
\theta=\arg(-w+jy)/3 \mbox{ where $j$ is the imaginary unity and $\arg(\ell)$ is
	the argument of the complex number $\ell$}\end{array}\right.$}\\
		\midrule
		Root 1 &
		$\frac{1}{3}(\ln\frac{2}{\delta})((1+4\varepsilon) -
		\sqrt{z}\cos\theta)(\omega^*_i-\varepsilon)^{-2} $\\
		Root 2 & $\frac{1}{6}(\ln\frac{2}{\delta})(2(1+4\varepsilon) +
\sqrt{z}(\cos\theta+\sqrt{3}\sin\theta))(\omega^*_i-\varepsilon)^{-2}$ \\
Root 3 & $\frac{1}{6}(\ln\frac{2}{\delta})(2(1+4\varepsilon) +
\sqrt{z}(\cos\theta-\sqrt{3}\sin\theta))(\omega^*_i-\varepsilon)^{-2}$ \\
		\bottomrule
	\end{tabular}
	\caption{Roots of the cubic equation~\eqref{eq:cubic} for the computation of
	the next sample size.}
	\label{tab:roots}
\end{table*}

The assumption $\rade(\family,\Sam_{i+1})\le\omega^*_i$, which is not guaranteed
to be true, is what makes our procedure for selecting the next sample size an
\emph{heuristics}.  Nevertheless, Using information available at the current
iteration to compute the sample size for the next iteration is more sensible
than having a fixed sample schedule, as it tunes the growth of the sample size
to the quality of the current sample. Moreover, it removes from the user the
burden of choosing a sample schedule, effectively eliminating one parameter of
the algorithm.

\subsection{Relative-error Top-k Approximation}\label{sec:topk}
In practical applications it is usually necessary (and sufficient) to identify
the vertices with highest \BC, as they act, in some sense, as the ``primary
information gateways'' of the network. In this section we present a variant
\staticalgotopk of \staticalgo to compute a high-quality approximation of the
set $\TOP(k,G)$ of the top-$k$ vertices with highest \BC in a graph $G$. The
approximation $\tbetw(w)$ returned by \staticalgotopk for a node $w$ is within a
\emph{multiplicative} factor $\varepsilon$ from its exact value $\betw(w)$,
rather than an additive factor $\varepsilon$ as in \staticalgo. This higher
accuracy has a cost in terms of the number of samples needed to compute the
approximations.

Formally, assume to order the nodes in the graph in decreasing order by \BC,
ties broken arbitrarily, and let $b_k$ be the \BC of the $k$-th node in this
ordering. Then the set $\TOP(k,G)$ is defined as the set of nodes with \BC at
least $b_k$, and can contain more than $k$ nodes:
\[
	\TOP(k,G)=\{(w,\betw(w) ~:~ v\in V \mbox{ and } \betw(w)\ge b_k\}\enspace.
\]

The algorithm \staticalgotopk follows the same approach
as the algorithm for the same task by~\citet[Sect.~5.2]{RiondatoK15} and works
in two phases. Let $\delta_1$ and $\delta_2$ be such that
$(1-\delta_1)(1-\delta_2)\ge(1-\delta)$. In the first phase, we run \staticalgo
with parameters $\varepsilon$ and $\delta_1$. Let $\ell'$ be the $k$-th highest
value $\tbetw(w)$ returned by \staticalgo, ties broken arbitrarily, and let
$\tilde{b}'=\ell'-\varepsilon$.

In the second phase, we use a variant \staticalgorel of \staticalgo with a
modified stopping condition based on relative-error versions
of~\Cref{thm:supdevbound,thm:radeboundw}
\ifappendix
 (\Cref{thm:raderel,thm:radeboundwr} from~\Cref{app:raderel})
\else
 (Thms.~11 and~12 from Appendix~D of the extended
 online version~\citep{RiondatoU16ext})
\fi
, which take $\varepsilon$, $\delta_2$,
and $\lambda=\tilde{b}'$ as parameters. The parameter $\lambda$ plays a role in
the stopping condition. Indeed, \staticalgorel is the same as \staticalgo, with
the only crucial difference in the definition of the quantity $\Delta_{\Sam_i}$,
which is now:
\begin{equation}\label{eq:deltairel}
	\Delta_{\Sam_i} =
	2\min_{s\in\mathbb{R}^+}\frac{1}{s}\ln\displaystyle\sum_{\mathbf{v}
	\in\mathcal{V}}\mathrm{exp}\left(\frac{s^2\|\mathbf{v}\|^2}{\lambda2S_i^2}\right)
		+ \frac{3}{\lambda}\sqrt{\frac{\ln(2/\delta)}{2S_i}}\enspace.
\end{equation}

\begin{theorem}\label{thm:staticalgorel}
	Let
	\[
		\widetilde{B}=\{\tbetw(w), w\in V\}
	\]
	bet the output of \staticalgorel. Then $\widetilde{B}$ is such that
	\[
		\Pr\left(\exists w\in V ~:~
		\frac{|\tbetw(v)-\betw(v)|}{\max\{\lambda,\betw(v)\}}>\varepsilon\right)<\delta\enspace.
	\]
\end{theorem}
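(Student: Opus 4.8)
The plan is to mirror the correctness argument for \staticalgo (\Cref{thm:correctness}), replacing the absolute-error deviation bounds with their relative-error counterparts. First I would set up the abstract Rademacher framework exactly as in \Cref{sec:description}: the domain is $\domain=\{(u,v)\in V\times V : u\neq v\}$, the family is $\family=\{f_w : w\in V\}$ with $f_w(u,v)=\sigma_{uv}(w)/\sigma_{uv}\in[0,1]$, and $\Sam$ is the sample of pairs maintained by \staticalgorel. With these identifications we have $\mean_\domain(f_w)=\betw(w)$ and $\mean_\Sam(f_w)=\tbetw(w)$, and the vectors $\mathbf{v}_w=(f_w(u_1,v_1),\dotsc,f_w(u_\ell,v_\ell))$ form the set $\mathcal{V}_\Sam$ on which $\Delta_{\Sam_i}$ in~\eqref{eq:deltairel} is computed. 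Crucially, since each $f_w$ takes values in $[0,1]$, its second moment is bounded by its mean, $\expectation[f_w^2]\le\expectation[f_w]=\betw(w)$; this variance-to-mean relationship is what makes relative-error (rather than purely additive) guarantees attainable.

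Next I would invoke the two relative-error theorems \Cref{thm:raderel,thm:radeboundwr}. \Cref{thm:raderel} bounds, with probability at least $1-\delta$ over the draw of $\Sam$, the normalized supremum deviation $\sup_{w\in V}\frac{|\tbetw(w)-\betw(w)|}{\max\{\lambda,\betw(w)\}}$ by twice a relative Rademacher average plus the concentration term $\frac{3}{\lambda}\sqrt{\ln(2/\delta)/(2\ell)}$, and \Cref{thm:radeboundwr} bounds that relative Rademacher average by a minimization of the same form as in \Cref{thm:radeboundw}, namely $\min_{s\in\mathbb{R}^+}\frac{1}{s}\ln\sum_{\mathbf{v}\in\mathcal{V}_\Sam}\exp\bigl(s^2\|\mathbf{v}\|^2/(\lambda 2\ell^2)\bigr)$. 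Chaining the two inequalities shows that the resulting right-hand side is exactly the quantity $\Delta_{\Sam_i}$ of~\eqref{eq:deltairel} when $\ell=S_i$. Therefore, whenever the algorithm halts at an iteration $i$ with $\Delta_{\Sam_i}\le\varepsilon$, we have, with probability at least $1-\delta$,
\[
	\sup_{w\in V}\frac{|\tbetw(w)-\betw(w)|}{\max\{\lambda,\betw(w)\}}\le\Delta_{\Sam_i}\le\varepsilon,
\]
which is precisely the complement of the event in the statement; hence that event has probability strictly less than $\delta$.

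The main obstacle is not the chaining above --- that step is mechanical once the two relative-error theorems are in hand --- but rather establishing those theorems and reconciling their normalization with the $\max\{\lambda,\betw(w)\}$ appearing in the denominator of the claim. The subtlety is that the Rademacher term in~\eqref{eq:deltairel} rescales by the fixed floor $\lambda$ (through the factor $1/(\lambda 2 S_i^2)$ in the exponent) rather than pointwise by $\max\{\lambda,\betw(w)\}$, so one must argue that controlling the $\lambda$-normalized deviation suffices to control the $\max$-normalized deviation uniformly over $w$. This is handled inside the relative-error deviation bound via the variance-to-mean inequality noted above: for nodes with $\betw(w)\ge\lambda$ the floor is loose and genuine multiplicative accuracy is obtained, while for nodes with $\betw(w)<\lambda$ the floor degrades the guarantee gracefully to an additive error of $\varepsilon\lambda$. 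A secondary point to verify is that, because the sample size at which \staticalgorel stops is data-dependent, the single-$\delta$ confidence must be applied at the stopping iteration exactly as in \Cref{thm:correctness}; I would check that the conditional (on $\Sam$) nature of the Rademacher bound makes this legitimate without an additional union bound over the sample schedule.
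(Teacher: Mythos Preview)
Your proposal is correct and follows essentially the same route as the paper: the paper's proof is a one-line remark that the argument mirrors \Cref{thm:correctness}, replacing \Cref{thm:supdevboundrefined,thm:radeboundw} by their relative-error analogues \Cref{thm:raderel,thm:radeboundwr} and using the modified $\Delta_{\Sam_i}$ of~\eqref{eq:deltairel}. The extra ingredients you flag---the variance-to-mean inequality and the reconciliation of the fixed $\lambda$-normalization with the $\max\{\lambda,\betw(w)\}$ denominator---are already absorbed into the statement of \Cref{thm:raderel} (whose left-hand side carries $\max\{p,\expectation_\pi[f]\}$ while the right-hand side depends only on $p$), so no separate argument is required at the level of \Cref{thm:staticalgorel}.
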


The proof follows the same steps as the proof for~\Cref{thm:correctness}, using
the above definition of $\Delta_{\Sam_i}$ and
applying
\ifappendix
\Cref{thm:raderel,thm:radeboundwr} from~\Cref{app:raderel} %
\else
Thms.~11 and~12 from Appendix~D of the extended online version~\citep{RiondatoU16ext} %
\fi
 instead of~\Cref{thm:supdevboundrefined,thm:radeboundw}.

Let $\ell''$ be the $k$-th highest value $\tbetw(w)$ returned by \staticalgorel
and let $\tilde{b}''=\ell''/(1+\varepsilon)$. \staticalgotopk then returns the
set
\[
	\widetilde{\TOP}(k,G)=\{(w,\tbetw(w)) ~:~ w\in V \mbox{ and } \tbetw(w)\ge
	\tilde{b}''\}\enspace.
\]
We have the following result showing the properties of the collection
$\widetilde{\TOP}(k,G)$.

\begin{theorem}\label{thm:topk}
	With probability at least $1-\delta$, the set $\widetilde{\TOP}(k,G)$ is
	such that:
	\begin{enumerate}
		\item for any pair $(v,\betw(v))\in\TOP(k,G)$, there is one pair
			$(v,\tbetw(v))\in\widetilde{\TOP}(k,G)$ (i.e., we return a superset
			of the top-$k$ nodes with highest betweenness) and this pair is such
			that $|\tbetw(w)-\betw(w)|\le\varepsilon\betw(w)$;
		\item for any pair $(w,\tbetw(w))\in\widetilde{\TOP}(k,G)$ such that
			$(w,\betw(w))\not\in\TOP(k,G)$ (i.e., any false positive) we have
			that $\tbetw(w)\le(1+\varepsilon)b_k$ (i.e., the false positives, if
			any, are among the nodes returned by \staticalgotopk with lower \BC estimation).
	\end{enumerate}
\end{theorem}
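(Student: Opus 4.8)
The plan is to condition on the joint success of the two phases and then reduce everything to elementary counting of how many vertices can sit above the threshold $b_k$ in true or estimated \BC. First I would fix the two high-probability events. By \Cref{thm:correctness}, the additive run of \staticalgo in phase one yields estimates $\tbetw'(\cdot)$ with $|\tbetw'(w)-\betw(w)|\le\varepsilon$ for all $w\in V$ with probability at least $1-\delta_1$; call this event $E_1$. By \Cref{thm:staticalgorel} applied with $\lambda=\tilde b'$, the relative run \staticalgorel in phase two yields the final estimates $\tbetw(\cdot)$ with $|\tbetw(w)-\betw(w)|\le\varepsilon\max\{\lambda,\betw(w)\}$ for all $w$ with probability at least $1-\delta_2$; call this event $E_2$. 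Since the two runs draw independent samples, $\Pr(E_1\cap E_2)\ge(1-\delta_1)(1-\delta_2)\ge 1-\delta$, so it suffices to prove both claims deterministically under $E_1\cap E_2$.

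The bridge between the phases is the inequality $\lambda=\tilde b'\le b_k$, which I would establish under $E_1$ by counting. By definition of $b_k$, at most $k-1$ vertices have $\betw(w)>b_k$; under $E_1$ any vertex with $\tbetw'(w)>b_k+\varepsilon$ must have $\betw(w)>b_k$, so at most $k-1$ vertices satisfy $\tbetw'(w)>b_k+\varepsilon$. Hence the $k$-th largest phase-one estimate obeys $\ell'\le b_k+\varepsilon$, giving $\tilde b'=\ell'-\varepsilon\le b_k$. This is the crucial consequence: every $v$ with $(v,\betw(v))\in\TOP(k,G)$ has $\betw(v)\ge b_k\ge\lambda$, so the phase-two guarantee collapses to the purely multiplicative bound $|\tbetw(v)-\betw(v)|\le\varepsilon\betw(v)$, which is exactly the error bound asserted in part~1.

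Part~2 then follows from the same threshold fact. A false positive $w$ has $(w,\betw(w))\notin\TOP(k,G)$, i.e.\ $\betw(w)<b_k$, so $\max\{\lambda,\betw(w)\}\le b_k$ and therefore $\tbetw(w)\le\betw(w)+\varepsilon\max\{\lambda,\betw(w)\}\le(1+\varepsilon)b_k$, as claimed. For the membership (superset) half of part~1 I would argue symmetrically about $\ell''$: since every vertex with $\betw(w)\le b_k$ has $\tbetw(w)\le(1+\varepsilon)b_k$, at most $k-1$ vertices can exceed $(1+\varepsilon)b_k$ in estimate, whence $\ell''\le(1+\varepsilon)b_k$ and $\tilde b''=\ell''/(1+\varepsilon)\le b_k$; dually, the at-least-$k$ vertices of $\TOP(k,G)$ each have $\tbetw(w)\ge(1-\varepsilon)b_k$, which pins $\ell''$ from below. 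The membership claim is then that the deflated threshold $\tilde b''$ lies at or below the phase-two estimate of every true top-$k$ vertex.

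I expect this membership direction to be the main obstacle. The relative-error bound and part~2 are immediate once $\lambda\le b_k$ is secured, but showing that no true top-$k$ vertex $v$ is pushed below $\tilde b''$ requires that the $1/(1+\varepsilon)$ deflation of $\ell''$ compensate for the gap between a possibly over-estimated $k$-th vertex and a possibly under-estimated $v$; this is precisely the slack the threshold was engineered to provide. The delicate case is ties at value $b_k$, where $\TOP(k,G)$ may contain strictly more than $k$ vertices. Here I would lean on the ``$\ge k$ vertices with estimate $\ge(1-\varepsilon)b_k$'' count together with the arbitrary tie-breaking used to define $\ell''$, following the two-phase scheme of~\citet[Sect.~5.2]{RiondatoK15}, and carefully verify that the admissible over-estimation of competing vertices cannot drive $\tilde b''$ above the estimates of the genuine top-$k$ vertices.
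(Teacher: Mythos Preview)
Your approach matches the paper's: condition on both phases succeeding (probability at least $(1-\delta_1)(1-\delta_2)\ge 1-\delta$), establish $\lambda=\tilde b'\le b_k$ from the phase-one additive guarantee, and then read off the multiplicative error bound in part~1 and the false-positive bound in part~2 directly from \Cref{thm:staticalgorel} using $\max\{\lambda,\betw(w)\}\le b_k$ as appropriate. You are in fact more careful than the paper on the threshold inequalities: the paper's appendix asserts $\ell'\le b_k$ and $\ell''\le b_k$ outright, whereas your counting argument correctly yields only the weaker (and sufficient) $\tilde b'=\ell'-\varepsilon\le b_k$ and $\tilde b''=\ell''/(1+\varepsilon)\le b_k$.

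The membership (superset) direction you flag as the main obstacle is genuinely the sticking point, and the paper's proof does not resolve it either---it simply asserts the inclusion $\TOP(k,G)\subseteq\widetilde{\TOP}(k,G)$ after the $\ell''\le b_k$ claim. Your plan to ``carefully verify that the admissible over-estimation of competing vertices cannot drive $\tilde b''$ above the estimates of the genuine top-$k$ vertices'' will not close as written: under $E_2$, a boundary vertex $v$ with $\betw(v)=b_k$ can have $\tbetw(v)$ as low as $(1-\varepsilon)b_k$, while a competitor $w$ with $\betw(w)$ just below $b_k$ can have $\tbetw(w)$ arbitrarily close to $(1+\varepsilon)b_k$; then $\tilde b''=\ell''/(1+\varepsilon)$ can sit near $b_k>(1-\varepsilon)b_k\ge\tbetw(v)$, excluding $v$. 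So with the threshold $\ell''/(1+\varepsilon)$ the superset claim does not follow from the stated phase-two guarantee alone; this gap is shared by your proposal and the paper's sketch, and closing it would require either a stronger deflation of the threshold or an additional argument not present in either.
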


The proof and the pseudocode for \staticalgotopk can
be found in
\ifappendix
\Cref{app:topk}. %
\else
Appendix~A of the extended online version~\citep{RiondatoU16ext}. %
\fi

\subsection{Special Cases}\label{sec:unique}
In this section we consider some special restricted settings that make computing
an high-quality approximation of the \BC of all nodes easier. One example of
such restricted settings is when the graph is \emph{undirected} and every pair
of distinct nodes is either connected with a \emph{single} SP or there is no
path between the nodes. This is the case for many road networks, where the
unique SP condition is often enforced~\citep{GeisbergerSS08}. \citet[Lemma
2]{RiondatoK15} showed that, in this case, the number of samples needed to
compute a high-quality approximation of the \BC of all nodes is
\emph{independent} on any property of the graph, and only depends on the quality
controlling parameters $\varepsilon$ and $\delta$. The algorithm
by~\citet{RiondatoK15} works differently from \staticalgo, as it samples one SP
at a time and only updates the \BC estimation of nodes along this path, rather
than sampling a pair of nodes and updating the estimation of all nodes on any
SPs between the sampled nodes. Nevertheless, as shown in the following theorem,
we can actually even generalize the result by~\citet{RiondatoK15}, as
shown in~\Cref{thm:unique}. The statement and the proof of this theorem use
pseudodimension~\citep{Pollard84}, an extension of the Vapnik-Chervonenkis (VC)
dimension to real-valued functions. Details about pseudodimension and the proof
of~\Cref{thm:unique} can be found in
\ifappendix
\Cref{app:unique}. %
\else
Appendix~B of the extended online version~\citep{RiondatoU16ext}. %
\fi
\Cref{corol:unique} shows how to modify \staticalgo to take~\Cref{thm:unique}
into account.

\begin{theorem}\label{thm:unique}
	Let $G=(V,E)$ be a graph such that it is possible to partition the set
	$\domain=\{(u,v)\in V\times V, u\neq v\}$ in two classes: a class $A=\{(u^*,
	v^*)\}$ containing a single pair of different nodes $(u^*,v^*)$ such that
	$\sigma_{u^*v^*}\le 2$ (i.e., connected by either at most two SPs or not
	connected), and a class $B=\domain\setminus A$ of pairs $(u,v)$ of nodes
	with $\sigma_{uv}\le 1$ (i.e., either connected by a single SP or not
	connected). Then the pseudodimension of the family of functions
	\[
		\{f_w ~:~ \domain \to [0,1], w\in V\},
	\]
	where $f_w$ is defined as in~\eqref{eq:functionf}, is at most $3$.
\end{theorem}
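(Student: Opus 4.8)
The plan is to argue directly from the definition of pseudodimension: to prove $\PD(\family)\le 3$ I would assume for contradiction that some set of four pairs $D=\{d_1,d_2,d_3,d_4\}\subseteq\domain$ is pseudo-shattered, witnessed by thresholds $r_1,\dots,r_4$, and derive a contradiction. The first step is to record the value structure that matters: for every pair $d=(u,v)\in B$ one has $\sigma_{uv}\le 1$, so $f_w(d)\in\{0,1\}$ for all $w$, whereas for the single exceptional pair $(u^*,v^*)$ one has $f_w\in\{0,1/2,1\}$. Hence on each coordinate the threshold $r_i$ splits the witnessing nodes into two classes, and pseudo-shattering $D$ is equivalent to exhibiting four node sets $R_1,\dots,R_4\subseteq V$, where $R_i=\{w: f_w(d_i)>r_i\}$, whose $16$ Boolean intersection cells are all nonempty. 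For a pair in $B$ with a unique shortest path $p$, $R_i$ is just the set of internal nodes of $p$; for $(u^*,v^*)$, $R_i$ is either the nodes lying on at least one of its shortest paths or the nodes lying on both, depending on how $r_i$ compares to $1/2$.

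Two structural facts drive the argument. First (subpath uniqueness): if $p,p'$ are shortest paths that are each the \emph{unique} shortest path between their own endpoints, then $p\cap p'$ is a contiguous subpath of each, \emph{unless} the split is caused by the exceptional pair, because any two common nodes $w,w'$ with $(w,w')\neq(u^*,v^*)$ determine a unique shortest subpath shared by $p$ and $p'$. Second, the two shortest paths of $(u^*,v^*)$ must be internally vertex-disjoint: a shared internal node would force agreement on both sides and hence equality, while an internal branch point would create a \emph{second} pair carrying two shortest paths, contradicting the hypothesis on $B$. Consequently no internal node lies on both of them, so in fact $f_w(u^*,v^*)\in\{0,1/2\}$ always, the ``both'' option is vacuous, and the exceptional $R_i$ is simply the union of the interiors of its two disjoint shortest paths, which together form a cycle through $u^*$ and $v^*$.

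The core is a cell-counting argument along a single path. Pick one coordinate whose range is a single path $p$ and view $p$ as a line; a node of $p$ realizes a pattern on the other three coordinates according to its membership in their three restrictions to $p$. By subpath uniqueness each such restriction is an interval, and $k$ intervals on a line determine at most $2k+1$ cells; with $k=3$ this gives at most $7<8=2^3$ distinct patterns, contradicting pseudo-shattering. The only way this can fail is if some restriction is non-contiguous, and by the first structural fact this happens only through the cycle of $(u^*,v^*)$, and only when $p$ contains \emph{both} $u^*$ and $v^*$. Thus, unless every one of the four involved paths runs through both $u^*$ and $v^*$, one can choose $p$ to miss one of them, all restrictions are genuine intervals, and the bound $2k+1<8$ finishes the proof.

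The remaining, and hardest, case is when all four paths pass through both $u^*$ and $v^*$, each traversing the cycle via one of its two disjoint sides. Pigeonholing, at least two use the same side; and if all four use the same side every pairwise intersection is again contiguous and the $2k+1$ bound finishes the proof, so both sides must be used. Slicing along a path that uses one side, the same-side ranges restrict to intervals containing the whole block $Z$ (the interior of that side) while the opposite-side ranges restrict to intervals with \emph{exactly} $Z$ deleted; crucially, every range still contains both $u^*$ and $v^*$, which flank $Z$ (modulo the easy case where $u^*$ or $v^*$ is an endpoint of one of the paths). Tracking where the required patterns can sit then forces the endpoints of the opposite-side intervals to nest in mutually contradictory ways: in the balanced split one checks that the pattern consisting of exactly the two opposite-side coordinates cannot be realized, since any witness would have to lie inside both opposite-side intervals yet outside both same-side intervals, which the forced ordering of endpoints around $Z$ prohibits. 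I expect this endpoint/nesting bookkeeping around the single cycle of $(u^*,v^*)$ — enumerating how the four paths split across its two sides and showing each split kills one required pattern — to be the main obstacle; by contrast, the pure unique-shortest-path case is immediate from $2k+1<8$.
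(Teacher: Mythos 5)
Your overall strategy---sweeping along one unique-SP coordinate $p$ and counting the cells cut out by the restrictions of the other three ranges, using the bound of $2k+1$ cells for $k$ intervals against the $2^3=8$ patterns that shattering requires---is genuinely different from the paper's argument, but the reduction that makes the sweep work has a real gap. You claim a restriction can fail to be an interval of $p$ only when $p$ contains \emph{both} $u^*$ and $v^*$. That is correct for the three unique-SP coordinates, but false for the exceptional coordinate. Its range is the set of \emph{internal} vertices of the two (internally disjoint, as you correctly argue) shortest paths $p'$ and $p''$ between $u^*$ and $v^*$; in particular $u^*$ and $v^*$ themselves do \emph{not} belong to that range. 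Hence a unique shortest path $p$ passing through $u^*$ but not $v^*$ may run into $u^*$ along $p'$ and leave it along $p''$ (its subpath $w,u^*,w'$, with $w$ internal to $p'$ and $w'$ internal to $p''$, is a perfectly legitimate unique SP), and then the restriction of the exceptional range to $p$ consists of \emph{two} intervals separated by the single vertex $u^*$. With four intervals on the line the bound becomes $2\cdot 4+1=9\ge 8$, and the counting no longer yields a contradiction. A second unhandled possibility is a gap avoiding both $u^*$ and $v^*$: a ``chord'' subpath of $p$ joining an internal vertex $w$ of $p'$ to an internal vertex $w'$ of $p''$ while avoiding the cycle. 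Your structural facts exclude this only when $d(w,w')=|d(u^*,w)-d(u^*,w')|$ (equality would manufacture a third SP between $u^*$ and $v^*$), not in general. So the step from ``not all four paths contain both $u^*$ and $v^*$'' to ``all restrictions are genuine intervals'' does not follow.

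In addition, the case you yourself isolate as hardest (all four paths traversing the cycle) is not actually proved: ``I expect this endpoint/nesting bookkeeping \dots to be the main obstacle'' is a plan, not an argument, and it is precisely in these interaction cases that the work lies. For contrast, the paper's proof (\Cref{lem:unique} in \Cref{app:unique}) avoids the sweep altogether: it disposes of the all-unique-SP case by invoking \citep[Lemma 2]{RiondatoK15}, and in the mixed case it works with the shattering witnesses directly, proving---via two facts about how unique SPs can meet and about the node witnessing all three unique pairs simultaneously---that the witnesses $v_{i,4}$ and $v_{j,4}$ (nodes whose ranges select the $i$-th, resp.\ $j$-th, unique pair together with the exceptional pair) can never lie on the same SP of the exceptional pair for $i\neq j$; since there are three such witnesses and only two SPs, the pigeonhole principle gives the contradiction. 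To salvage your approach you would have to handle gaps at a single endpoint ($u^*$ or $v^*$ alone on $p$) and chord gaps, which is essentially where the paper's witness-placement analysis does its work.
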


\begin{corollary}\label{corol:unique}
	Assume to modify \staticalgo with the additional stopping condition
	instructing to return the set $\tilde{B}=\{\tbetw(w), w\in V\}$ after a
	total of
	\[
		r=\frac{c}{\varepsilon^2}\left(3+\ln\frac{1}{\delta}\right)
	\]
	pairs of nodes have been sampled from $\domain$. 
	The set $\tilde{B}$ is s.t.
	\[
		\Pr(\exists w\in V ~:~ |\tbetw(w)-\betw(w)|>\varepsilon)<\delta\enspace.
	\]
\end{corollary}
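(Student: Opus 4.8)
The plan is to reduce the corollary to a direct application of the classical sample-complexity bound for families of bounded pseudodimension, with \Cref{thm:unique} supplying the pseudodimension bound. First I would record the exact correspondence between the quantities in the statement and the sample-average framework of \Cref{sec:radeprelims}. As already derived in \Cref{sec:description}, we have $\betw(w)=\mean_\domain(f_w)$ and, by definition, $\tbetw(w)=\mean_\Sam(f_w)$, where $\Sam$ is the collection of $r$ pairs drawn independently and uniformly at random from $\domain$. Hence the failure event in the corollary is literally the event that the maximum deviation over $\family$ exceeds $\varepsilon$:
\[
  \{\exists w\in V : |\tbetw(w)-\betw(w)|>\varepsilon\}
  = \left\{\sup_{f\in\family}|\mean_\Sam(f)-\mean_\domain(f)|>\varepsilon\right\}.
\]
So it suffices to upper bound the probability of the right-hand event by $\delta$.

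Next I would invoke \Cref{thm:unique}: under the stated partition of $\domain$ into the singleton class $A$ with $\sigma_{u^*v^*}\le 2$ and the class $B$ with $\sigma_{uv}\le 1$, the pseudodimension of $\family=\{f_w : w\in V\}$ is at most $3$. I would then apply the standard $\varepsilon$-approximation (uniform convergence) theorem for real-valued function families of bounded pseudodimension — the pseudodimension analogue of the VC-dimension sample bound used by \citet{RiondatoK15}: there is an absolute constant $c$ such that, for any family of functions from $\domain$ to $[0,1]$ with pseudodimension at most $d$, a sample $\Sam$ of size $\ell\ge\frac{c}{\varepsilon^2}\left(d+\ln\frac{1}{\delta}\right)$ drawn independently and uniformly at random satisfies $\Pr(\sup_{f\in\family}|\mean_\Sam(f)-\mean_\domain(f)|>\varepsilon)<\delta$. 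The hypotheses hold here because each $f_w(u,v)=\sigma_{uv}(w)/\sigma_{uv}\in[0,1]$ and the pairs are sampled i.i.d. uniformly from $\domain$, exactly the finite-domain/uniform-distribution regime fixed at the start of \Cref{sec:radeprelims}.

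Finally I would substitute $d=3$ into this bound. A sample of size $r=\frac{c}{\varepsilon^2}\left(3+\ln\frac{1}{\delta}\right)$ is then an $\varepsilon$-approximation with probability at least $1-\delta$, i.e. $\Pr(\exists w\in V : |\tbetw(w)-\betw(w)|>\varepsilon)<\delta$, which is precisely the claim. Note that the additional stopping condition of \Cref{corol:unique} simply caps the total sample size at $r$, so the returned $\widetilde B=\{\|M[w]\|_1/r, w\in V\}$ is exactly this collection of sample averages.

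The genuine content lies entirely in \Cref{thm:unique}; once the pseudodimension is pinned at $3$, the corollary is a routine specialization, and I would expect no real obstacle in the deviation argument itself. The one point demanding care is aligning the constant $c$ in the statement with the absolute constant appearing in the pseudodimension $\varepsilon$-sample theorem, and confirming that the version cited is the improved one (of the form $\frac{c}{\varepsilon^2}(d+\ln\frac{1}{\delta})$, without an extra $\ln(1/\varepsilon)$ factor), so that plugging in $d=3$ reproduces the stated $r$ verbatim.
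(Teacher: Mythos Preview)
Your proposal is correct and matches the paper's own argument essentially verbatim: the paper states that \Cref{corol:unique} follows directly from \Cref{thm:unique} combined with the pseudodimension sample-complexity bound (\Cref{thm:eapprox}, due to~\citet{LiLS01}), which is exactly the reduction you outline. Your concern about the constant is also addressed there, since \Cref{thm:eapprox} is stated in the form $\frac{c}{\varepsilon^2}(d+\log\frac{1}{\delta})$ with an absolute constant $c$.
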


The bound in~\Cref{thm:unique} is strict, i.e., there exists a graph for which
the pseudodimension is exactly $3$~\citep[Lemma 4]{RiondatoK15}. Moreover, as
soon as we relax the requirement in~\Cref{thm:unique} and allow two pairs of
nodes to be connected by two SPs, there are graphs with pseudodimension $4$
\ifappendix
(\Cref{lem:uniquetight} in \Cref{app:unique}). %
\else
(Lemma~4 in Appendix~B of the extended online
version~\citep{RiondatoU16ext}).
\fi

For the case of \emph{directed} networks, it is currently an open
question whether a high-quality (i.e., within $\varepsilon$) approximation of
the \BC of all nodes can be computed from a sample whose size is independent
of properties of the graph, but it is known that, even if possible, the
constant would not be the same as for the undirected
case~\citep[Sect.~4.1]{RiondatoK15}.

We conjecture that, given some information on how many pair of nodes are
connected by $x$ shortest paths, for $x\ge 0$, it should be possible to derive a
strict bound to the pseudodimension associated to the graph.

\subsection{Improved Estimators}\label{sec:imprest}
\citet{GeisbergerSS08} present an improved estimator for \BC using random
sampling. Their experimental results show that the quality of the
approximation is significantly improved, but they do not present any theoretical
analysis. Their algorithm, which follows the work of~\citet{BrandesP07} differs
from ours as it samples vertices and performs a Single-Source-Shortest-Paths (SSSP) computation from each of
the sampled vertices. We can use an adaptation of their estimator in a variant
of our algorithm, and we can prove that this variant is still probabilistically
guaranteed to compute an $(\varepsilon,\delta)$-approximation of the \BC of all
nodes, therefore removing the main limitation of the original work, which
offered no quality guarantees. We now present this variant considering, for ease
of discussion, the special case of the linear scaling estimator
by~\citet{GeisbergerSS08}, this technique can be extended to the generic
parameterized estimators they present.

The intuition behind the improved estimator is to increase the estimation of the
\BC for a node $w$ proportionally to the ratio between the SP distance
$\mathsf{d}(u,w)$
from the first component $u$ of the pair $(u,v)$ to $w$ and the SP
distance $\mathsf{d}(u,v)$ from $u$ to $v$. Rather than sampling pairs
of nodes, the algorithm samples triples $(u,v,d)$, where $d$ is a \emph{direction},
(either $\leftarrow$ or $\rightarrow$), and updates the betweenness estimation
differently depending on $d$, as follows. Let
$\domain'=\domain\times\{\leftarrow,\rightarrow\}$ and for each $w\in
V$, define the function $g_w$ from $\domain'$ to $[0,1]$ as:
\[
	g_w(u,v,d)=\left\{\begin{array}{ll}
		\frac{\sigma_{uv}(w)}{\sigma_{uv}}\frac{\mathsf{d}(u,w)}{\mathsf{d}(u,v)} & \mbox{if } d=\rightarrow\\
			\frac{\sigma_{uv}(w)}{\sigma_{uv}}\left(1-\frac{\mathsf{d}(u,w)}{\mathsf{d}(u,v)}\right) & \mbox{if } d=\leftarrow
	\end{array}\right.
\]
Let $\Sam$ be a collection of $\ell$ elements of $\domain'$ sampled uniformly
and independently at random with replacement. Our estimation $\tbetw(w)$ of the
\BC of a node $w$ is
\[
	\tbetw(w)=\frac{2}{\ell}\sum_{(u,v,d)\in\Sam}g_w(u,v,d)=2\mean_\Sam(f_w)\enspace.
\]
The presence of the factor $2$ in the estimator calls for a single minor
adjustment in the definition of $\Delta_{\Sam_i}$ which, for this variant of
\staticalgo, becomes
\[
		\Delta_{\Sam_i} =
		\frac{\omega^*_i}{1-\alpha_i} +
			\frac{\ln\frac{2}{\delta}}{2S_{i}\alpha_i(1-\alpha_i)}+\sqrt{\frac{2\ln\frac{2}{\delta}}{S_{i}}}
\]
i.e., w.r.t.~the original definition of $\Delta_{\Sam_i}$, there is an additional
factor $4$ inside the square root of the third term on the
r.h.s..

The output of this variant of \staticalgo is still a high-quality approximation
of the \BC of all nodes, i.e., \Cref{thm:correctness} still holds with this new
definition of $\Delta_{\Sam_i}$. This is due to the fact that the results on the
Rademacher averages presented in~\Cref{sec:radeprelims} can be extended to
families of functions whose co-domain is an interval $[a,b]$, rather than just
$[0,1]$~\citep{ShalevSBD14}.

\section{Dynamic Graph BC Approximation}\label{sec:dynamic}
In this section we  present an algorithm, named \dynamicalgo, that computes and
keeps up to date an high-quality approximation of the \BC of all nodes in a
\emph{fully dynamic graph}, i.e., in a graph where vertex and edges can be added
or removed over time. Our algorithm leverages on the recent work
by~\citet{HayashiAY15}, who introduced two fast data structures called the
Hypergraph Sketch and the Two-Ball Index: the Hypergraph Sketch stores the \BC
estimations for all nodes, while the Two-Ball Index is used to store the SP DAGs
and to understand which parts of the Hypergraph Sketch needs to be modified
after an update to the graph (i.e., an edge or vertex insertion or deletion).
\citet{HayashiAY15} show how to populated and update these data structures to
maintain an $(\varepsilon,\delta)$-approximation of the \BC of all nodes in a
fully dynamic graph. Using the novel data structures results in
orders-of-magnitude speedups w.r.t.~previous
contributions~\citep{BergaminiM15,BergaminiM15arXiv}. The algorithm
by~\citet{HayashiAY15} is based on a static random sampling approach which is
identical to the one described for \staticalgo, i.e., pairs of nodes are sampled
and the \BC estimation of the nodes along the SPs between the two nodes are
updated as necessary. Their analysis on the number of samples necessary to
obtain an $(\varepsilon,\delta)$-approximation of the \BC of all nodes uses the
union bound, resulting in a number of samples that depends on the logarithm of
the number of nodes in the graph, i.e., $O(\varepsilon^-2(\log(|V|/\delta)))$
pairs of nodes must be sampled.

\dynamicalgo builds and improves over the algorithm presented
by~\citet{HayashiAY15} as follows. Instead of using a static random sampling
approach with a fixed sample size, we use the progressive sampling approach and
the stopping condition that we use in \staticalgo to understand when we sampled
enough to first populate the Hypegraph Sketch and the Two-Ball Index. Then,
after each update to the graph, we perform the same operations as in the
algorithm by~\citet{HayashiAY15}, with the crucial addition, after these
operation have been performed, of keeping the set $\mathcal{V}_\Sam$ of vectors
and the map $M$ (already used in \staticalgo) up to date, and checking whether
the stopping condition is still satisfied. If it is not, additional pairs of
nodes are sampled and the Hypergraph Sketch and the Two-Ball Index are updated
with the estimations resulting from these additional samples. The sampling of
additional pairs continues until the stopping condition is satisfied,
potentially according to a sample schedule either automatic, or specified by the
user. As we show in~\Cref{sec:exper}, the overhead of additional checks of the
stopping condition is minimal. On the other hand, the use of the progressive
sampling scheme based on the Rademacher averages allows us to sample much fewer
pairs of nodes than in the static sampling case based on the union bound:
\citep{RiondatoK15} already showed that it is possible to sample much less than
$O(\log|V|)$ nodes, and, as we show in our experiments, our sample sizes are
even smaller than the ones by~\citep{RiondatoK15}. The saving in the number of
samples results in a huge speedup, as the running time of the algorithms are, in
a first approximation, linear in the number of samples, and in a reduction in
the amount of space required to store the data structures, as they now store
information about fewer SP DAGs.


\begin{theorem}\label{thm:dynamiccorrectness}
	The set $\widetilde{B}=\{\tbetw(w), w\in V\}$ returned by \dynamicalgo after
	each update has been processed is such that
	\[
		\Pr(\exists w\in V \mbox{ s.t. }
		|\tbetw(w)-\betw(w)|>\varepsilon)<\delta\enspace.
	\]
\end{theorem}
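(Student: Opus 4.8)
The plan is to reduce the claim, for the graph as it stands immediately after a given update has been processed, to the static correctness argument of \Cref{thm:correctness}. The key observation is that at the instant \dynamicalgo returns $\widetilde{B}$, the quantities it maintains describe exactly the situation analyzed for \staticalgo: a collection $\Sam$ of node pairs drawn uniformly and independently from the current domain $\domain$, the associated set of vectors $\mathcal{V}_\Sam$ (together with the map $M$), and the derived stopping value $\Delta_\Sam$. If I can show that these quantities are correctly maintained with respect to the \emph{current} graph and that the stopping condition $\Delta_\Sam\le\varepsilon$ holds when $\widetilde{B}$ is returned, then \Cref{thm:supdevboundrefined,thm:radeboundw} apply verbatim and yield the desired bound for the current graph.

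First I would verify two maintenance invariants. (i)~The sample $\Sam$ must remain a uniform and independent sample from the current domain $\domain=\{(u,v):u\neq v\}$ after the update; this is guaranteed by the update mechanism inherited from \citet{HayashiAY15}, which resamples pairs as needed upon vertex insertions or deletions so that $\Sam$ is always a valid i.i.d.\ sample from the present vertex set. (ii)~For each sampled pair, the shortest-path information stored in the Two-Ball Index, and hence the vectors in $\mathcal{V}_\Sam$, the map $M$, and their $\ell_1$- and Euclidean norms, must reflect the shortest paths of the current graph, so that $\tbetw(w)=\mean_\Sam(f_w)$ and $\Delta_\Sam$ are computed from the correct functions $f_w$. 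This follows because \dynamicalgo performs, after each update, the same vector and norm bookkeeping as \staticalgo (the backtracking of \Cref{alg:static}) on the SP DAGs kept current by the Hypergraph Sketch.

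Next I would invoke the stopping rule. Since after processing the update \dynamicalgo keeps sampling additional pairs, updating the sketch and $\mathcal{V}_\Sam$ accordingly, until $\Delta_\Sam\le\varepsilon$, the stopping condition necessarily holds at the moment $\widetilde{B}$ is returned. Because $\omega^*$ upper-bounds $\rade(\family,\Sam)$ by \Cref{thm:radeboundw}, and $\Delta_\Sam$ is the right-hand side of~\eqref{eq:supdevboundrefined} evaluated at $\omega^*$, monotonicity gives that, with probability at least $1-\delta$, $\sup_{w\in V}|\tbetw(w)-\betw(w)|\le\Delta_\Sam\le\varepsilon$, exactly as in the proof of \Cref{thm:correctness}. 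This is precisely the conclusion of the theorem for the current graph.

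The hard part will be invariant~(i): arguing rigorously that the maintained $\Sam$ is a \emph{bona fide} uniform i.i.d.\ sample from the changing domain. Vertex insertions and deletions alter $|\domain|$, so a sample that was uniform before an update need not be uniform afterwards, and one must appeal carefully to the resampling guarantees of the Hypergraph Sketch and Two-Ball Index to restore uniformity before any stopping check is trusted. A secondary subtlety, shared with the static analysis, is that the sample size at which stopping occurs is a data-dependent random stopping time, so applying the fixed-$\ell$ bound of \Cref{thm:supdevboundrefined} at the stopping instant should be justified, e.g.\ by noting that the bound holds for the sample actually present when the condition is first met.
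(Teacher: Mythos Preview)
Your proposal is correct and follows essentially the same approach as the paper, which dispatches the theorem in a single sentence: the claim follows from the correctness of the data-structure maintenance of \citet{HayashiAY15} together with the static correctness of \staticalgo (\Cref{thm:correctness}). Your write-up is in fact a more careful elaboration of that sketch; the two subtleties you flag at the end (uniformity of $\Sam$ under domain changes, and the data-dependent stopping time) are not addressed explicitly in the paper either, so you are not missing anything the paper supplies.
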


The proof follows from the correctness of the algorithm by~\citet{HayashiAY15}
and of \staticalgo (\Cref{thm:correctness}).

\section{Experimental Evaluation}\label{sec:exper}
In this section we presents the results of our experimental evaluation. We
measure and analyze the performances of \staticalgo in terms of its runtime and
sample size  and accuracy, and compared them with those of the exact algorithm
\textsf{BA}~\citep{Brandes01} and the approximation algorithm
\textsf{RK}~\citep{RiondatoK15}, which offers the same guarantees as \staticalgo
(computes an $(\varepsilon,\delta)$-approximation the \BC of all vertices).

\paragraph*{Implementation and Environment} We implement \staticalgo and
\dynamicalgo in C\raisebox{0.5ex}{\tiny\textbf{++}}, as an extension of the
NetworKit library~\citep{StaudtSM14}. The code is available
from~\url{http://matteo.rionda.to/software/ABRA-radebetw.tbz2}. We performed the
experiments on a machine with a AMD Phenom\textsuperscript{TM} II X4
955 processor and 16GB of RAM, running FreeBSD 11.

\paragraph*{Datasets and Parameters} We use graphs of various nature
(communication, citations, P2P, and social networks) from the SNAP
repository~\citep{LeskovecK14}. The characteristics of the graphs are reported
in the leftmost column of \Cref{tab:bigtable}.

In our experiments we varied $\varepsilon$ in the range $[0.005,0.3]$, and we also
evaluate a number of different sampling schedules (see
\Cref{sec:expersamplesched}). In all the results we report, $\delta$ is fixed to
$0.1$. We experimented with different values for this parameter, and, as
expected, it has a very limited impact on the nature of the results, given the
logarithmic dependence of the sample size on $\delta$. We performed five runs
for each combination of parameters. The variance between the different runs was
essentially insignificant, so we report, unless otherwise specified, the results
for a random run. 

\begin{table*}[ht]
	\scriptsize
	\centering
	\begin{tabular}{ccrrccccrcrcc}
		\toprule
		& & & \multicolumn{2}{c}{\shortstack{Speedup\\w.r.t.}} &
		\multicolumn{3}{c}{\shortstack[l]{Runtime\\Breakdown (\%)}} & & &
		\multicolumn{3}{c}{Absolute Error ($\times 10^5$)} \\
		\cmidrule(l{2pt}r{2pt}){4-5} \cmidrule(l{2pt}r{2pt}){6-8}
		\cmidrule(l{2pt}r{2pt}){11-13}

		Graph & $\varepsilon$ & \shortstack{Runtime\\(sec.)} & \textsf{BA} & \textsf{RK} & Sampling &
		\shortstack{Stop\\Cond.} & Other & \shortstack{Sample\\Size} &
		\shortstack{Reduction\\w.r.t.\\\textsf{RK}} & max & avg & stddev \\
		\midrule
	\multirow{6}{*}{\shortstack{Soc-Epinions1\\Directed\\$|V|=75,879$\\$|E|=508,837$}} & 0.005 & 483.06 & 1.36 & 2.90 & 99.983 & 0.014 & 0.002 & 110,705 & 2.64 & 70.84 & 0.35 & 1.14\\
 & 0.010 & 124.60 & 5.28 & 3.31 & 99.956 & 0.035 & 0.009 & 28,601 & 2.55 & 129.60 & 0.69 & 2.22\\
 & 0.015 & 57.16 & 11.50 & 4.04 & 99.927 & 0.054 & 0.018 & 13,114 & 2.47 & 198.90 & 0.97 & 3.17\\
 & 0.020 & 32.90 & 19.98 & 5.07 & 99.895 & 0.074 & 0.031 & 7,614 & 2.40 & 303.86 & 1.22 & 4.31\\
 & 0.025 & 21.88 & 30.05 & 6.27 & 99.862 & 0.092 & 0.046 & 5,034 & 2.32 & 223.63 & 1.41 & 5.24\\
 & 0.030 & 16.05 & 40.95 & 7.52 & 99.827 & 0.111 & 0.062 & 3,668 & 2.21 & 382.24 & 1.58 & 6.37\\
\midrule
\multirow{6}{*}{\shortstack{P2p-Gnutella31\\Directed\\$|V|=62,586$\\$|E|=147,892$}} & 0.005 & 100.06 & 1.78 & 4.27 & 99.949 & 0.041 & 0.010 & 81,507 & 4.07 & 38.43 & 0.58 & 1.60\\
 & 0.010 & 26.05 & 6.85 & 4.13 & 99.861 & 0.103 & 0.036 & 21,315 & 3.90 & 65.76 & 1.15 & 3.13\\
 & 0.015 & 11.91 & 14.98 & 4.03 & 99.772 & 0.154 & 0.074 & 9,975 & 3.70 & 109.10 & 1.63 & 4.51\\
 & 0.020 & 7.11 & 25.09 & 3.87 & 99.688 & 0.191 & 0.121 & 5,840 & 3.55 & 130.33 & 2.15 & 6.12\\
 & 0.025 & 4.84 & 36.85 & 3.62 & 99.607 & 0.220 & 0.174 & 3,905 & 3.40 & 171.93 & 2.52 & 7.43\\
 & 0.030 & 3.41 & 52.38 & 3.66 & 99.495 & 0.262 & 0.243 & 2,810 & 3.28 & 236.36 & 2.86 & 8.70\\
\midrule
\multirow{5}{*}{\shortstack{Email-Enron\\Undirected\\$|V|=36,682$\\$|E|=183,831$}} 
 & 0.010 & 202.43 & 1.18 & 1.10 & 99.984 & 0.013 & 0.003 & 66,882 & 1.09 & 145.51 & 0.48 & 2.46\\
 & 0.015 & 91.36 & 2.63 & 1.09 & 99.970 & 0.024 & 0.006 & 30,236 & 1.07 & 253.06 & 0.71 & 3.62\\
 & 0.020 & 53.50 & 4.48 & 1.05 & 99.955 & 0.035 & 0.010 & 17,676 & 1.03 & 290.30 & 0.93 & 4.83\\
 & 0.025 & 31.99 & 7.50 & 1.11 & 99.932 & 0.052 & 0.016 & 10,589 & 1.10 & 548.22 & 1.21 & 6.48\\
 & 0.030 & 24.06 & 9.97 & 1.03 & 99.918 & 0.061 & 0.021 & 7,923 & 1.02 & 477.32 & 1.38 & 7.34\\
\midrule
\multirow{5}{*}{\shortstack{Cit-HepPh\\Undirected\\$|V|=34,546$\\
$|E|=421,578$}} 
 & 0.010 & 215.98 & 2.36 & 2.21 & 99.966 & 0.030 & 0.004 & 32,469 & 2.25 & 129.08 & 1.72 & 3.40\\
 & 0.015 & 98.27 & 5.19 & 2.16 & 99.938 & 0.054 & 0.008 & 14,747 & 2.20 & 226.18 & 2.49 & 5.00\\
 & 0.020 & 58.38 & 8.74 & 2.05 & 99.914 & 0.073 & 0.013 & 8,760 & 2.08 & 246.14 & 3.17 & 6.39\\
 & 0.025 & 37.79 & 13.50 & 2.02 & 99.891 & 0.091 & 0.018 & 5,672 & 2.06 & 289.21 & 3.89 & 7.97\\
 & 0.030 & 27.13 & 18.80 & 1.95 & 99.869 & 0.108 & 0.023 & 4,076 & 1.99 & 359.45 & 4.45 & 9.53\\
\bottomrule
	\end{tabular}
	\caption{Runtime, speedup, breakdown of runtime, sample size, reduction, and
	absolute error}
	\label{tab:bigtable}
\end{table*}

\subsection{Runtime and Speedup}
Our main goal was to develop an algorithm that can compute an
$(\varepsilon,\delta)$-approximation of the \BC of all nodes as fast as
possible. Hence we evaluate the runtime and the speedup of \staticalgo
w.r.t.~\textsf{BA} and \textsf{RK}. The results are reported in columns 3 to 5
of \Cref{tab:bigtable} (the values for $\varepsilon=0.005$ are missing for
Email-Enron and Cit-HepPh because in these case both \textsf{RK} and \staticalgo
were slower than \textsf{BA}). As expected, the runtime is a perfect linear
function of the sample size (column 9), which in turns grows as
$\varepsilon^{-2}$. The speedup w.r.t.~the exact algorithm \textsf{BA} is
significant and naturally decreases quadratically with $\varepsilon$. More
interestingly \staticalgo is always faster than \textsf{RK}, sometimes by a
significant factor. At first, one may think that this is due to the reduction in
the sample size (column 10), but a deeper analysis shows that this is only one
component of the speedup, which almost always greater than the reduction in
sample size. The other component can be explained by the fact that \textsf{RK}
must perform an expensive computation (computing the
vertex-diameter~\citep{RiondatoK15} of the graph) to determine the sample size
before it can start sampling, while \staticalgo can immediately start sampling
and rely on the stopping condition (whose computation is inexpensive, as we will
discuss). The different speedups for different graphs are due to different
characteristics of the graphs: when the SP DAG between two nodes has many paths,
\staticalgo does more work per sample than \textsf{RK} (which only explore a
single SP on the DAG), hence the speedup is smaller.

\paragraph*{Runtime breakdown} The main challenge in designing a stopping
condition for progressive
sampling algorithm is striking the right balance between the strictness of the
condition (i.e., it should stop early) and the efficiency in evaluating it. We
now comment on the efficiency, and will report about the strictness in
\Cref{sec:expersamplesched,sec:experaccuracy}. In columns 6 to 8 of
\Cref{tab:bigtable} we report the breakdown of the runtime into the main
components. It is evident that evaluating the stopping condition amounts to an
insignificant fraction of the runtime, and most of the time is spent in
computing the samples (selection of nodes, execution of SP algorithm, update of
the \BC estimations). The amount in the ``Other'' column corresponds to time
spent in logging and checking invariants. We can then say that our
stopping condition is extremely efficient to evaluate, and \staticalgo is almost
always doing ``real'' work to improve the estimation.

\subsection{Sample Size and Sample Schedule}\label{sec:expersamplesched}
We evaluate the final sample size of \staticalgo and the performances of the
``automatic'' sample schedule (\Cref{sec:samplesched}). The results are reported
in columns 9 and 10 of \Cref{tab:bigtable}. As expected, the sample size grows
with $\varepsilon^{-2}$. We already commented on the fact that \staticalgo uses
a sample size that is consistently (up to $4\times$) smaller than the one used
by \textsf{RK} and how this is part of the reason why \staticalgo is much faster
than \textsf{RK}. In \Cref{fig:schedules} we show the behavior (on
P2p-Gnutella31, figures for other graphs can be found in
\ifappendix
\Cref{app:exper})
\else
Appendix C of the extended online version~\citep{RiondatoU16ext})
\fi
of the final sample size chosen by the automatic sample schedule in
comparison with \emph{static geometric sample schedules}, i.e., schedules for
which the sample size at iteration $i+1$ is $c$ times the size of the sample
size at iteration $i$. We can see that the \emph{automatic sample schedule is
always better than the geometric ones}, sometimes significantly depending on the
value of $c$ (e.g., more than $2\times$ decrease w.r.t.~using $c=3$ for
$\varepsilon=0.05$). Effectively this means that the automatic sample schedule
really frees the end user from having to selecting a parameter whose impact on
the performances of the algorithm may be devastating (larger final sample size
implies higher runtime). Moreover, we noticed that with the automatic sample
schedule \staticalgo always terminated after just two iterations, while this was
not the case for the geometric sample schedules (taking even 5 iterations in
some cases): this means that effectively the automatic sample schedules
``jumps'' directly to a sample size for which the stopping condition will be
verified. We can then sum up the results and say that the stopping condition
of \staticalgo stops at small sample sizes, smaller than those used in
\textsf{RK} and the automatic sample schedule we designed is extremely efficient
at choosing the right successive sample size, to the point that \staticalgo only
needs two iterations.

\begin{figure}[ht]
	\centering
	\includegraphics[width=\ifappendix.5\else.93\fi\columnwidth]{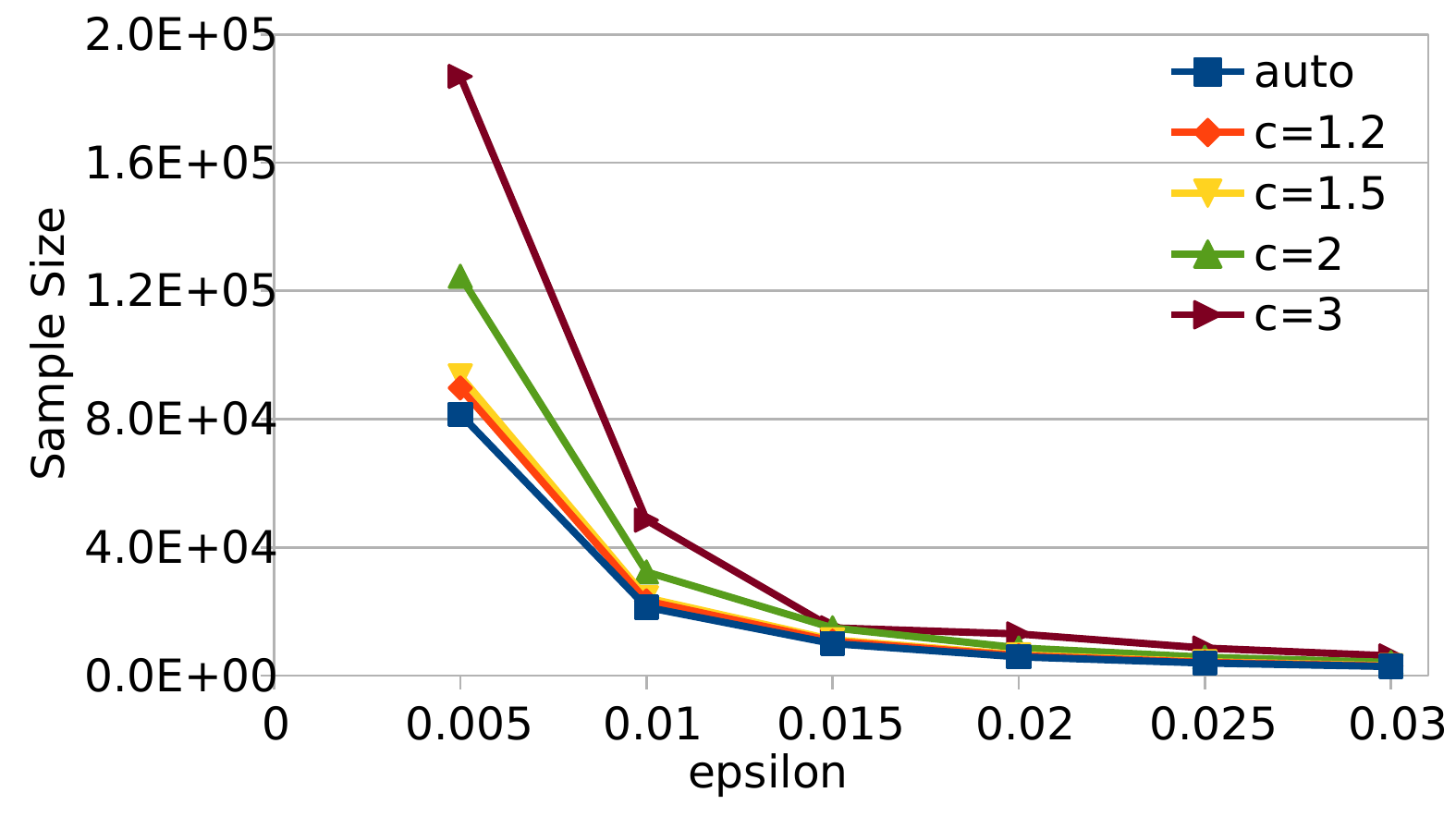}
	\caption{Final sample size for different sample schedules on P2p-Gnutella}
	\label{fig:schedules}
\end{figure}

\begin{figure}[ht]
	\centering
	\includegraphics[width=\ifappendix.5\else.93\fi\columnwidth]{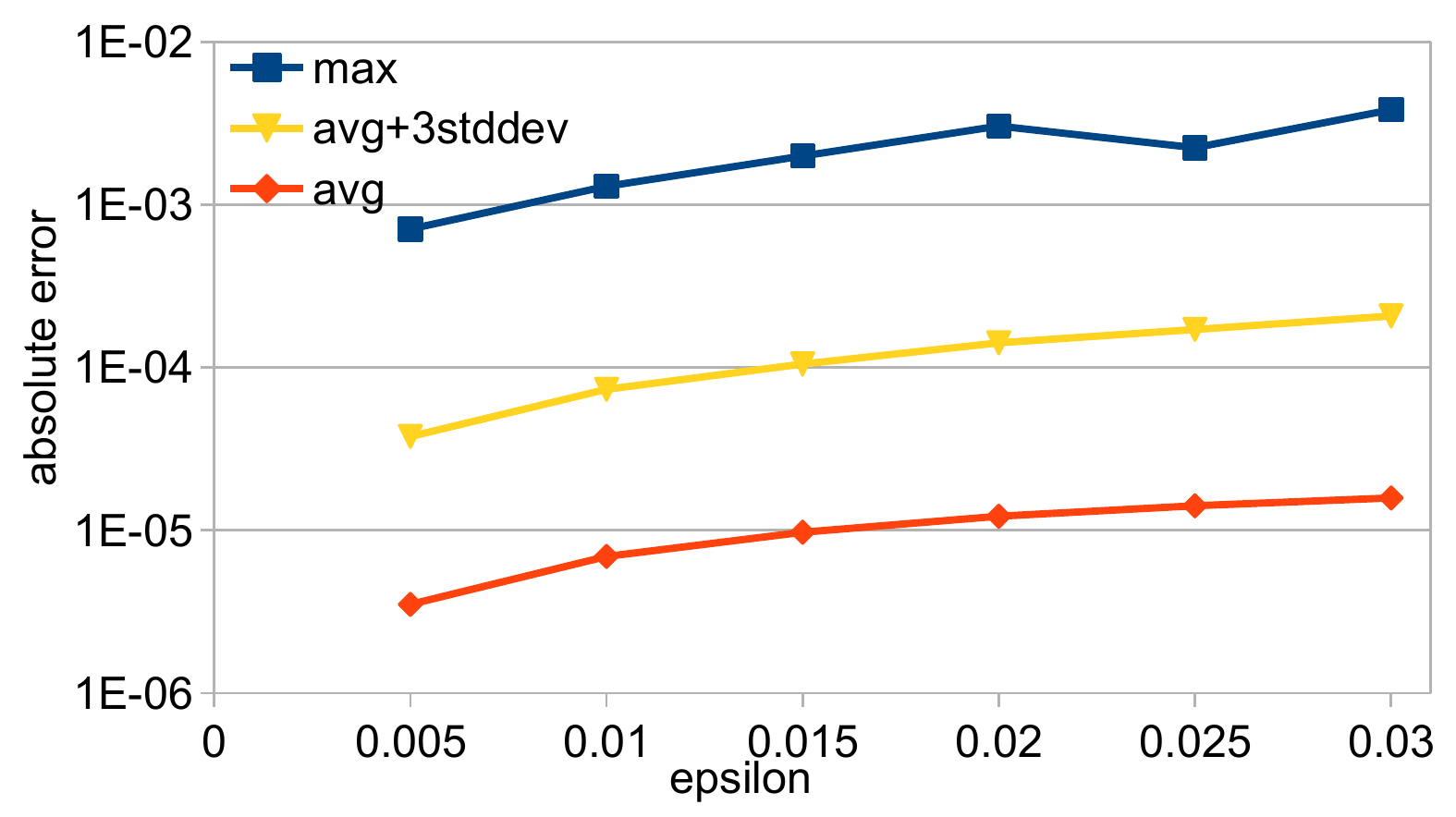}
	\ifappendix
	\else
	\vspace{-5pt}
	\fi
	\caption{Absolute error evaluation -- Soc-Epinions1}
	\label{fig:errors}
\end{figure}

\subsection{Accuracy}\label{sec:experaccuracy}
We evaluate the accuracy of \staticalgo by measuring the
absolute error $|\tbetw(v)-\betw(v)|$. The theoretical analysis
guarantees that this quantity should be at most $\varepsilon$ for all nodes,
with probability at least $1-\delta$. A first important result is that in
\emph{all} the thousands of runs of \staticalgo, the maximum error was
\emph{always} smaller than $\varepsilon$ (not just with probability
$>1-\delta$). We report statistics about the absolute error in the three
rightmost columns of \Cref{tab:bigtable} and in~\Cref{fig:errors} (figures for
the other graphs are in
\ifappendix
\Cref{app:exper}.
\else
Appendix C of the extended online version~\citep{RiondatoU16ext}.
\fi
The minimum error (not reported) was always 0. 
The maximum error is \emph{an order of magnitude smaller than $\varepsilon$}, and
the average error is around \emph{three orders of magnitude} smaller than
$\varepsilon$, with a very small standard deviation. As expected, the error
grows as $\varepsilon^{-2}$. In \Cref{fig:errors} we
show the behavior of the maximum, average, and average plus three standard
deviations (approximately corresponding to the 95\% percentile) for
Soc-Epinions1 (the vertical axis has a logarithmic scale), to appreciate how
most of the errors are almost two orders of magnitude smaller than
$\varepsilon$.

All these results show that \emph{\staticalgo is very accurate, more than what
is guaranteed by the theoretical analysis}. This can be explained by the
fact that the bounds to the sampling size, the stopping condition, and the
sample schedule are \emph{conservative}, in the sense that we may be sampling
more than necessary to obtain an $(\varepsilon,\delta)$-approximation.
Tightening any of these components would result in a less conservative algorithm
that still offers the same approximation quality guarantees, and is an
interesting research direction.

\subsection{Dynamic BC Approximation}
We did not evaluate \dynamicalgo experimental, but, given its design, one can
expect that, when compared to previous contributions offering the same quality
guarantees~\citep{BergaminiM15arXiv,HayashiAY15}, it would exhibit similar or
even larger speedups and reduction in the sample size than what \staticalgo had
w.r.t.~\textsf{RK}. Indeed, the algorithm by~\citet{BergaminiM15} uses
\textsf{RK} as a building block and it needs to constantly keep track of (an
upper bound to) the vertex diameter of the graph, a very expensive operation. On
the other hand, the analysis of the sample size by~\citet{HayashiAY15} uses very
loose simultaneous deviation bounds (the union bound). As already shown
by~\citet{RiondatoK15}, the resulting sample size is extremely large and
they already showed how \textsf{RK} can use a smaller sample size. Since we
built over the work by~\citet{HayashiAY15} and \staticalgo improves over
\textsf{RK}, we can reasonably expect it to have much better performances than
the algorithm by~\citet{HayashiAY15}
%
%
%
%

\section{Conclusions}\label{sec:concl}
We presented \textsf{\algobase}, a family of sampling-based 
algorithms for computing and maintaining high-quality approximations
of (variants of) the \BC of all vertices in a graph. Our
algorithms can handle static and dynamic graphs with edge updates (both
deletions and insertions). We discussed a number of variants of our basic
algorithms, including finding the top-$k$ nodes with higher \BC, using improved
estimators, and special cases when there is a single SP. \textsf{\algobase{}}
greatly improves, theoretically and experimentally, the current state of the
art. The analysis relies on Rademacher averages and on pseudodimension. 
To our knowledge this is the first application of these concepts to
graph mining.

In the future we plan to investigate stronger bounds to the Rademacher averages,
give stricter bounds to the sample complexity of \BC by studying the
pseudodimension of the class of functions associated to it, and extend our study
to other network measures.

\ifappendix
\para{Acknowledgements} The authors are thankful to Elisabetta Bergamini and
Christian Staudt for their help with the NetworKit code.

This work was supported in part by NSF grant IIS-1247581 and NIH grant
R01-CA180776.
\fi

\ifacmstyle
\fi
\ifbibtex
\bibliographystyle{abbrvnat}
\bibliography{centrality,riondapubs,various,vcmine}
\else
\input{biblio}
\fi

\ifappendix
\appendix

\section{Relative-error Top-k Approximation}\label{app:topk}
In this section we prove the correctness of the algorithm \staticalgotopk
(\Cref{thm:topk}). The pseudocode can be found in Algorithm~\ref{alg:topk}.

\begin{proof}[of~\Cref{thm:topk}]
	With probability at least $1-\delta'$, the set $\tilde{B}'$ computed during
	the first phase (execution of \staticalgo) has the properties \ldots . With
	probability at least $1-\delta''$, the set $\tilde{B}''$ computed during the
	second phase (execution of \staticalgo) has the properties
	from~\Cref{thm:staticalgorel}. Suppose both these events occur, which
	happens with probability at least $1-\delta$. Consider the value $\ell'$. It
	is straightforward to check that $\ell'$ is a lower bound to $b_k$: indeed
	there must be at least $k$ nodes with exact \BC at least $\ell'$. For the
	same reasons, and considering the fact that we run \staticalgorel with
	parameters $\varepsilon$, $\delta''$, and $\lambda=\ell'$, we have that
	$\ell''\le b_k$. From this and the definition of $\widetilde{\TOP}(k,G)$, it
	follows that the elements of $\widetilde{\TOP}(k,G)$ are such that their
	exact may be greater than $\ell''$, and therefore of $b_k$. This means that
	$\TOP(k,G)\subseteq\widetilde{TOP}(k,G)$. The other properties of
	$\widetilde{TOP}(k,G)$ follow from the properties of the output of
	\staticalgorel.
\end{proof}

\begin{algorithm}[ht]
	\DontPrintSemicolon
	\SetKwInOut{Input}{input}
	\SetKwInOut{Output}{output}
	\SetKwFunction{GetSample}{uniform\_random\_sample}
	\SetKwFunction{ModifiedSP}{compute\_SPs}
	\SetKwFunction{UpdateDictionaries}{update\_dicts}
	\SetKwComment{tcp}{//}{}
	\Input{Graph $G=(V,E)$, accuracy parameter $\varepsilon\in(0,1)$, confidence parameter $\delta\in(0,1)$, value $k\ge 1$}
	\Output{Set $\widetilde{B}$ of approximations of the \BC of the top-$k$ vertices in $V$ with highest \BC}
	$\delta',\delta''\leftarrow$ reals such that $(1-\delta_1)(1-\delta_2)\ge 1-\delta$\;
	$\widetilde{B}'\leftarrow$ output of $\staticalgo$ run with input $G,
		\varepsilon, \delta'$\;
	$\ell'\leftarrow$ $k$-th highest value in $\widetilde{B}'$\;
	$\tilde{b}'=\ell'-\varepsilon$\;
	$\widetilde{B}\leftarrow$ output of a variant of $\staticalgo$ using the
		definition of $\Delta_i$ from~\eqref{eq:deltairel}, and input $G,
		\varepsilon, \delta'', \tilde{b}'$\;
	\Return{$\widetilde{B}$}\;
	\caption{\staticalgotopk: relative-error approximation of top-$k$ \BC nodes on static graph}
	\label{alg:topk}
\end{algorithm}

\section{Special Cases}\label{app:unique}
In this section we expand on our discussion from~\Cref{sec:unique}. Since our
results rely on pseudodimension~\citep{Pollard84}, we start with a presentation
of the fundamental definitions and results about pseudodimension.

\subsection{Pseudodimension}
Before introducing the pseudodimension, we must recall some notions and
results about the Vapnik-Chervonenkis (VC) dimension. We refer the reader to the
books by~\citet{ShalevSBD14} and by~\citet{AnthonyB99} for an in-depth
exposition of VC-dimension and pseudodimension.

Let $D$ be a domain and let $\range$ be a collection of subsets of $D$
($\range\subseteq 2^D$). We call $\range$ a \emph{rangeset on $D$}. Given
$A\subseteq D$, the \emph{projection of $\range$ on $A$} is $P_\range(A)=\{R\cap
A ~:~ R\in\range\}$.  When $P_\range(A)=2^A$, we say that $A$ is
\emph{shattered} by $\range$. Given $B\subseteq D$, the \emph{empirical
VC-dimension} of $\range$, denoted as $\EVC(\range, B)$ is the size of the
largest subset of $B$ that can be shattered. The \emph{VC-dimension} of
$\range$, denoted as $\VC(\range)$ is defined as $\VC(\range)=\EVC(\range,D)$.

Let $\family$ be a class of functions from some domain $D$ to $[0,1]$. Consider,
for each $f\in\family$, the subset $R_f$ of $D\times[0,1]$ defined as
\[
R_f=\{(x,t) ~:~ t\le f(x)\}\enspace.
\]
We define a rangeset $\family^+$ on $D\times[0,1]$ as $
\family^+ = \{R_{f}, f\in\family\}$.
The \emph{empirical pseudodimension}~\citep{Pollard84} of $\family$ on a subset
$B\subseteq D$, denoted as
$\EPD(\family, B)$, is the empirical VC-dimension of $\family^+$:
$\EPD(\family, B)=\EVC(\family^+, B)$. The pseudodimension of $\family$, denoted
as $\PD(\family)$ is the VC-dimension of $\family^+$,
$\PD(\family)=\VC(\family^+)$~\citep[Sect.~11.2]{AnthonyB99}.
Having an upper bound to the pseudodimension of $\family$ allows to
bound the supremum of the deviations from~\eqref{eq:supdev}, as stated in the
following result.

\begin{theorem}[\citep{LiLS01}, see also~\citep{HarPS11}]\label{thm:eapprox}
  Let $D$ be a domain and $\family$ be a family of functions from $D$ to
  $[0,1]$. Let $\PD(\family)\le d$. Given $\varepsilon,\delta\in(0,1)$,  let
  $\Sam$ be a collection of elements sampled independently and uniformly at random
  from $D$, with size
  \begin{equation}\label{eq:eapprox}
    |\Sam|=\frac{c}{\varepsilon^2}\left(d+\log\frac{1}{\delta}\right)\enspace.
  \end{equation}
  Then
  \[
  \Pr\left(\exists f\in\family \mbox{ s.t. }
  \left|\mean_D(f)-\mean_\Sam(f)\right|>\varepsilon\right)<\delta\enspace.
  \]
\end{theorem}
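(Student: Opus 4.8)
The statement is the classical pseudodimension sample-complexity bound, so the fastest route is simply to cite \citep{LiLS01} (or the textbook treatment in~\citep{AnthonyB99}); I sketch instead how a self-contained proof would go, following the symmetrization template of statistical learning theory. Write $\ell=|\Sam|$. The whole point is that the pseudodimension is exactly the device that turns the \emph{infinite} family $\family$ into an effectively \emph{finite} one on any sample.

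First I would pass to the range set $\family^+=\{R_f:f\in\family\}$ on $D\times[0,1]$, with $R_f=\{(x,t):t\le f(x)\}$, so that by definition $\VC(\family^+)=\PD(\family)\le d$. Using the layer-cake identity $f(x)=\int_0^1\mathbf 1[t\le f(x)]\,dt$, the true mean $\mean_D(f)$ equals the measure of $R_f$ under the product of the uniform law on $D$ and Lebesgue measure on $[0,1]$; this is the bridge that lets the combinatorial dimension of $\family^+$ govern the deviations of the real-valued averages $\mean_\Sam(f)$ and $\mean_D(f)$.

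The core argument has three standard ingredients. (i) \emph{Symmetrization}: introduce an independent ghost sample $\Sam'$ of the same size and bound $\Pr(\sup_{f\in\family}|\mean_\Sam(f)-\mean_D(f)|>\varepsilon)$ by roughly twice $\Pr(\sup_f|\mean_\Sam(f)-\mean_{\Sam'}(f)|>\varepsilon/2)$, a quantity that involves only the two samples. (ii) \emph{Randomization and Sauer--Shelah}: replacing the paired differences by Rademacher-signed differences and conditioning on the $2\ell$ combined points, the number of distinct behaviours of $\family$ that matter is controlled — through the pseudo-dichotomies counted by $\family^+$ — by the Sauer--Shelah bound $O((2\ell)^d)$; this is the step where $\PD(\family)\le d$ is used, and it collapses the supremum over the infinite $\family$ to a supremum over polynomially many options. (iii) \emph{Concentration and union bound}: since each $f$ has range $[0,1]$, Hoeffding's inequality controls each fixed behaviour, and a union bound over the $O((2\ell)^d)$ of them yields a tail of the form $O((2\ell)^d)\exp(-\Omega(\ell\varepsilon^2))$. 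Setting this below $\delta$ and solving for $\ell$ recovers a bound of the stated shape $\tfrac{c}{\varepsilon^2}(d+\log\tfrac1\delta)$.

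The main obstacle I anticipate is the \emph{constant}: the naive symmetrization-plus-covering route actually produces $\tfrac{c}{\varepsilon^2}(d\log\tfrac1\varepsilon+\log\tfrac1\delta)$, carrying a spurious $\log\tfrac1\varepsilon$ factor from the $\varepsilon$-net used to discretize $\family$ (equivalently, from applying Sauer--Shelah at a single fixed resolution). Removing this factor to obtain the clean $d+\log\tfrac1\delta$ dependence is precisely the contribution of~\citet{LiLS01}; it requires a sharper, resolution-adaptive analysis (a chaining / relative-deviation argument rather than a single net) so that the polynomial $(2\ell)^d$ growth is charged only once. I would therefore either invoke their refined bound directly, or reproduce the chaining step, treating the remaining pieces — symmetrization, Sauer--Shelah, and Hoeffding — as routine.
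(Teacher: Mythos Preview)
The paper does not prove this theorem at all: it is stated as a cited result from~\citep{LiLS01} (see also~\citep{HarPS11}) and used as a black box to derive~\Cref{corol:unique}. So there is no ``paper's own proof'' to compare against; your instinct in the first sentence---that the fastest route is simply to cite~\citep{LiLS01}---is exactly what the paper does.

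That said, your sketch of how a self-contained proof would go is accurate and well-calibrated. The three ingredients you list (symmetrization to a ghost sample, Sauer--Shelah via the rangeset $\family^+$ to reduce to polynomially many behaviours, then Hoeffding plus union bound) are the standard template, and you correctly flag the real obstacle: the naive route gives $\tfrac{c}{\varepsilon^2}(d\log\tfrac{1}{\varepsilon}+\log\tfrac{1}{\delta})$, and shaving the $\log\tfrac{1}{\varepsilon}$ to obtain the clean $d+\log\tfrac{1}{\delta}$ is precisely the content of~\citep{LiLS01} via a chaining argument. For the purposes of this paper, citing the result is both sufficient and appropriate; reproducing the chaining would be disproportionate.
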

The constant $c$ is universal and it is less than 0.5~\citep{LofflerP09}.

The following two technical lemmas are, to the best of our knowledge, new.
We use them later to bound the pseudodimension of a family of functions related
to betweenness centrality.

\begin{lemma}\label{lem:uniqueelement}
  Let $B\subseteq D\times[0,1]$ be a set that is shattered by $\family^+$.
  Then $B$ can contain \emph{at most one} element $(d,x)\in D\times[0,1]$ for
  each $d\in D$.
\end{lemma}
\begin{proof}
  Let $d\in D$ and consider any two distinct values $x_1,x_2\in[0,1]$. Let,
  w.l.o.g., $x_1< x_2$ and let
  $B=\{(\tau,x_1),(\tau,x_2)\}$. From the definitions of the ranges, there is no
  $R\in\family^+$ such that $R\cap B=\{(d,x_1)\}$, therefore $B$ can not be
  shattered, and so neither can any of its supersets, hence the thesis.
\end{proof}

\begin{lemma}\label{lem:nozero}
  Let $B\subseteq D\times[0,1]$ be a set that is shattered by $\family^+$. Then
  $B$ does not contain \emph{any} element in the form $(d,0)$, for any $d\in D$.
\end{lemma}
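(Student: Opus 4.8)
The plan is to show that any set $B$ containing an element of the form $(d,0)$ cannot be shattered by $\family^+$, because the ranges $R_f$ are defined so that they \emph{always} contain every point at height $0$.

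First I would recall the definition of the ranges. For each $f\in\family$, we have $R_f=\{(x,t) : t\le f(x)\}$. The key observation is that since every $f\in\family$ maps into $[0,1]$, we have $f(d)\ge 0$ for all $d\in D$. Therefore the point $(d,0)$ satisfies $0\le f(d)$, which means $(d,0)\in R_f$ for \emph{every} $f\in\family$.

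Now I would argue by contradiction. Suppose $B$ is shattered by $\family^+$ and contains some element $(d,0)$. For $B$ to be shattered, every subset of $B$ — and in particular the empty set $\emptyset\subseteq B$ — must be realizable as $R_f\cap B$ for some $f\in\family$. But realizing $\emptyset$ would require an $f$ with $(d,0)\notin R_f$, contradicting the fact established above that $(d,0)$ lies in every $R_f$. Hence no range can exclude $(d,0)$ from $B$, so $P_{\family^+}(B)\neq 2^B$, and $B$ is not shattered. This contradiction shows that a shattered $B$ cannot contain any element of the form $(d,0)$.

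The argument is entirely routine and there is no real obstacle; the only thing to be careful about is matching the direction of the inequality in the definition of $R_f$ (the ranges are the region \emph{below} the graph, so height $0$ is always included regardless of $f$). This lemma, together with~\Cref{lem:uniqueelement}, restricts the structure of any shattered set and will be used to bound the pseudodimension of the betweenness family in~\Cref{thm:unique}.
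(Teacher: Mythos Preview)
Your proof is correct and follows essentially the same approach as the paper: both observe that $(d,0)\in R_f$ for every $f\in\family$ (since $f(d)\ge 0$), so no range can realize the empty intersection with a set containing $(d,0)$, and hence such a set cannot be shattered. Your write-up is slightly more explicit about why $(d,0)$ lies in every range, but the argument is the same.
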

\begin{proof}
	For any $d\in D$, $(d,0)$ is contained in every $R\in\family^+$,
	hence given a set $B=\{(d,0)\}$ it is impossible to find a range
	$R_\emptyset$ such that $B\cap R_\emptyset = \emptyset$, therefore $B$ can
	not be shattered, nor can any of its supersets, hence the thesis.
\end{proof}

\subsection{Pseudodimension for BC}
We now move to proving the results in~\Cref{sec:unique}.

Let $G=(V,E)$ be a graph, and consider the family
\[
	\family=\{f_w, w\in V\}
\]
where $f_w$ goes from $\domain=\{(u,v)\in V\times V, u\neq v\}$ to $[0,1]$ and
is defined in~\eqref{eq:functionf}. The rangeset $\family^+$ contains one range
$R_w$ for each node $w\in V$. The set $R_w\subseteq\domain\times[0,1]$ contains
pairs in the form $((u,v), x)$, with $(u,v)\in\domain$ and $x\in[0,1]$. The
pairs $((u,v), x)\in R_w$ with $x > 0$ are all and only the pairs with this form
such that
\begin{enumerate}
	\item $w$ is on a SP from $u$ to $v$; and
	\item $x \le \sigma_{uv}(w)/\sigma_{uv}$.
\end{enumerate}

We now prove a result showing that some subsets of $\domain\times[0,1]$ can not
be shattered by $\family^+$, on any graph $G$. \Cref{thm:unique} follows
immediately from this result, and \Cref{corol:unique} then follows from
\Cref{thm:unique,thm:eapprox}.

\begin{lemma}\label{lem:unique}
	There exists no undirected graph $G=(V,E)$ such that it is possible to
	shatter a set
	\[
		B=\{((u_i, v_i),x_i), 1\le i\le 4\}\subseteq\domain\times[0,1]
	\]
	if there are at least three distinct values $j',j'',j'''\in[1,4]$ for which
	\[
		\sigma_{u_{j'}v_{j'}}=\sigma_{u_{j''}v_{j''}}=\sigma_{u_{j'''}v_{j'''}}=1\enspace.
	\]
\end{lemma}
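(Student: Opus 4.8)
The plan is to combine the hereditary nature of shattering with the special structure that $\sigma_{uv}=1$ forces on the ranges. First I would invoke \Cref{lem:nozero} to assume all $x_i>0$, and observe that for each special pair $j\in\{j',j'',j'''\}$ the value $f_w(u_j,v_j)$ is either $0$ or $1$, since there is a single SP and hence $\sigma_{u_jv_j}(w)\in\{0,1\}$. Because $x_j>0$, the point $((u_j,v_j),x_j)$ then lies in $R_w$ exactly when $w$ is internal to the unique shortest path $\pi_j$; write $W_j$ for this set of internal vertices. Thus, restricted to the three special coordinates, $R_w$ records precisely the membership pattern of $w$ in the sets $W_{j'},W_{j''},W_{j'''}$, and shattering $B$ demands that all eight such patterns occur, each moreover with both values on the fourth coordinate.

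The key structural fact I would establish next is a ``geodesic convexity'' of the special paths: since $\sigma_{u_jv_j}=1$, every sub-path of $\pi_j$ is the \emph{unique} shortest path between its endpoints (a second one could be spliced in to give a second shortest path from $u_j$ to $v_j$). Hence if two special paths share two vertices they share the entire sub-path between them, so each pairwise intersection $I_{ab}=W_a\cap W_b$, and the triple intersection $C=W_{j'}\cap W_{j''}\cap W_{j'''}$, is a \emph{contiguous} sub-path. Here the fourth point does the decisive work: realising both patterns $1110$ and $1111$ requires two \emph{distinct} vertices of $C$ (one inside and one outside the fourth range), so $|C|\ge 2$ and $C$ is a genuine path with two distinct endpoints $a\ne b$, each internal to all three $\pi_j$.

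Finally I would run a pigeonhole at these two endpoints. Realising the patterns $1100$, $1010$, $0110$ forces each of $I_{j'j''}$, $I_{j'j'''}$, $I_{j''j'''}$ to \emph{strictly} contain $C$, i.e.\ to extend past $a$ or past $b$. But at a fixed endpoint, say $b$, each $\pi_j$ has a unique next vertex $n_j^b$ as it leaves $C$; if two pairwise intersections extended beyond $b$, their shared vertices would force two (hence all three) of the $n_j^b$ to coincide, placing a further common vertex in $C$ beyond $b$ --- a contradiction. So at most one pairwise intersection extends beyond each of $a,b$, yielding at most two extensions where three are needed, the final contradiction. The main obstacle, and the reason the fourth point is indispensable, is exactly this step: three special paths alone \emph{can} be shattered when $C$ collapses to a single high-degree vertex with three separate ``arms'', so the whole argument hinges on using the fourth coordinate to guarantee $|C|\ge 2$ before the two-endpoint pigeonhole can be applied.
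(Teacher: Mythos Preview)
Your argument is correct and takes a genuinely different route from the paper's proof. The paper first disposes of the case where all four pairs have a unique SP by citing \citep[Lemma~2]{RiondatoK15}, then treats specifically the remaining case $\sigma_{u_4v_4}=2$: it introduces the concrete witnesses $v_A$, $v_{i,i^\pm}$, $v_{i,4}$, $v_{i,j,4}$, proves an ordering fact ($\mathsf{F}_2$) about where $v_A$ sits among the $v_{i,i^\pm}$, and then runs a detailed case analysis on how the nodes $v_{i,4}$, $v_{j,4}$, $v_{i,j,4}$ can be distributed over the \emph{two} SPs from $u_4$ to $v_4$, ending with a pigeonhole on those two paths. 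Your proof never looks at the structure of the fourth pair at all: you use the fourth coordinate only to extract two distinct witnesses in the triple intersection $C$, establish that $C$ and the pairwise intersections $I_{ab}$ are contiguous sub-paths via the unique-SP convexity (essentially the paper's fact $\mathsf{F}_1$), and then run the pigeonhole at the two endpoints of $C$ rather than on the SPs of the fourth pair. This buys you two things: the argument is uniform (no separate citation for the all-unique case) and it proves the lemma exactly as stated, with no implicit restriction $\sigma_{u_4v_4}\le 2$, whereas the paper's proof as written handles only $\sigma_{u_4v_4}\in\{1,2\}$. One small point worth making explicit in a full write-up: invoke \Cref{lem:uniqueelement} alongside \Cref{lem:nozero} so that the three special pairs (and hence the three paths $\pi_j$) are genuinely distinct; your argument uses this implicitly when you realise the patterns $100$, $010$, $001$ on the special coordinates.
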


\begin{proof}
	First of all, according to~\Cref{lem:uniqueelement,lem:nozero}, for $B$ to
	be shattered it must be
	\[
		(u_i,v_i)\neq (u_j,v_j) \mbox{ for } i\neq j
	\]
	and $x_i\in(0,1]$, $1\le i\le 4$.

	\citet[Lemma 2]{RiondatoK15} showed that there exists no undirected graph
	$G=(V,E)$ such that it is possible to shatter $B$
	if
	\[
		\sigma_{u_1v_1}=\sigma_{u_2v_2}=\sigma_{u_3v_3}=\sigma_{u_4v_4}=1\enspace.
	\]
	Hence, what we need to show to prove the thesis is that it is impossible to
	build an undirected graph $G=(V,E)$ such that $\family^+$ can shatter $B$
	when the elements of $B$ are such that
	\[
		\sigma_{u_1v_1}=\sigma_{u_2v_2}=\sigma_{u_3v_3}=1
	\]
	and $\sigma_{u_4v_4}=2$.

	Assume now that such a graph $G$ exists and therefore $B$ is shattered by
	$\family^+$.

	For $1\le i\le 3$, let $p_i$ be the \emph{unique} SP from $u_i$ to $v_i$,
	and let $p_4'$ and $p_4''$ be the two SPs from $u_4$ to $v_4$.

	First of all, notice that if any two of $p_1$, $p_2$, $p_3$ meet at a node
	$a$ and separate at a node $b$, then they can not meet again at any node
	before $a$ or after $b$, as otherwise there would be multiple SPs
	between their extreme nodes, contradicting the hypothesis. Let this fact be
	denoted as $\mathsf{F}_1$.

	Since $B$ is shattered, its subset
	\[
		A=\{((u_i, v_i),x_i 1\le i\le 3\}\subset B
	\]
	is also shattered, and in particular it can be shattered
	by a collection of ranges that is a subset of a collection of ranges that
	shatters $B$. We now show some facts about the properties of this shattering
	which we will use later in the proof.

	Define
	\[
		i^+=\left\{\begin{array}{ll}
			i+1 & \mbox{if } i=1,2\\
			1 & \mbox{if } i=3
		\end{array}\right.
	\]
	and
	\[
		i^-=\left\{\begin{array}{ll}
			3 & \mbox{if } i=1\\
			i-1 & \mbox{if } i=2,3
		\end{array}\right.\enspace.
	\]

	Let $v_A$ be a node such that $R_{v_A}\cap A=A$.
	For any $i$, $1\le i\le 3$, let $v_{i,i^+}$ be the node such that
	\[
		R_{v_{i,i^+}}\cap A=\{(u_i,v_i),(u_{i^+},v_{i^+}\}\enspace.
	\]
	Analogously, let $v_{i,i^-}$ be the node such that
	\[
		R_{v_{i,i^-}}\cap A=\{(u_i,v_i),(u_{i^-},v_{i^-}\}\enspace.
	\]
	We want to show that $v_A$ is on the SP connecting $v_{i,i^+}$ to $v_{i,i^-}$.
	Assume it was not. Then we would have that either $v_{i,i^+}$ is between
	$v_A$ and $v_{i,i^-}$ or $v_{i,i^-}$ is between $v_A$ and $v_{i,i^+}$. Assume
	it was the former (the latter follows by symmetry). Then
	\begin{enumerate}
		\item there must be a SP $p'$ from $u_{i^-}$ to $v_{i^+}$ that goes
			through $v_{i,i^-}$;
		\item there must be a SP $p''$ from $u_{i^-}$ to $v_{i^+}$ that goes
			through $v_A$;
		\item there is no SP from $u_{i^-}$ to $v_{i^+}$ that goes through
			$v_{i,i^+}$.
	\end{enumerate}
	Since there is only one SP from $u_{i^-}$ to $v_{i^-}$, it must be that
	$p'=p''$. But then $p'$ is a SP that goes through $v_{i,i-}$ and through
	$v_A$ but not through $v_{i,i^+}$, and $p_i$ is a SP that goes through
	$v_{i,i^-}$, through $v_{i,i^+}$ and through $v_A$ (either in this order or
	in the opposite). This means that there are at least two SPs between
	$v_{i,i^-}$ and $v_A$, and therefore there would be two SPs between $u_{i}$
	and $v_{i}$, contradicting the hypothesis that there is only one SP between
	these nodes.  Hence it must be that $v_A$ is between $v_{i,i^-}$ and
	$v_{i,i^+}$. This is true for all $i$, $1\le i\le 3$. Denote this fact as
	$\mathsf{F}_2$.

	Consider now the nodes $v_{i,4}$ and $v_{j,4}$. We now show that they can
	not belong to the same SP from $u_4$ and $v_4$.

	\begin{itemize}
		\item Assume that $v_{i,4}$ and $v_{j,4}$ are on the same SP $p$ from
			$u_4$ to $v_4$ and assume that $v_{i,j,4}$ is also on $p$. Consider
			the possible orderings of $v_{i,4}$, $v_{j,4}$ and $v_{i,j,4}$ along
			$p$.
			\begin{itemize}
				\item If the ordering is $v_{i,4}$, then $v_{j,4}$, then
					$v_{i,j,4}$ or $v_{j,4}$, then $v_{j,4}$, then $v_{i,j,4}$,
					or the reverses of these orderings (for a total of four
					orderings), then it is easy to see that fact $\mathsf{F}_1$
					would be contradicted, as there are two different SPs from
					the first of these nodes to the last, one that goes through
					the middle one, and one that does not, but then there would
					be two SPs between the pair of nodes $(u_k,v_k)$ where $k$
					is the index in $\{1,2,3\}$ different than $4$ that is in
					common between the first and the last nodes in this
					ordering, and this would contradict the hypothesis, so these
					orderings are not possible.
				\item Assume instead the ordering is such that $v_{i,j,4}$ is
					between $v_{i,4}$ and $v_{j,4}$ (two such ordering exist).
					Consider the paths $p_i$ and $p_j$. They must meet at some
					node $v_{f_{i,j}}$ and separate at some node $v_{l_{i,j}}$.
					From the ordering, and fact $\mathsf{F}_1$, $v_{i,j,4}$ must
					be between these two nodes. From fact $\mathsf{F}_2$ we have
					that also $v_{A}$ must be between these two nodes. Moreover,
					neither $v_{i,4}$ nor $v_{j,4}$ can be between these two
					nodes. But then consider the SP $p$. This path must go
					together with $p_i$ (resp.~$p_j$) from at least $p_{i,4}$
					(resp.~$p_{j,4}$) to the farthest between $v_{f_{i,j}}$ and
					$v_{l_{i,j}}$ from $p_{i,4}$ (resp.~$p_{j,4}$). Then in
					particular $p$ goes through all nodes between $v_{f_{i,j}}$
					and $v_{l_{i,j}}$ that $p_i$ and $p_j$ go through. But since
					$v_A$ is among these nodes,and $v_A$ can not belong to $p$,
					this is impossible, so these orderings of the nodes $v_{i,4}$,
					$v_{j,4}$, and $v_{i,j,4}$ are not possible.
			\end{itemize}
			Hence we showed that $v_{i,4}$, $v_{j,4}$, and $v_{i,j,4}$ can not
			be on the same SP from $u_4$ to $v_4$.
		\item Assume now that $v_{i,4}$ and $v_{j,4}$ are on the same SP from
			$u_4$ to $v_4$ but $v_{i,j,4}$ is on the other SP from $u_4$ to
			$v_4$ (by hypothesis there are only two SPs from $u_4$ to $v_4$).
			Since what we prove in the previous point must be true for all
			choices of $i$ and $j$, we have that all nodes $v_{h,4}$, $1\le h\le
			3$, must be on the same SP from $u_4$ to $v_4$, and all nodes in the
			form $v_{i,j,4}$, $1\le i< j\le 3$ must be on the other SP from
			$u_4$ to $v_4$. Consider now these three nodes, $v_{1,2,4}$,
			$v_{1,3,4}$, and $v_{2,3,4}$ and consider their ordering along the
			SP from $u_4$ to $v_4$ that they lay on. No matter what the ordering
			is, there is an index $h\in\{1,2,3\}$ such that the shortest path
			$p_h$ must go through the extreme two nodes in the ordering but not
			through the middle one. But this would contradict fact
			$\mathsf{F}_1$, so it is impossible that we have $v_{i,4}$ and
			$v_{j,4}$ on the same SP from $u_4$ to $v_4$ but $v_{i,j,4}$ is on
			the other SP, for any choice of $i$ and $j$.
	\end{itemize}

	We showed that the nodes $v_{i,4}$ and $v_{j,4}$ can not be on the same
	SP from $u_4$ to $v_4$. But this is true for any choice of the unordered
	pair $(i,j)$ and there are three such choices, but only two SPs from $u_4$
	to $v_4$, so it is impossible to accommodate all the constraints requiring
	$v_{i,4}$ and $v_{j,4}$ to be on different SPs from $u_4$ to $v_4$. Hence we
	reach a contradiction and $B$ can not be shattered.
\end{proof}

The following lemma shows that the bound in~\Cref{lem:unique} is tight.

\begin{lemma}\label{lem:uniquetight}
	There is an undirected graph $G=(V,E)$ such that there is a set $\{(u_i,
	v_i), u_i,v_i\in V, u_i\neq v_i, 1\le i\le 4\}$ with
	$|\SP_{u_1,v_1}|=|\SP_{u_2,v_2}|=2$ and $|\SP_{u_3,v_3}|=|\SP_{u_4,v_4}|=1$
	that is shattered.
\end{lemma}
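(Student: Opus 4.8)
The plan is to prove the lemma by an \emph{explicit construction}: I will exhibit one undirected graph $G=(V,E)$, four pairs $(u_i,v_i)$ with $\sigma_{u_1v_1}=\sigma_{u_2v_2}=2$ and $\sigma_{u_3v_3}=\sigma_{u_4v_4}=1$, and a quadruple of thresholds $x_1,\dots,x_4\in(0,1]$, and then show directly that the set $B=\{((u_i,v_i),x_i)\,:\,1\le i\le 4\}$ is shattered by $\family^+$. Recall from the rangeset description above that a range $R_w$ contains the point $((u_i,v_i),x_i)$ if and only if $w$ lies on some SP from $u_i$ to $v_i$ and $x_i\le\sigma_{u_iv_i}(w)/\sigma_{u_iv_i}$. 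The first step is to note that the pairs with two SPs give exactly the additional flexibility we need: for such a pair the fraction $\sigma_{u_iv_i}(w)/\sigma_{u_iv_i}$ takes the value $0$ (if $w$ is on neither SP), $1/2$ (if $w$ is on exactly one of the two SPs), or $1$ (if $w$ is on the common portion of both). By taking $x_i\in(0,1/2]$ for the two-SP pairs and $x_i=1$ for the single-SP pairs, membership of $w$ in $R_w$ reduces to the purely combinatorial condition ``$w$ lies on at least one SP of pair $i$'' (for $i\in\{1,2\}$) or ``$w$ lies on the unique SP of pair $i$'' (for $i\in\{3,4\}$).

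With this reduction, shattering $B$ amounts to a set-system realizability statement: for each of the $16$ subsets $S\subseteq\{1,2,3,4\}$ I must produce a witness vertex $w_S\in V$ that lies on a SP of pair $i$ exactly when $i\in S$. The second step is therefore to design $G$ so that the shortest-path ``bundles'' of the four pairs overlap in all sixteen incidence patterns. I would build a gadget per pair --- a single path realizing the SP for each of pairs $3$ and $4$, and a ``theta''-type subgraph (two internally disjoint paths of equal length between $u_i$ and $v_i$) realizing the two SPs for each of pairs $1$ and $2$ --- and then interleave their internal vertices so that some vertex lies simultaneously on precisely the chosen combination of bundles. The two-SP pairs are what make size-$4$ shattering possible: because each contributes a \emph{union of two paths} rather than a single path, there are enough distinct ``on a SP'' patterns to cover the $16$ required subsets, exactly the freedom that \Cref{lem:unique} shows is absent when three pairs have a single SP.

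The final step is verification, and this is where the main obstacle lies. One must check, for every $S$, both that $w_S$ lies on a SP of each pair in $S$ (so that $\sigma_{u_iv_i}(w_S)/\sigma_{u_iv_i}\ge x_i$) and that $w_S$ lies on no SP of any pair outside $S$ (so that the fraction is $0<x_i$). The delicate point is global consistency of distances: once the four gadgets share vertices to realize the higher-cardinality patterns, one must ensure that no \emph{unintended} shortest path is created between any $(u_i,v_i)$ --- which would violate $\sigma_{u_1v_1}=\sigma_{u_2v_2}=2$ or $\sigma_{u_3v_3}=\sigma_{u_4v_4}=1$ --- and that the intended paths remain strictly shortest. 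I would control this by using long ``spacer'' subpaths (or suitable edge weights) so that the four bundles are decoupled except at the designated shared witnesses, keeping each $\sigma_{u_iv_i}$ at its prescribed value. By \Cref{lem:unique} no such graph can exist if three of the pairs have a single SP, so the construction is necessarily tight, and exhibiting it completes the proof.
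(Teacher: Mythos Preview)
Your proposal is correct and follows essentially the same approach as the paper: an explicit construction, with thresholds $x_i=1/2$ on the two-SP pairs and $x_i=1$ on the single-SP pairs so that membership in $R_w$ reduces to the combinatorial condition ``$w$ lies on at least one SP of pair $i$'', followed by exhibiting a witness vertex for each of the $16$ subsets and verifying that no unintended shortest paths arise. The paper carries this out concretely with a $41$-vertex graph and a table listing the $16$ witnesses, which is exactly the verification step you identify as the main obstacle; your ``theta gadgets plus spacer subpaths'' description matches the structure of that graph.
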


\begin{figure}[htb]
\centering
\begin{tikzpicture}[scale=0.9]
\GraphInit[vstyle=Classic]
\tikzset{VertexStyle/.append style = { minimum size = 2 pt }}
\Vertex[Lpos=-90]{0}
\Vertex[x=1,y=0,Lpos=-90]{1}
\Vertex[x=2,y=0,Lpos=-90]{2}
\Vertex[x=3,y=0,Lpos=-90]{3}
\Vertex[x=4,y=0,Lpos=-90]{4}
\Vertex[x=5,y=0,Lpos=-90]{5}
\Vertex[x=6,y=0,Lpos=-90]{10}
\Vertex[x=7,y=0]{11}
\Vertex[x=2,y=1,Lpos=-180]{22}
\Vertex[x=2,y=2,Lpos=-180]{21}
\Vertex[x=2,y=3,Lpos=-180]{35}
\Vertex[x=2,y=4,Lpos=-180]{20}
\Vertex[x=2,y=5,Lpos=-180]{36}
\Vertex[x=2,y=6,Lpos=-180]{27}
\Vertex[x=2,y=7,Lpos=-180]{26}
\Vertex[x=2,y=8,Lpos=90]{25}
\Vertex[x=1,y=8,Lpos=90]{24}
\Vertex[x=0,y=8,Lpos=90]{23}
\Vertex[x=3,y=4,Lpos=90]{19}
\Vertex[x=3,y=6,Lpos=90]{37}
\Vertex[x=3,y=8,Lpos=90]{28}
\Vertex[x=4,y=3]{7}
\Vertex[x=4,y=6,Lpos=90]{34}
\Vertex[x=4,y=8,Lpos=90]{39}
\Vertex[x=5,y=1]{6}
\Vertex[x=5,y=5,Lpos=-90]{31}
\Vertex[x=5,y=6]{38}
\Vertex[x=5,y=8,Lpos=90]{33}
\Vertex[x=6,y=1]{9}
\Vertex[x=6,y=2]{8}
\Vertex[x=6,y=4,Lpos=-90]{18}
\Vertex[x=6,y=5]{29}
\Vertex[x=6,y=6]{30}
\Vertex[x=6,y=7]{32}
\Vertex[x=6,y=8]{40}
\Vertex[x=7,y=1]{12}
\Vertex[x=7,y=2]{13}
\Vertex[x=7,y=3]{14}
\Vertex[x=7,y=4]{15}
\Vertex[x=8,y=3]{16}
\Vertex[x=8,y=5]{17}
\Edge(0)(1)
\Edge(1)(2)
\Edge(2)(3)
\Edge(2)(22)
\Edge(3)(4)
\Edge(4)(5)
\Edge(5)(6)
\Edge(6)(7)
\Edge(7)(19)
\Edge(22)(21)
\Edge(21)(8)
\Edge(21)(35)
\Edge(8)(9)
\Edge(9)(10)
\Edge(10)(11)
\Edge(11)(12)
\Edge(12)(13)
\Edge(13)(14)
\Edge(14)(15)
\Edge(15)(18)
\Edge(15)(16)
\Edge(15)(17)
\Edge(18)(19)
\Edge(19)(20)
\Edge(35)(20)
\Edge(20)(36)
\Edge(36)(27)
\Edge(27)(26)
\Edge(27)(37)
\Edge(37)(34)
\Edge(22)(26)
\Edge(26)(25)
\Edge(25)(24)
\Edge(24)(23)
\Edge(25)(28)
\Edge(28)(39)
\Edge(39)(33)
\Edge(33)(40)
\Edge(40)(32)
\Edge(32)(30)
\Edge(30)(29)
\Edge(29)(18)
\Edge(29)(31)
\Edge(31)(38)
\Edge(38)(34)
\end{tikzpicture}
\caption{Graph for~\Cref{lem:uniquetight}}
\label{fig:fourshatter}
\end{figure}

 \begin{proof}
	Consider the undirected graph $G=(V,E)$ in~\Cref{fig:fourshatter}.
	There is a single SP from $0$ to $16$:
	\[
		0, 1, 2, 22, 21, 35, 20, 19, 18, 15, 16\enspace.
	\]
	There is a single SP from $23$ to $17$:
	\[
		23, 24, 25, 26, 27, 36, 20, 19, 18, 15, 17\enspace.
	\]
	There are exactly two SPs from $5$ to $33$:
	\begin{align*}
		&5, 4, 3, 2, 22, 26, 25, 28, 39, 33 \mbox{ and} \\
		&5, 6, 7, 18, 18, 29, 30, 32, 40, 33\enspace.
	\end{align*}
	There are exactly two SPs from $11$ to $34$:
	\begin{align*}
		11, 10, 9, 8, 21, 22, 26, 27, 37, 34 \mbox{ and} \\
		11, 12, 13, 14, 15, 18, 29, 31, 38, 34\enspace.
	\end{align*}
	Let
	$a=((0,16),1)$, $b=((23,17),1)$, $c=((5,33),1/2)$, and $d=((11,34), 1/2)$.
	We can shatter the set $Q=\{a,b,c,d\}$, as shown in~\Cref{tab:uniquetight}.
 \end{proof}

 \begin{table}[htb]
  \centering
    \begin{tabular}{cc}
		\toprule
		$P\subseteq Q$ & Vertex $v$ such that $P=Q\cap R_v$ \\
		\midrule
		$\emptyset$ & 0\\
		$\{a\}$ & 1\\
		$\{b\}$ & 24\\
		$\{c\}$ & 40\\
		$\{d\}$ & 38\\
		$\{a,b\}$ & 20\\
		$\{a,c\}$ & 2\\
		$\{a,d\}$ & 21\\
		$\{b,c\}$ & 25\\
		$\{b,d\}$ & 27\\
		$\{c,d\}$ & 29\\
		$\{a,b,c\}$ & 19\\
		$\{a,b,d\}$ & 15\\
		$\{a,c,d\}$ & 22\\
		$\{b,c,d\}$ & 26\\
		$\{a,b,c,d\}$ & 18\\
		\bottomrule
	\end{tabular}
	\caption{How to shatter $Q=\{a,b,c,d\}$ from~\Cref{lem:uniquetight}.}
	\label{tab:uniquetight}
 \end{table}


We pose the following conjecture, which would allow us to
generalize~\Cref{lem:unique}, and develop an additional stopping rule for
\staticalgo based on the empirical pseudodimension.
\begin{conjecture}
	Given $n>0$, there exists no undirected graph $G=(V,E)$ such that it is
	possible to shatter a set
	\[
		B=\{((u_i, v_i),x_i), 1\le i\le n\}\subseteq\domain\times[0,1]
	\]
	if
	\[
		\sum_{i=1}^n \sigma_{u_iv_i}|<\binom{n}{\lfloor n/2\rfloor}\enspace.
	\]
\end{conjecture}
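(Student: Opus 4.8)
The plan is to argue the contrapositive: if the set $B=\{((u_i,v_i),x_i): 1\le i\le n\}$ is shattered by $\family^+$, then $\sum_{i=1}^n\sigma_{u_iv_i}\ge\binom{n}{\lfloor n/2\rfloor}$. First I would normalize the instance exactly as in the proof of~\Cref{lem:unique}: by~\Cref{lem:uniqueelement,lem:nozero} the pairs $(u_i,v_i)$ are distinct and every $x_i\in(0,1]$, and by setting a threshold $t_i=\lceil x_i\sigma_{u_iv_i}\rceil\in\{1,\dots,\sigma_{u_iv_i}\}$ the condition ``$w$ is labelled $i$'' becomes the purely combinatorial statement that at least $t_i$ of the $\sigma_{u_iv_i}$ shortest paths of pair $i$ pass through $w$. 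Shattering then says that for each of the $2^n$ subsets $P\subseteq[n]$ there is a witness vertex $w_P$ whose label set is exactly $P$, and all these witnesses are necessarily distinct.

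The core of the argument would be structural. Generalizing fact $\mathsf{F}_1$ of~\Cref{lem:unique} (two shortest paths that meet and then separate cannot meet again), I would show that, restricted to a single shortest path $p$ viewed as a line ordered by distance from its source, the set of vertices satisfying pair $j$'s condition is a contiguous interval $I_j\subseteq p$. Hence every witness lying on $p$ has a label equal to the set of intervals containing it, so $p$ realizes only the labels of an arrangement of at most $n$ intervals on a line, i.e.\ at most $2n+1$ distinct labels. Covering all of $2^{[n]}$ by these per-path label sets already yields the weak bound $\sum_i\sigma_{u_iv_i}\ge 2^n/(2n+1)$. To reach the sharp constant $\binom{n}{\lfloor n/2\rfloor}$ I would combine this with a Sperner/LYM-type estimate concentrated on the middle level, together with an induction on $n$ that mirrors the pigeonhole at the heart of~\Cref{lem:unique}: there the $\binom{3}{1}=3$ ``pairwise'' witnesses had to be distributed among the $\sigma_{u_4v_4}=2$ shortest paths of one pair, and the failure $3>2$ produced the contradiction. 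The general statement should emerge by charging, for a symmetric-chain or middle-level decomposition of $2^{[n]}$, each part to a distinct shortest path, so that $\sum_i\sigma_{u_iv_i}$ is at least the size $\binom{n}{\lfloor n/2\rfloor}$ of the largest antichain.

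I expect the decisive obstacle to be precisely this last sharpening, and it is what keeps the statement at the level of a conjecture. The clean interval structure above, and with it the no-re-crossing fact $\mathsf{F}_1$, is only available when the pairs have a single shortest path; as soon as some $\sigma_{u_iv_i}\ge2$ the loci where the threshold condition holds need no longer be contiguous, and the witnesses for different subsets may be spread across several shortest paths of the same pair in intricate ways. This is already visible in~\Cref{lem:unique}, where admitting a single pair with $\sigma_{u_4v_4}=2$ forced the long case analysis over the orderings of the witnesses $v_{i,4}$, $v_{j,4}$, $v_{i,j,4}$, and where~\Cref{lem:uniquetight} shows the bound is then attained with equality. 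Turning the elementary cell-counting bound $2^n/(2n+1)$ into the antichain bound $\binom{n}{\lfloor n/2\rfloor}$ while simultaneously controlling how the shortest paths of $n$ different pairs interleave in an arbitrary graph is the combinatorial heart of the problem, and the explosion of the case analysis with $n$ is the reason a complete proof remains elusive.
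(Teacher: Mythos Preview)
The paper does not prove this statement: it is explicitly posed as a conjecture, with no proof attempt, immediately after \Cref{lem:uniquetight}. So there is nothing to compare your proposal against.

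That said, your write-up is appropriate in spirit. You correctly recognize that the statement is open, and your final two paragraphs accurately diagnose why: the interval/no-recrossing structure (fact $\mathsf{F}_1$) that drives \Cref{lem:unique} breaks down once multiple pairs have $\sigma_{u_iv_i}\ge 2$, and the case analysis explodes. Your weak bound $\sum_i\sigma_{u_iv_i}\ge 2^n/(2n+1)$ via interval arrangements on a single shortest path is a reasonable heuristic, but note that even the claim ``the set of vertices on a fixed shortest path $p$ satisfying pair $j$'s threshold condition is a contiguous interval'' is not obviously true when $\sigma_{u_jv_j}\ge 2$: the count $\sigma_{u_jv_j}(w)$ along $p$ need not be unimodal, so the interval structure you rely on may fail before you even get to the Sperner/LYM sharpening. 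The honest conclusion is the one you reach: this remains a conjecture, and your outline identifies plausible ingredients without closing the gap.
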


\section{Additional Experimental Results}\label{app:exper}
In this section we show additional experimental results, mostly limited to
additional figures like \Cref{fig:schedules,fig:errors} but for other graphs.
The figures we present here exhibits the exact same behavior as those
in \Cref{sec:exper}, and that is why we did not include include them in the main
text. Figures corresponding to \Cref{fig:schedules} are shown in
\Cref{fig:appschedules} and those corresponding to \Cref{fig:errors} are shown in
\Cref{fig:apperrors}.

\begin{figure*}[ht]
	\centering
	\begin{subfigure}[b]{.325\linewidth}
		\centering
		\includegraphics[width=\linewidth]{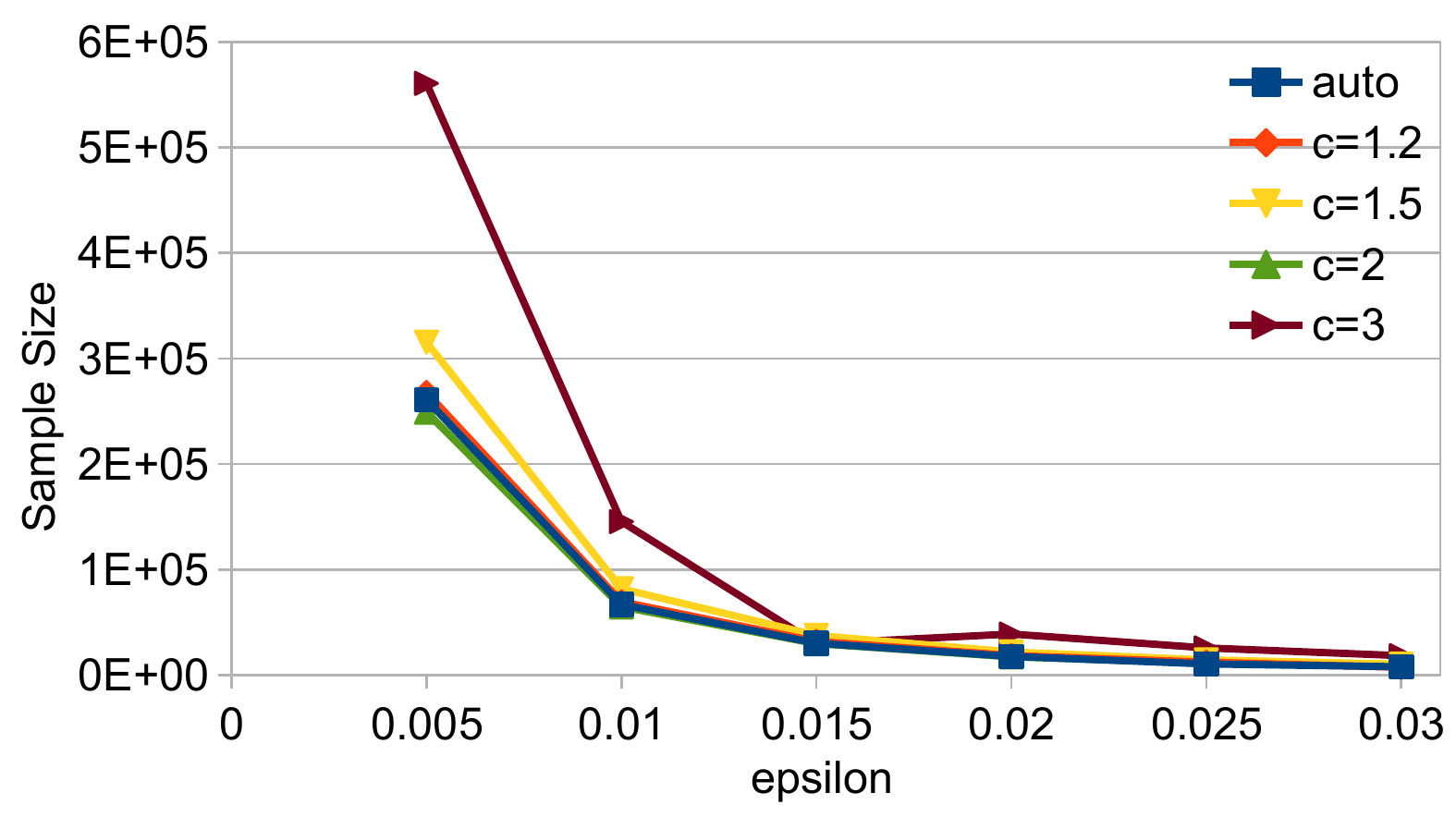}
		\caption{Email-Enron}\label{fig:enronschedules}
	\end{subfigure}
	\begin{subfigure}[b]{.325\linewidth}
		\centering
		\includegraphics[width=\linewidth]{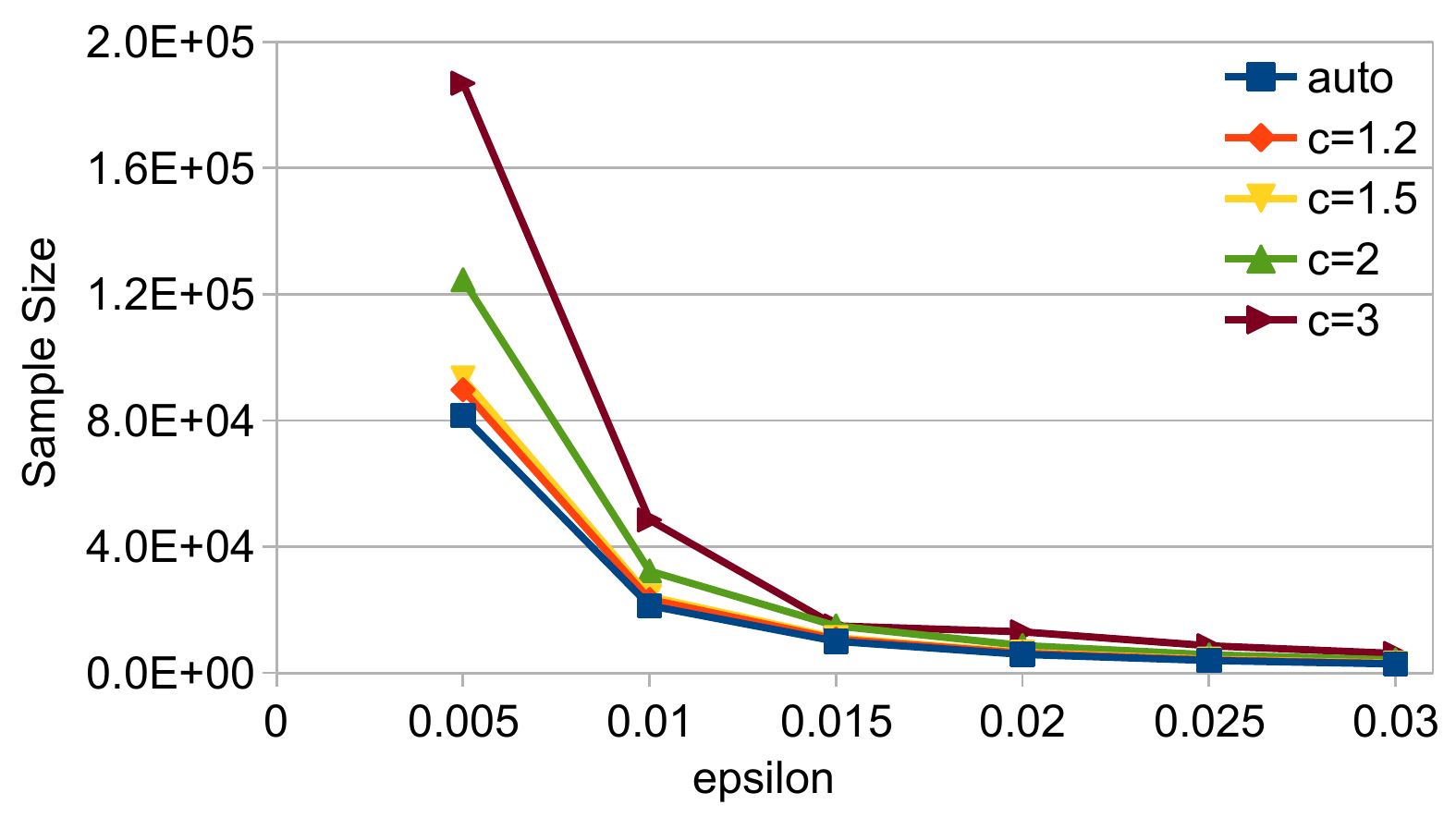}
		\caption{Soc-Epinions1}\label{fig:epinionsschedules}
	\end{subfigure}
	\begin{subfigure}[b]{.325\linewidth}
		\centering
		\includegraphics[width=\linewidth]{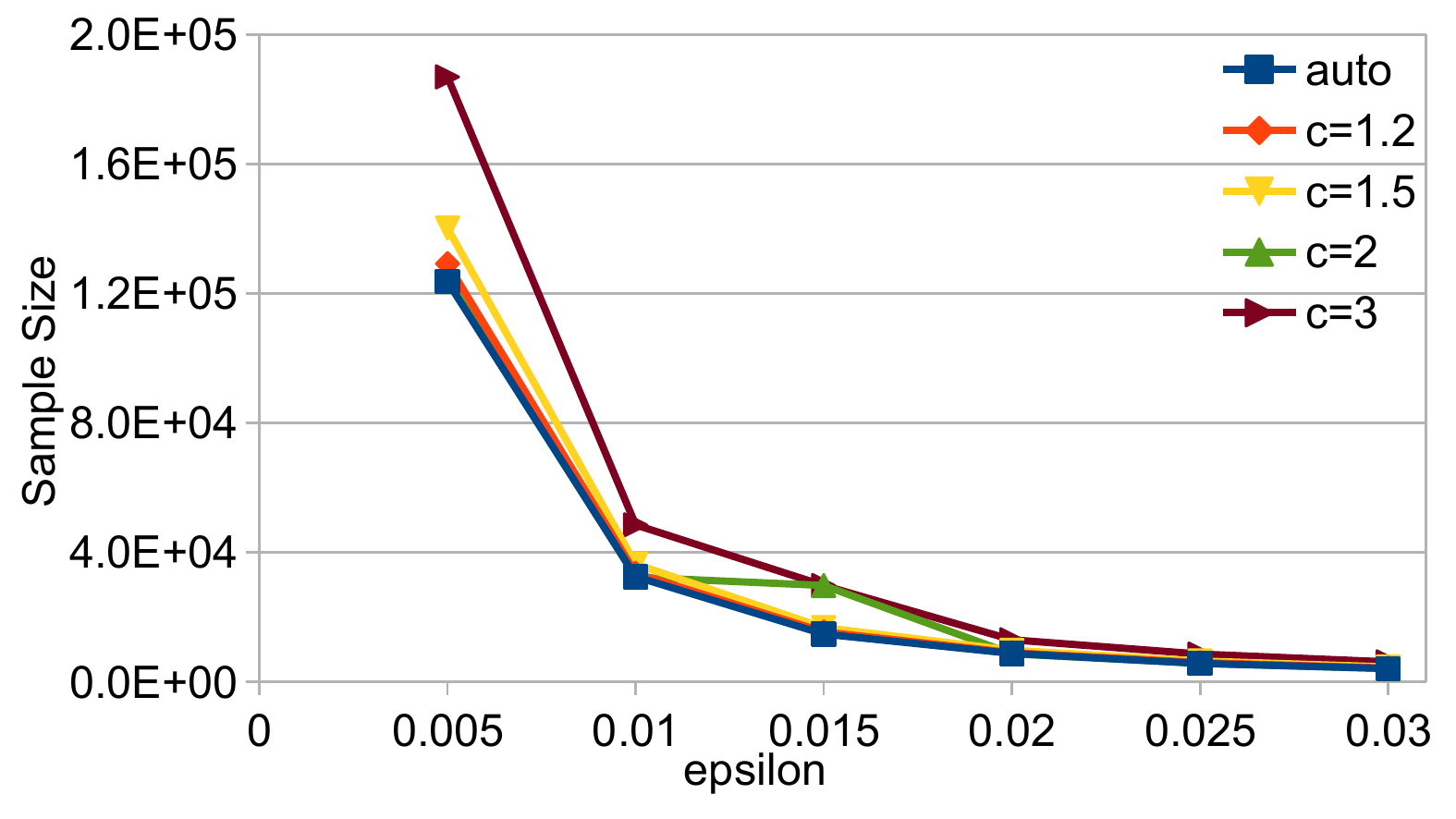}
		\caption{Cit-HepPh}\label{fig:hepphschedules}
	\end{subfigure}
	\caption{Final sample size for different sample schedules}
	\label{fig:appschedules}
\end{figure*}

\begin{figure*}[ht]
	\centering
	\begin{subfigure}[b]{.325\linewidth}
		\centering
		\includegraphics[width=\linewidth]{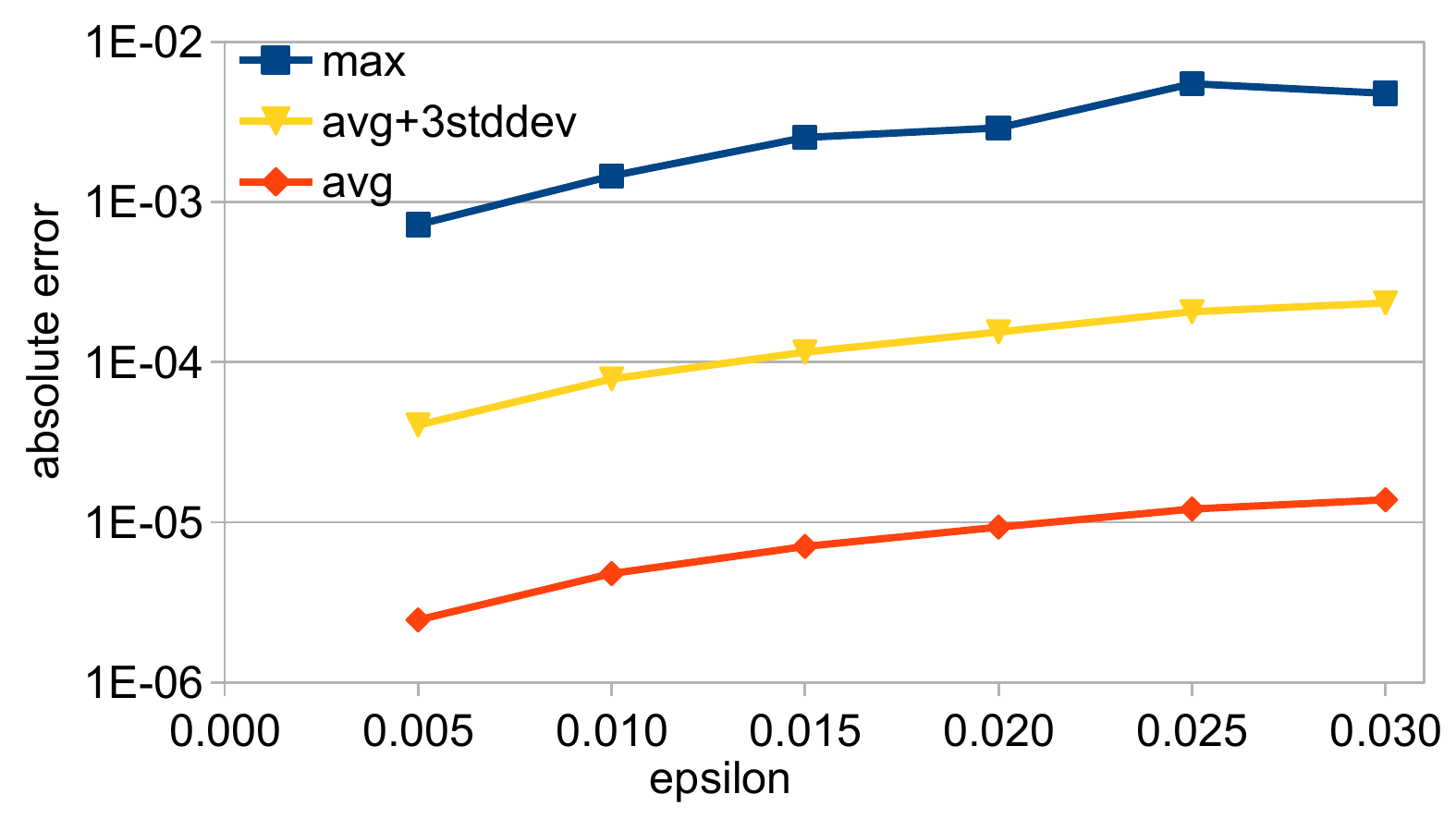}
		\caption{Email-Enron}\label{fig:enronerror}
	\end{subfigure}
	\begin{subfigure}[b]{.325\linewidth}
		\centering
		\includegraphics[width=\linewidth]{epinions-error.pdf}
		\caption{Soc-Epinions1}\label{fig:epinionserror}
	\end{subfigure}
	\begin{subfigure}[b]{.325\linewidth}
		\centering
		\includegraphics[width=\linewidth]{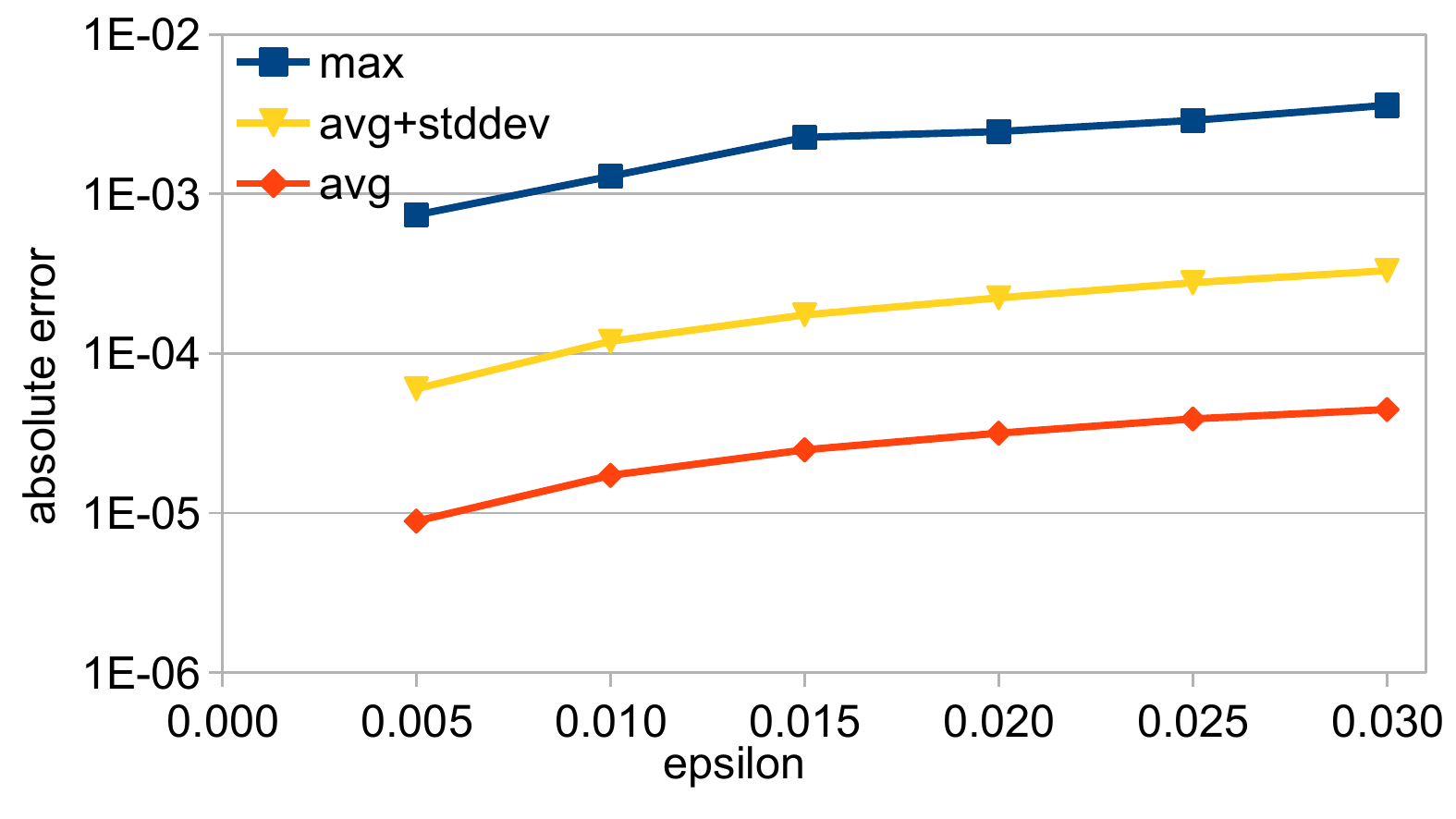}
		\caption{Cit-HepPh}\label{fig:heppherror}
	\end{subfigure}
	\caption{Absolute error evaluation}
	\label{fig:apperrors}
\end{figure*}

\section{Relative-error Rademacher Averages}\label{app:raderel}
\newcommand*\prob{\pi}

In this note we show how to obtain \emph{relative
$(p,\varepsilon)$-approximations} as defined by~\citet{HarPS11} (see
Def.~\ref{def:peapprox}) using a relative-error variant of the Rademacher
averages.

\subsection{Definitions}
Let $\domain$ be some domain, and $\family$ be a family of functions from
$\domain$ to $[a,b]$, an interval of the non-negative reals.\footnote{We
conjecture that the restriction to the non-negative reals can be easily removed.}
Assume that $\prob$ is a probability distribution on $\domain$. For any
$f\in\family$, let $\expectation_\prob[f]$ be the expected value of $f$
w.r.t.~$\prob$. Let $A=\{a_1,\dotsc,a_n\}$ be a collection of $n$ elements of
$\domain$. For any $f\in\family$, let
\[
	\tilde{f}(A)=\frac{1}{n}\sum_{i=1}^n f(a_i)\enspace.
\]

\begin{definition}[\citep{HarPS11}]\label{def:peapprox}
	Given $p\in(0,1)$ and $\varepsilon\in(0,1)$, a \emph{relative
	$(p,\varepsilon)$-approximation} for $\family$ is a collection $A$ of
	elements of $Z$ such that
	\begin{equation}\label{eq:peapprox}
		\sup_{f\in\family}\frac{|\expectation_\prob[f]-\tilde{f}(A)|}{\max\{p,\expectation_\prob[f]\}}\le\varepsilon\enspace.
	\end{equation}
\end{definition}

\paragraph*{Fixed-sample bound} \citet{HarPS11} showed that, when the functions
of $\family$ only take values in $\{0,1\}$ and $\family$ has finite
VC-dimension, then a sufficiently large collection $\Sam$ of $n$ elements of
$\domain$ sampled independently according to $\prob$ is a relative
$(p,\varepsilon)$-approximation for $\family$ with probability at least
$1-\delta$, for $\delta\in(0,1)$.

\begin{theorem}[Thm.~2.11~\citep{HarPS11}]
	Let $\family$ be a family of functions from $\domain$ to $\{0,1\}$, and let
	$d$ be the VC-dimension of $\family$. Given $p,\varepsilon,\delta\in(0,1)$, let
	\begin{equation}\label{eq:HarPSsamplesize}
		n=O\left(\frac{1}{\varepsilon^2p}\left(d\log\frac{1}{p}+\ln\frac{1}{\delta}\right)\right),
	\end{equation}
	and let $\Sam$ be a collection of $n$ elements of $\domain$ sampled
	independently according to $\prob$. Then,
	\[
		\Pr\left(\sup_{f\in\family}\frac{|\expectation_\prob[f]-\tilde{f}(A)|}{\max\{p,\expectation_\prob[f]\}}>\varepsilon\right)<\delta,
	\]
	or, in other words, $\Sam$ is a relative $(p,\varepsilon)$-approximation for
	$\family$ with probability at least $1-\delta$.
\end{theorem}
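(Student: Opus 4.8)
The plan is to reduce the statement to a uniform relative-deviation bound for the range space induced by $\family$, and then to establish that bound by the classical symmetrization-plus-union-bound argument, adapted to the multiplicative normalization. Since every $f\in\family$ takes values in $\{0,1\}$, I would first identify $f$ with the range $R_f=\{x\in\domain : f(x)=1\}$, so that $\expectation_\prob[f]=\prob(R_f)$ and $\tilde f(\Sam)=\hat\prob_\Sam(R_f)$, the empirical measure of $R_f$ on $\Sam$. The quantity to control then becomes $\sup_{R}|\prob(R)-\hat\prob_\Sam(R)|/\max\{p,\prob(R)\}$, and the hypothesis $\VC(\family)\le d$ is exactly a bound on the VC-dimension of this range space.

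The first main step is symmetrization with a ghost sample. I would introduce a second independent collection $\Sam'$ of $n$ elements drawn from $\prob$, and show that the probability that the true-to-empirical relative deviation exceeds $\varepsilon$ is bounded, up to a constant factor, by the probability that the empirical-to-empirical relative deviation between $\Sam$ and $\Sam'$ exceeds $\varepsilon/2$. The care here is in the choice of denominator: because the normalization is by $\max\{p,\prob(R)\}$ rather than an additive constant, I would propagate the relative scaling through the symmetrization by lower-bounding $\prob(R)$ via $\hat\prob_{\Sam'}(R)$ on the event where the ghost sample is itself accurate, handling the complementary (low-probability) event separately with a one-sided tail.

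The second main step is the union bound over the combined sample. Conditioning on the multiset $\Sam\cup\Sam'$ of $2n$ points and invoking Sauer--Shelah, the number of distinct ranges restricted to these points is at most $(2en/d)^d$. I would then randomize the assignment of points to $\Sam$ versus $\Sam'$ by independent swaps within matched pairs and, for each fixed restricted range, bound the probability that the empirical-to-empirical relative deviation exceeds $\varepsilon/2$. Because this difference is a sum of independent bounded variables whose variance scales with the range measure, a variance-sensitive (Bernstein/Chernoff) bound gives a failure probability of order $\exp(-c\,\varepsilon^2 p\, n)$, the worst case occurring for ranges of measure $\approx p$; the truncation at $p$ in the denominator is precisely what keeps this estimate from degrading on vanishingly small ranges. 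A union bound over the $(2en/d)^d$ restricted ranges yields total failure probability at most $(2en/d)^d\exp(-c\varepsilon^2 p n)$, and requiring this to be at most $\delta$ and solving for $n$ produces the stated size $n=O\!\left(\frac{1}{\varepsilon^2 p}\left(d\log\frac{1}{p}+\ln\frac{1}{\delta}\right)\right)$; the $\log(1/p)$ factor arises from the $d\log(2n)$ term once $n$ is balanced against $1/(\varepsilon^2 p)$.

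The step I expect to be the main obstacle is this relative concentration inequality together with the matching relative symmetrization: replacing Hoeffding's additive bound by a variance-sensitive estimate and correctly carrying the $\max\{p,\prob(R)\}$ normalization through both the symmetrization and the tail bound. Getting the constants and the truncation-at-$p$ bookkeeping exactly right, so that small-measure ranges do not inflate the relative deviation, is the delicate part; everything else is the standard VC machinery, and one could alternatively invoke a known relative uniform-convergence inequality as a black box in place of steps two and three.
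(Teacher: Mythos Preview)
The paper does not prove this theorem. It is quoted verbatim as Thm.~2.11 of~\citep{HarPS11} in the ``Fixed-sample bound'' paragraph of Appendix~D, purely as background for the section on relative-error Rademacher averages; no argument is given, and the surrounding text immediately moves on to the paper's own (different) results, namely \Cref{thm:raderel} and \Cref{thm:radeboundwr}. So there is no ``paper's proof'' to compare your proposal against.

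That said, your sketch is a faithful outline of the standard proof of this relative-approximation bound as it appears in \citep{HarPS11} and, before that, in \citep{LiLS01}: identify each $f$ with its indicator range, symmetrize with a ghost sample while carrying the $\max\{p,\prob(R)\}$ normalization, apply Sauer--Shelah on the doubled sample, and use a variance-sensitive (Chernoff/Bernstein-type) tail so that the exponent scales like $\varepsilon^2 p n$ rather than $\varepsilon^2 n$. Your identification of the delicate point---tracking the truncation at $p$ through symmetrization and using a multiplicative rather than additive concentration inequality---is exactly right; that is where the $\log(1/p)$ factor and the $1/(\varepsilon^2 p)$ dependence come from. If you want to turn this into a self-contained proof, the cleanest route is to follow the relative-deviation argument in \citep{LiLS01} (for the quantity in~\eqref{eq:LiLSapprox}) and then convert to the $\max\{p,\cdot\}$ normalization as in \citep[Thm.~2.9(ii)]{HarPS11}, which is precisely what the paper alludes to in the ``Related works'' paragraph.
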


\paragraph*{Related works} The bound in~\eqref{eq:HarPSsamplesize} is an
extension of a result by~\citet{LiLS01} obtained for families of
\emph{real-valued} functions taking values in $[0,1]$, and using the
\emph{pseudodimension} of the family instead of the VC-dimension. The
original result by~\citet{LiLS01} shows how large should $\Sam$ be in order for
the quantity
\begin{equation}\label{eq:LiLSapprox}
	\sup_{f\in\family}\frac{|\expectation_\prob[f]-\tilde{f}(\Sam)|}{\expectation_\prob[f]+\tilde{f}(\Sam)+p}
\end{equation}
to be at most $\varepsilon$ with probability at least $1-\delta$. Some constant
factors are lost in the adaptation of the measure from~\eqref{eq:LiLSapprox} to
the one on the l.h.s.~of~\eqref{eq:peapprox}. The quantity
in~\eqref{eq:LiLSapprox} has been studied often in the literature of
statistical learning theory, see for example~\citep[Sect.~5.5]{AnthonyB99},
\citep[Sect.~5.1]{BoucheronBL05}, and \citep{Haussler92}, while other works
(e.g.,~\citep[Sect.~5.1]{BoucheronBL05}, \citep{CortesGM13},
\citep{AnthonyST93}, and \citep{BartlettL99}) focused on the quantity
\[
	\sup_{f\in\family}\frac{|\expectation_\prob[f]-\tilde{f}(\Sam)|}{\sqrt{\expectation_\prob[f]}}\enspace.
\]

\subsection{\texorpdfstring{Obtaining a relative
$(p,\varepsilon)$-approximation}{Obtaining a relative (p,epsilon)-approximation}}
In this note we study how to bound the quantity on the
l.h.s.~of~\eqref{eq:peapprox} directly, without going through the quantity
in~\eqref{eq:LiLSapprox}. By following the same steps
as~\citep[Thm.~2.9(ii)]{HarPS11}, we can extend our results to the quantity
in~\eqref{eq:LiLSapprox}. The advantage of tackling the problem directly is that
we can derive \emph{sample-dependent bounds} with \emph{explicit} constants.
Moreover, the use of (a variant of) Rademacher averages allows us to obtain
stricter bounds to the sample size.

Let $\Sam=\{X_1,\dotsc,X_n\}$ be a collection of $n$ elements from $\domain$
sampled independently according to $\prob$. Let $\sigma_1,\dotsc,\sigma_n$ be
independent Rademacher random variables $\sigma_1,\dotsc,\sigma_n$, independent
from the samples. Consider now the random variable
\[
	\mathsf{r}\rade(\family,\Sam,p)=\expectation_\sigma\left[
		\sup_{f\in\family}\frac{1}{n\max\{p,\expectation_\prob[f]\}}\sum_{i=1}^n\sigma_if(X_i)\right],
\]
which we call the \emph{conditional $p$-relative Rademacher average of $\family$ on
$\Sam$}. We have the following result connecting this quantity to the
$(p,\varepsilon)$ approximation condition.

\begin{theorem}\label{thm:raderel}
	Let $\Sam$ be a collection of $n$ elements of $\domain$ sampled
	independently according to $\prob$. With probability at least $1-\delta$,
	\[
		\sup_{f\in\family}\frac{|\expectation_\prob[f]-\tilde{f}(\Sam)|}{\max\{p,\expectation_\prob[f]\}}\le
		2 \mathsf{r}\rade(\Sam,\family,p) + 3\frac{|b-a|}{p}\sqrt{\frac{\ln(2/\delta)}{n}}\enspace.
	\]
\end{theorem}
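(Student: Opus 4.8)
The plan is to reduce the claim to the ordinary (conditional) Rademacher deviation bound of~\Cref{thm:supdevbound} by rescaling each function by its own, \emph{deterministic}, normalizer. For $f\in\family$ write $M_f=\max\{p,\expectation_\prob[f]\}$ and set $g_f=f/M_f$. Since $\expectation_\prob[f]$ does not depend on the sample, $M_f$ is a constant and $g_f$ is a fixed function on $\domain$; let $\family'=\{g_f : f\in\family\}$. Then $\expectation_\prob[g_f]=\expectation_\prob[f]/M_f$ and $\tilde g_f(\Sam)=\tilde f(\Sam)/M_f$, so that
\[
\sup_{f\in\family}\frac{|\expectation_\prob[f]-\tilde f(\Sam)|}{\max\{p,\expectation_\prob[f]\}}=\sup_{g\in\family'}|\expectation_\prob[g]-\tilde g(\Sam)|,
\]
i.e.\ the left-hand side is exactly the maximum deviation~\eqref{eq:supdev} for $\family'$. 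Moreover, pulling the constant $1/M_f$ inside the Rademacher sum gives $\rade(\family',\Sam)=\mathsf{r}\rade(\family,\Sam,p)$. Hence the target inequality is precisely the bound of~\Cref{thm:supdevbound} applied to $\family'$, once that theorem is read for real-valued families with the appropriate range.

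The only quantity I then need to control is the \emph{width} of the range of the functions in $\family'$. Because $M_f\ge p$, every $g_f$ takes values in $[a/M_f,b/M_f]$ and, more to the point, has oscillation $\sup g_f-\inf g_f\le (b-a)/M_f\le (b-a)/p$. This is what enters the concentration step: replacing one sample $X_i$ by $X_i'$ changes $\tilde g(\Sam)$ by at most $(b-a)/(np)$ uniformly in $g\in\family'$, so both the supremum-deviation functional and the map $\Sam\mapsto\rade(\family',\Sam)$ satisfy the bounded-difference property with constants $(b-a)/(np)$.

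With these two ingredients, I would follow verbatim the proof of~\Cref{thm:supdevbound} for real-valued families (see~\citep{ShalevSBD14,BoucheronBL05}): a bounded-differences (McDiarmid) concentration of $\sup_{g\in\family'}|\expectation_\prob[g]-\tilde g(\Sam)|$ around its mean, symmetrization to the expected Rademacher average of $\family'$, and a second bounded-differences step to replace that expectation by the conditional average $\rade(\family',\Sam)$. The absolute value and the resulting coefficient $3$ are handled exactly as in~\Cref{thm:supdevbound}; the only modification is that each of the two McDiarmid terms now carries the factor $(b-a)/p$ from the oscillation bound, so that the additive part becomes $3\tfrac{b-a}{p}\sqrt{\ln(2/\delta)/(2n)}$. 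Using the identity $\rade(\family',\Sam)=\mathsf{r}\rade(\family,\Sam,p)$ and $\sqrt{1/(2n)}\le\sqrt{1/n}$ yields
\[
\sup_{f\in\family}\frac{|\expectation_\prob[f]-\tilde f(\Sam)|}{\max\{p,\expectation_\prob[f]\}}\le 2\,\mathsf{r}\rade(\family,\Sam,p)+3\frac{|b-a|}{p}\sqrt{\frac{\ln(2/\delta)}{n}},
\]
as claimed.

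The step I expect to need the most care is the concentration argument: I must make sure the bounded-difference constant is governed by the \emph{oscillation} $(b-a)/p$ and not by the cruder absolute bound $b/p$, and that this single width controls \emph{both} McDiarmid applications so that the three additive contributions combine into the stated $3\tfrac{|b-a|}{p}$. The uniform lower bound $\max\{p,\expectation_\prob[f]\}\ge p$ is precisely what makes this constant uniform over $\family$ despite the per-function normalizers; without it the per-function ranges would not admit a common bounded-difference constant, and the reduction of the first paragraph would not produce a single clean coefficient. Everything else is the textbook Rademacher chain applied to the rescaled family $\family'$, and if one invokes the $[a,b]$-valued form of~\Cref{thm:supdevbound} mentioned in~\Cref{sec:radeprelims} as a black box, the argument collapses to the rescaling identity of the first paragraph together with the oscillation bound of the second.
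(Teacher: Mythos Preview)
Your proposal is correct and takes essentially the same route as the paper: the paper states that the proof follows the proof of~\Cref{thm:supdevbound} step by step, the only new ingredient being that the relative supremum-deviation functional and the conditional $p$-relative Rademacher average satisfy the bounded-difference inequality with constants $|b-a|/(np)$ (these are exactly Lemmas~\ref{lem:boundedsup} and~\ref{lem:boundedrade}). Your explicit rescaling $g_f=f/M_f$ is simply a clean repackaging of that same verification---once you observe that $\rade(\family',\Sam)=\mathsf{r}\rade(\family,\Sam,p)$ and that each $g_f$ has oscillation at most $(b-a)/p$, the two bounded-difference lemmas of the paper drop out immediately, and the remainder is the standard symmetrization--McDiarmid chain.
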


The proof of~\Cref{thm:raderel} follows step by step the proof
of~\Cref{thm:supdevbound} (\citep[Thm.~26.4]{ShalevSBD14}), with the only
important difference that we need to show that the quantities
\[
	\sup_{f\in\family}\frac{\expectation_\prob[f]-\tilde{f}(\Sam)}{\max\{p,\expectation_\prob[f]\}}
\]
and $\mathsf{r}\rade(\family,\Sam,p)$, seen as functions of
$\Sam=\{X_1,\dotsc,X_n\}$, satisfy the \emph{bounded difference inequality}.

\begin{definition}[Bounded difference inequality]
	Let $g : \mathcal{X}^n \rightarrow \mathbb{R}$ be a function of $n$
	variables. The function $g$ is said to satisfy the bounded difference
	inequality iff for each $i$, $1\le i\le n$ there is a nonnegative constant
	$c_i$ such that:
	\begin{equation}\label{eq:bounded}
		\sup_{\substack{x_1,\dotsc,x_n\\x_i'\in\mathcal{X}}}|g(x_1,\dotsc,x_n)-g(x_1,\dotsc,x_{i-1},x'_i,x_{i+1},\dotsc,x_n)|\le
			c_i\enspace.
	\end{equation}
\end{definition}

We have the following results, showing that indeed the quantities above satisfy
the bounded difference inequality.

\begin{lemma}\label{lem:boundedsup}
	The function
	\[
		g(X_1,\dotsc,X_n)=\sup_{f\in\family}\frac{\expectation_\prob[f]-\tilde{f}(\Sam)}{\max\{p,\expectation_\prob[f]\}}
	\]
	satisfies the bounded difference inequality~\eqref{eq:bounded} with
	constants
	\[
		c_i=\frac{|b-a|}{np}\enspace.
	\]
\end{lemma}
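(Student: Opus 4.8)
The plan is to bound, for an arbitrary coordinate $i$, the effect on $g$ of replacing the single sample $X_i$ by some alternative $X_i'$, and to do so by first controlling the per-function perturbation and only then lifting the bound to the supremum. Fix $i$ and write $\Sam=(X_1,\dotsc,X_n)$ and $\Sam'=(X_1,\dotsc,X_{i-1},X_i',X_{i+1},\dotsc,X_n)$, so that the two samples differ only in their $i$-th element. The key structural observation is that for each fixed $f\in\family$ the numerator $\expectation_\prob[f]$ and the denominator $\max\{p,\expectation_\prob[f]\}$ do \emph{not} depend on the sample at all; the entire dependence on $X_i$ enters through the single summand $f(X_i)/n$ inside $\tilde{f}(\Sam)$. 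This is what will keep the bounded-difference constant clean despite the relative normalization.

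First I would establish the per-function change. Fixing $f$, since $f$ takes values in $[a,b]$ we have $|f(X_i)-f(X_i')|\le|b-a|$, so that
\[
\left|\tilde{f}(\Sam)-\tilde{f}(\Sam')\right|=\frac{1}{n}\left|f(X_i)-f(X_i')\right|\le\frac{|b-a|}{n}\enspace.
\]
Dividing by the (sample-independent) denominator $\max\{p,\expectation_\prob[f]\}\ge p$ then gives, for every $f\in\family$,
\[
\left|\frac{\expectation_\prob[f]-\tilde{f}(\Sam)}{\max\{p,\expectation_\prob[f]\}}-\frac{\expectation_\prob[f]-\tilde{f}(\Sam')}{\max\{p,\expectation_\prob[f]\}}\right|\le\frac{|b-a|}{np}\enspace.
\]

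Next I would lift this uniform per-function estimate to the supremum using the elementary inequality $|\sup_f a_f-\sup_f b_f|\le\sup_f|a_f-b_f|$: taking $a_f$ and $b_f$ to be the ratio evaluated on $\Sam$ and on $\Sam'$, one has $a_f\le b_f+\sup_g|a_g-b_g|$ for each $f$, and passing to the supremum over $f$ (together with the symmetric statement) yields the bound. Combined with the display above this gives
\[
|g(\Sam)-g(\Sam')|\le\sup_{f\in\family}\left|\frac{\expectation_\prob[f]-\tilde{f}(\Sam)}{\max\{p,\expectation_\prob[f]\}}-\frac{\expectation_\prob[f]-\tilde{f}(\Sam')}{\max\{p,\expectation_\prob[f]\}}\right|\le\frac{|b-a|}{np}\enspace.
\]
Since $i$ was arbitrary, this establishes~\eqref{eq:bounded} with $c_i=|b-a|/(np)$.

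There is no genuine obstacle here; the only step that needs care is the passage from the pointwise bound to the supremum, which rests on the sup-difference inequality and not on anything specific to betweenness. The one fact worth double-checking explicitly is that the denominator $\max\{p,\expectation_\prob[f]\}$ is independent of the sample, since that is precisely what prevents the relative normalization from contaminating the bounded-difference constant beyond the harmless factor $1/p$.
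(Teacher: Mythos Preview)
Your proof is correct and follows essentially the same approach as the paper: establish the per-function bound $|\phi_f(\Sam)-\phi_f(\Sam')|\le|b-a|/(np)$ using that the denominator is sample-independent and bounded below by $p$, then lift this to the supremum. The only cosmetic difference is that you invoke the inequality $|\sup_f a_f-\sup_f b_f|\le\sup_f|a_f-b_f|$ directly, whereas the paper proves this inline by naming the maximizers on $\Sam$ and $\Sam'$ and chasing the inequalities.
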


%

\begin{proof}[of~\Cref{lem:boundedsup}]
	Let $\Sam=\{X_1,\dotsc,X_n\}$ and, for any $i$, $1\le i\le n$, let
	\[
		\Sam'_i=\{X_1,\dotsc,X_{i-1},X'_i,X_{i+1},\dotsc,X_n\},
	\]
	i.e., we replaced the random variable $X_i$ with another random variable
	$X'_i$, sampled independently according to the same distribution.
	For any function $f\in\family$ let
	\[
		\phi_f(\Sam)=\frac{\expectation_\prob[f]-\tilde{f}(\Sam)}{\max\{p,\expectation_\prob[f]\}}\enspace.
	\]
	It is easy to see that
	\begin{equation}\label{eq:boundedphi}
		|\phi_f(\Sam) -\phi_f(\Sam'_i)|\le \frac{|b-a|}{np}
	\end{equation}

	We have
	\begin{align}\label{eq:equivsup}
		&|g(X_1,\dotsc,X_n)-g(X_1,\dotsc,X_{i-1},X'_i,X_{i+1},\dotsc,X_n)=\nonumber\\
		&|g(\Sam'_i)-g(\Sam)| = \nonumber\\
		&\left|\sup_{f\in F}\phi_f(\Sam'_i) - \sup_{f\in
		F}\phi_f(\Sam)\right|\enspace.
	\end{align}
	To simplify the notation, let now $\ell\in\family$ denote one of the
	functions for which the supremum is attained on $\Sam'_i$, and let
	$h\in\family$ be on of the functions for which the supremum is attained on
	$\Sam$. Then we can rewrite~\eqref{eq:equivsup} as
	\[
		|\phi_\ell(\Sam'_i) - \phi_h(\Sam)|\enspace.
	\]
	Assume w.l.o.g.~that
	\begin{equation}\label{eq:assumptionsup}
		\phi_h(\Sam)\le\phi_\ell(\Sam'_i),
	\end{equation}
	(the other case follows by symmetry). We have
	\begin{equation}\label{eq:factsup}
		\phi_\ell(\Sam) \le \phi_h(\Sam)
	\end{equation}
	because $h$ attains the supremum over all possible
	$f\in F$ on $X_1,\dotsc,X_n$. This and our
	assumption~\eqref{eq:assumptionsup} imply that it must be
	\[
		\phi_\ell(\Sam)\le
		\phi_\ell(\Sam'_i)\enspace.
	\]
	From this and~\eqref{eq:boundedphi} we have
	\[
		\phi_\ell(\Sam'_i)\le
		\phi_\ell(\Sam) + \frac{|b-a|}{np}\enspace.
	\]
	Then from this and from~\eqref{eq:factsup} we have
	\begin{align*}
		\phi_\ell(\Sam'_i) - \phi_h(\Sam)\le \left(\phi_\ell(\Sam)+\frac{|b-a|}{np}\right) -
		\phi_\ell(\Sam)\le \frac{|b-a|}{np}\enspace.
	\end{align*}
\end{proof}

Using the same steps as the above proof, we can prove the following result about
the conditional $p$-relative Rademacher average.

\begin{lemma}\label{lem:boundedrade}
	The function
	\[
		g(X_1,\dotsc,X_n)=\mathsf{r}\rade(\{X_1,\dotsc,X_n\},\family,p)
	\]
	satisfies the bounded difference inequality~\eqref{eq:bounded} with
	constants
	\[
		c_i=\frac{|b-a|}{np}\enspace.
	\]
\end{lemma}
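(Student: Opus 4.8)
The plan is to follow the same steps as the proof of \Cref{lem:boundedsup}, the only new ingredient being the expectation over the Rademacher variables. First I would fix a sample $\Sam=\{X_1,\dotsc,X_n\}$ and, for any $i$ with $1\le i\le n$, let $\Sam'_i$ be the sample obtained by replacing $X_i$ with an arbitrary $X'_i\in\domain$. For a fixed realization $\sigma=(\sigma_1,\dotsc,\sigma_n)$ of the Rademacher variables, define
\[
	h_\sigma(\Sam)=\sup_{f\in\family}\frac{1}{n\max\{p,\expectation_\prob[f]\}}\sum_{j=1}^n\sigma_jf(X_j),
\]
so that $g(\Sam)=\expectation_\sigma[h_\sigma(\Sam)]$. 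By the triangle inequality applied under the expectation,
\[
	|g(\Sam)-g(\Sam'_i)|\le\expectation_\sigma\left[|h_\sigma(\Sam)-h_\sigma(\Sam'_i)|\right],
\]
so it suffices to bound $|h_\sigma(\Sam)-h_\sigma(\Sam'_i)|$ by $|b-a|/(np)$ for every fixed $\sigma$, after which taking the expectation yields the constant $c_i=|b-a|/(np)$.

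The crucial observation is that the denominator $\max\{p,\expectation_\prob[f]\}$ depends only on $f$ and $\prob$, \emph{not} on the sample; hence swapping $X_i$ for $X'_i$ changes only the single summand indexed by $i$ in the numerator. For each fixed $f$, writing $\psi_{f,\sigma}$ for the quantity inside the supremum, this gives
\[
	|\psi_{f,\sigma}(\Sam)-\psi_{f,\sigma}(\Sam'_i)|=\frac{|\sigma_i|\,|f(X_i)-f(X'_i)|}{n\max\{p,\expectation_\prob[f]\}}\le\frac{|b-a|}{np},
\]
using $|\sigma_i|=1$, $\max\{p,\expectation_\prob[f]\}\ge p$, and $|f(X_i)-f(X'_i)|\le|b-a|$ since $f$ takes values in $[a,b]$.

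Finally I would convert this per-function bound into a bound on the suprema using exactly the argument from \Cref{lem:boundedsup}: letting $\ell\in\family$ attain (or approach) the supremum defining $h_\sigma(\Sam'_i)$ and $h\in\family$ the one defining $h_\sigma(\Sam)$, and assuming w.l.o.g.~$h_\sigma(\Sam)\le h_\sigma(\Sam'_i)$, the inequality $\psi_{\ell,\sigma}(\Sam)\le\psi_{h,\sigma}(\Sam)$ combined with the per-function bound forces $|h_\sigma(\Sam)-h_\sigma(\Sam'_i)|\le|b-a|/(np)$; the reverse case follows by symmetry. Integrating over $\sigma$ then establishes the bounded difference inequality with $c_i=|b-a|/(np)$.

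I do not expect any genuine obstacle here, since the structure mirrors \Cref{lem:boundedsup} verbatim. The only point that requires care—and the one worth stating explicitly—is the observation that the normalizing denominator is sample-independent, as this is precisely what localizes the perturbation to a single term and keeps the bounded-difference constant as clean as in the non-relative case.
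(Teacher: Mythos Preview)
Your proposal is correct and matches the paper's approach exactly: the paper simply states that the result follows ``using the same steps as the above proof'' of \Cref{lem:boundedsup}, and your write-up supplies precisely those steps, together with the natural extra observation that the expectation over $\sigma$ is handled by bounding $|h_\sigma(\Sam)-h_\sigma(\Sam'_i)|$ pointwise before integrating. Your explicit remark that the denominator $\max\{p,\expectation_\prob[f]\}$ is sample-independent is the right point to highlight.
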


The following result is the analogous of~\Cref{thm:radeboundw}
(\citep[Thm.~3]{RiondatoU15}) for the conditional $p$-relative Rademacher averages.

\begin{theorem}\label{thm:radeboundwr}
	Let $\functionw_\mathsf{r}: \mathbb{R}^+ \to \mathbb{R}^+$ be the function
  \begin{equation}\label{eq:functionwr}
	\functionw_\mathsf{r}(s) =
	\frac{1}{s}\ln\displaystyle\sum_{\mathbf{v}\in
	\mathcal{V}_\Sam}\mathrm{exp}(s^2\|\mathbf{v}\|^2/p(2\ell^2)),
  \end{equation}
  where $\|\cdot\|$ denotes the Euclidean norm. Then
  \begin{equation}\label{eq:functionrwbound}
	\mathsf{r}\rade(\family,\Sam,p)\leq\min_{s\in\mathbb{R}^+}\functionw_\mathsf{r}(s)\enspace.
\end{equation}
\end{theorem}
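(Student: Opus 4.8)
The plan is to recognize that the conditional $p$-relative Rademacher average is merely an ordinary conditional Rademacher average of a rescaled, finite-cardinality family, and then to replay the Massart finite-class / log-sum-exp argument that proves \Cref{thm:radeboundw}. Write $\ell=|\Sam|$ for the sample size (denoted $n$ in the definition of $\mathsf{r}\rade$). For each $f\in\family$ let $m_f=\max\{p,\expectation_\prob[f]\}$, a fixed positive constant once $f$ and $\prob$ are fixed, and set $g_f=f/m_f$. Then
\[
	\mathsf{r}\rade(\family,\Sam,p)=\expectation_\sigma\left[\sup_{f\in\family}\frac{1}{\ell}\sum_{i=1}^\ell\sigma_i g_f(X_i)\right],
\]
which is exactly the conditional Rademacher average of the family $\{g_f : f\in\family\}$ on $\Sam$. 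The associated value vectors are $\mathbf{u}_f=\mathbf{v}_f/m_f$, with $\mathbf{v}_f=(f(X_1),\dots,f(X_\ell))$, so that $\|\mathbf{u}_f\|^2=\|\mathbf{v}_f\|^2/m_f^2$, and the set of these vectors is finite (of size at most $|\family|$).

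First I would fix $t>0$, apply Jensen's inequality to pull the expectation inside the exponential, and bound the supremum over the finite vector set by the sum, exactly as in the proof of \Cref{thm:radeboundw}:
\[
	\exp\!\left(t\ell\,\mathsf{r}\rade(\family,\Sam,p)\right)\le\expectation_\sigma\!\left[\exp\!\left(t\sup_{f}\sum_{i=1}^\ell\sigma_i g_f(X_i)\right)\right]\le\sum_{f}\expectation_\sigma\!\left[\exp\!\left(t\sum_{i=1}^\ell\sigma_i g_f(X_i)\right)\right].
\]
Next, using independence of the Rademacher variables to factorize the moment generating function together with the elementary inequality $\cosh(x)\le e^{x^2/2}$, each summand is at most $\exp\!\left(t^2\|\mathbf{u}_f\|^2/2\right)=\exp\!\left(t^2\|\mathbf{v}_f\|^2/(2m_f^2)\right)$. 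Taking logarithms, dividing by $t$ and by $\ell$, and reparametrizing $t=s/\ell$ (so that $t^2/2=s^2/(2\ell^2)$ and the outer $1/t$ becomes $1/s$) gives, for every $s>0$,
\[
	\mathsf{r}\rade(\family,\Sam,p)\le\frac{1}{s}\ln\sum_{f}\exp\!\left(\frac{s^2\|\mathbf{v}_f\|^2}{2\,m_f^2\,\ell^2}\right).
\]

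The final step, and the one I expect to be the \emph{main obstacle}, is to convert the $f$-dependent normaliser $m_f$ into the uniform quantity that appears in \eqref{eq:functionwr}. Since $m_f=\max\{p,\expectation_\prob[f]\}\ge p$, each exponent can be monotonically enlarged by substituting the lower bound for $m_f$, which removes the dependence on $f$ from the denominator and, after minimizing over $s$, yields $\functionw_\mathsf{r}$ and hence \eqref{eq:functionrwbound}. The delicate bookkeeping is pinning down the exact power of $p$: the immediate substitution $m_f^2\ge p^2$ is what must be checked against the $1/p$ factor claimed in \eqref{eq:functionwr}, so I would verify whether the tighter factor is recovered by exploiting the range bound $f\le b$ (giving $\|\mathbf{v}_f\|^2\le b\sum_i f(X_i)$) in combination with $m_f\ge\expectation_\prob[f]$, or whether the denominator of \eqref{eq:functionwr} should in fact carry $p^2$. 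Every other ingredient transfers verbatim from the proof of \Cref{thm:radeboundw}, so the argument reduces to this single constant-tracking point.
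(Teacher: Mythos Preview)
Your approach is essentially the paper's own. The paper's proof sketch simply records the initial observation
\[
\mathsf{r}\rade(\family,\Sam,p)\le\frac{1}{p}\,\rade(\family,\Sam)
\]
(which is exactly your step $m_f\ge p$, applied up front rather than at the end) and then says to repeat the Massart/log-sum-exp argument of \Cref{thm:radeboundw} verbatim. Whether one bounds $m_f$ by $p$ before or after running the finite-class argument makes no difference, so the two derivations coincide.

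Your ``main obstacle'' is well spotted and is not resolved differently by the paper. Carrying out the paper's route, $\frac{1}{p}\min_s\functionw(s)$ equals, after the substitution $s\mapsto pt$, exactly $\min_t \tfrac{1}{t}\ln\sum_{\mathbf v}\exp\!\bigl(t^2\|\mathbf v\|^2/(2p^2\ell^2)\bigr)$, i.e.\ the bound with $p^2$ in the denominator of the exponent, not $p$. So the paper's proof sketch and your argument both produce $p^2$; the single $p$ in \eqref{eq:functionwr} appears to be a typo rather than a sharper constant you are failing to recover. The alternative you mention (combining $\|\mathbf v_f\|^2\le b\sum_i f(X_i)$ with $m_f\ge\expectation_\prob[f]$) does not salvage a clean $1/p$ either, since $\sum_i f(X_i)$ is the \emph{empirical} sum and need not be controlled by $\ell\,\expectation_\prob[f]$ uniformly.
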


The proof follows the same steps as the one for~\citep[Thm.~3]{RiondatoU15},
with the additional initial observation that
\[
	\mathsf{r}\rade(\Sam,\family,p)\le\frac{1}{p}\rade(\Sam,\family)\enspace.
\]

\fi

\end{document}